\documentclass[11pt,a4paper]{article}

\usepackage[english]{babel}
\usepackage{latexsym}
\usepackage[latin1]{inputenc}
\usepackage{amsfonts} 
\usepackage{amssymb}
\usepackage{amsmath}
\usepackage{amsthm}
\usepackage{enumerate}
\usepackage{exscale}
\usepackage{epsfig}
\usepackage{fullpage}
\usepackage{textcomp}
\usepackage{microtype}


\newtheorem{definition}{Definition}[section]
\newtheorem{lemma}{Lemma}[section]
\newtheorem{theorem}{Theorem}
\newtheorem{corollary}{Corollary}




\title{\bf On Alternation and the Union Theorem}
\author{Mathias Hauptmann\thanks{Dept. of Computer Science, University of Bonn.
    e-mail:{ \tt hauptman@cs.uni-bonn.de}}}
\date{}

\begin{document}
\maketitle
\begin{abstract}
Under the assumption P$=\Sigma_2^p$, we prove a new variant of the Union Theorem of McCreight and Meyer
for the class $\Sigma_2^p$. This yields a union function $F$ which is computable in time $F(n)^c$ for some constant $c$
and satisfies $\mbox{P}=\mbox{DTIME}(F)=\Sigma_2(F)=\Sigma_2^p$ with respect to a subfamily $(\tilde{S}_i)$ of $\Sigma_2$-machines.
We show that this subfamily does not change the complexity classes P and $\Sigma_2^p$. Moreover, a padding construction shows that
this also implies $\mbox{DTIME}(F^c)=\Sigma_2(F^c)$. 
On the other hand, we prove a variant of Gupta's result who showed that $\mbox{DTIME}(t)\subsetneq\Sigma_2(t)$ for time-constructible functions $t(n)$.
Our variant of this result holds with respect to the subfamily $(\tilde{S}_i)$ of $\Sigma_2$-machines. We show that these two results contradict each other.
Hence the assumption P$=\Sigma_2^p$ cannot hold.

\noindent {\bf Keywords:} Alternating Turing Machines, Deterministic versus Nondeterministic Time Complexity, Union Theorem.
\end{abstract}

\section{Introduction}
Alternating Turing Machines (ATM) have been introduced by Chandra and Stockmeyer \cite{CS76}. This model of computation generalizes both 
nondeterministic computations and co-non\-determinis\-tic computations. ATMs have been intensively used in the Complexity Theory (see e.g.
\cite{CS76,CKS81,PPR80,PR81,PR812}). 
In their seminal paper, Paul, Pippenger, Szemeredi and Trotter \cite{PPST83} 
gave a separation between deterministic linear time and nondeterministic linear time. They showed that $\mbox{DLIN}\neq\mbox{NLIN}$.
The proof is based on the result that deterministic
machines can be simulated faster on alternating machines with at most four alternations: 
\begin{theorem}\label{PPST_theorem}\cite{PPST83}\\
For every time-constructible function $t(n)$ with $t(n)\geq n\log^*(n)$,
$\mbox{DTIME}(t\log^*(t))\subseteq\Sigma_4(t))$.
\end{theorem}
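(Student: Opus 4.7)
The plan is to combine the block-respecting Turing machine construction with an alternating protocol whose structure mirrors a pebble game on the induced computation DAG. First, I would convert the given deterministic $t$-time machine $M$ into block-respecting form: choose a block parameter $B$ and modify $M$ so that its heads may cross tape-block boundaries only at times that are multiples of $B$, paying only a constant factor overhead. The resulting computation is described by a DAG $G$ on roughly $t/B$ nodes of constant in-degree, where each node represents $B$ consecutive steps and the contents of each block depend only on a bounded number of earlier blocks. Time-constructibility of $t$ is used here to let the simulator know when each super-step boundary occurs.

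Next, I would design a $\Sigma_4$ protocol that implements a two-level pebbling strategy on $G$. The outer $\exists$-player guesses a short summary (state, head positions, relevant block contents) at $t/B_1$ evenly spaced checkpoints; the outer $\forall$-player challenges one interval of length $B_1$ to verify; the inner $\exists$-player writes a finer-grained summary at $B_1/B_2$ sub-checkpoints inside that interval; and the inner $\forall$-player picks a sub-interval of length $B_2$, which is finally verified deterministically by running $M$ on the small relevant portion of tape (together with the guessed boundary data). With only two alternations this idea already yields a $\log$-factor speedup when $B_1$ is chosen to balance the guessing and verification phases.

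Pushing the speedup from $\log$ to $\log^*$ exploits the fact that the computation DAG, being of constant in-degree, admits an iterated pebbling in which each successive pebbling level saves a further logarithmic factor. With four alternations one encodes an iterated pebbling of depth $\log^*(t)$ by choosing the block sizes so that the iteration $t, \log t, \log\log t,\ldots$ collapses into the nested structure of the $\Sigma_4$ protocol; the $\exists$/$\forall$ pairs nondeterministically select the best pebbling step at each level, rather than having to spell it out in alternations. The main obstacle will be this parameter-tuning: one must ensure that the nested block sequence actually delivers the full $\log^*(t)$ speedup, that all four phases fit in $O(t)$ time, and that the hypothesis $t(n)\geq n\log^*(n)$ is used precisely to absorb the space and bookkeeping overhead of writing down the nested summary configurations. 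A careful induction on the number of pebbling levels, together with an explicit accounting of the guessing cost at each alternation, should then yield the stated inclusion $\mathrm{DTIME}(t\log^*(t))\subseteq\Sigma_4(t)$.
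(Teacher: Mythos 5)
This theorem is cited from \cite{PPST83} and is not reproved in the present paper, so there is no in-paper proof to compare your attempt against; your sketch must be judged against the original PPST argument. The ingredients you name are the right ones: block-respecting machines (from Hopcroft--Paul--Valiant), the bounded-in-degree computation DAG, a pebbling game, and a nested $\exists\forall\exists\forall$ protocol, with time-constructibility used to lay out block boundaries. At that level of description you have correctly identified the shape of the PPST83 proof.

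The gap is in the step you yourself flag as the obstacle: the passage from a two-level interval refinement to a $\log^*(t)$ speedup within a \emph{fixed} number of alternations. As written, your mechanism --- ``the $\exists/\forall$ pairs nondeterministically select the best pebbling step at each level'' so that the iteration $t,\log t,\log\log t,\ldots$ ``collapses'' into $\Sigma_4$ --- would naively cost one fresh $\exists\forall$ pair per pebbling level, i.e.\ $\Theta(\log^* t)$ alternations rather than four. The actual PPST argument avoids this by guessing the \emph{entire} hierarchy of breakpoint/separator data in a single existential phase (this is possible because the bounded in-degree graph admits small segment separators and the cumulative description length of the nested separators stays within budget), and then having the universal player descend a single branch of that hierarchy; the iterated structure lives inside one alternation, not across many. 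Your proposal would need to make that compression explicit --- in particular, to bound the total size of the guessed hierarchy and show it can be written down in $O(t)$ time --- in order to close the argument. Two minor inaccuracies: a naive two-level $\exists\forall$ checkpointing scheme with balanced block sizes gives a polynomial (roughly square-root) speedup factor in the block parameter, not specifically a logarithmic one, and Gupta's later result (Theorem~\ref{gupta_thm} here) shows that two alternations in fact already suffice for the $\log^*$ factor, so ``two alternations give $\log$, four give $\log^*$'' is not the right accounting.
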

Subsequent attempts did not succeed in generalizing the result $\mbox{DLIN}\neq\mbox{NLIN}$ to arbitrary polynomial time bounds.
No separation of $\mbox{DTIME} (n^k)$ from $\mbox{NTIME}(n^k)$ for any $k>1$ is known so far.
Kannan \cite{K81} gave a separation of nondeterministic time $n^k$ from deterministic time $n^k$ with $o(n^k)$ space. He showed that
there exists some constant $c$ such that for all $k$,
\[\mbox{NTIME}\left (n^k\right )\not\subseteq\: \mbox{DTIME-SPACE}\left (n^{k},n^{k\slash c}\right ).\]
Gupta \cite{G96} was able to reduce the number of alternations in Theorem \ref{PPST_theorem} and obtained the following result.
\begin{theorem}\cite{G96}\\
For every time-constructible function $t(n)$ with $t(n)\geq n\log^*n$, $\mbox{DTIME}(t\log^*(t))\:\subseteq\:\Sigma_2(t)$.
\end{theorem}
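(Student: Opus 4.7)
The plan is to follow the block-respecting simulation template used for Theorem~\ref{PPST_theorem} and collapse its alternation structure down to $\Sigma_2$.

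First I would reduce the given deterministic machine $M$ to a block-respecting multitape Turing machine $M'$ that accepts the same language and runs in time $T := t(n)\log^* t(n)$. I would partition this computation into $B$ time blocks of $b$ steps each with $Bb=T$, and partition each worktape into segments of length $b$. A standard block-respecting conversion guarantees that within each time block every head stays in a single tape segment, so the whole computation is described by the computation graph $G$ on vertex set $\{1,\dots,B\}$ in which $j$ has an edge from $i$ whenever block $j$ reads a tape segment last written by block $i$. The in-degree of $G$ is bounded by the fixed number of worktapes.

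Next I would design the $\Sigma_2$ machine $N$ as follows. In its existential phase $N$ guesses a certificate consisting of a compact description of $G$, the state and head positions of $M'$ at the start of every time block, and the content of every tape segment at the moment it is last modified. In the universal phase $N$ picks a single block index $j$, reads off the certificate the inherited segment contents and the initial configuration of block $j$, simulates $M'$ for $b$ steps, and checks that the resulting configuration matches the one the certificate claims. $N$ accepts iff the terminal block of the certificate is accepting. Soundness and completeness are then routine: a valid computation of $M'$ yields a certificate that passes every local check, and any certificate passing every check glues together into a valid computation because block-respecting machines are fully determined by the boundary data just described.

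The delicate point, which I expect to be the main obstacle, is making $N$ run in time $O(t(n))$. One has to squeeze $T = t\log^* t$ units of work into a certificate of length $O(t)$ plus a verification phase of length $O(t)$. I would choose $b$ so that the per-block simulation cost $O(b)$ is comfortably dominated by $t$; controlling the certificate size is harder and is where the $\log^* t$ slack is spent. Here I would invoke a pebbling/speedup argument on $G$ in the spirit of the one behind Theorem~\ref{PPST_theorem}: iterated $\log^* t$ times it lets one commit to only $O(T/\log^* t)=O(t)$ segment snapshots, the remaining segment contents being reconstructible from them. I would use this as a black box, verify that the universal branch can still recover the inherited segment contents of any chosen block $j$ from the compressed certificate in time $O(b)$, and combine with a standard $t$-clock obtained from the constructibility hypothesis. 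The conclusion $N\in\Sigma_2(t)$ would then follow.
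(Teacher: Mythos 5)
This statement is quoted from Gupta \cite{G96}; the paper you are annotating does not prove it but cites it, so there is no in-paper proof to compare against. Judged on its own terms, your proposal has a genuine gap at exactly the point you flag as ``delicate.'' The block-respecting decomposition and the $\exists$-certificate/$\forall$-block-check scheme are the right starting point, but they only give $\mbox{DTIME}(T)\subseteq\Sigma_2(T)$ when the certificate records the full boundary data (all block-entry configurations and all segment contents at last modification), because that data has total length $\Theta(T)=\Theta(t\log^*t)$ and must be written down by the existential phase. To get the running time down to $O(t)$ you must compress the certificate, and this is where your plan breaks: once only $O(T/\log^*T)$ segment snapshots are committed to, the inherited contents of an arbitrary block $j$ chosen in the universal phase are \emph{not} present in the certificate and cannot be ``read off''; they must be reconstructed by re-simulating the blocks that last wrote those segments, whose own inputs must in turn be reconstructed, and so on. Doing that reconstruction inside the universal branch either costs time up to $T$ again or forces a further existential guess (of the missing intermediate data) followed by a further universal check --- which is precisely why the Paul--Pippenger--Szemer\'edi--Trotter pebbling simulation lands in $\Sigma_4(t)$ rather than $\Sigma_2(t)$.

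In other words, the pebbling/speedup argument cannot be invoked ``as a black box'' while preserving two alternations: collapsing the alternation depth of that reconstruction from four quantifier blocks to two is the entire technical content of Gupta's theorem, and your proposal assumes it rather than proves it. A correct proof has to redesign the certificate and the verification so that whatever the universal player selects can be checked by a purely deterministic $O(t)$-time computation over the guessed string, with no further quantification; Gupta achieves this with a substantially more intricate encoding than the one you describe. As written, your argument establishes at best the trivial inclusion $\mbox{DTIME}(t\log^*t)\subseteq\Sigma_2(t\log^*t)$, or reproduces the $\Sigma_4$ bound of Theorem~\ref{PPST_theorem}.
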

Combining this result with the diagonalization power of $\Pi_2(t(n))$-machines over $\Sigma_2(t(n)\slash\sqrt{\log^*(n)})$-machines,
he obtained the following separation between deterministic and $\Sigma_2$-compu\-tations for time-constructible time-bounds.
\begin{theorem}\label{gupta-theorem}\label{gupta_thm}\cite{G96}\\
For every time-constructible function $t(n)$ with $t(n)\geq n\log^*n$, $\mbox{DTIME}(t)\neq\Sigma_2(t)$.
\end{theorem}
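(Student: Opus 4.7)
My plan is to argue by contradiction: assume $\mathrm{DTIME}(t)=\Sigma_2(t)$ and derive a collapse that violates the alternation-time hierarchy separation alluded to in the excerpt. Since deterministic classes are closed under complement, the assumption immediately yields
\[
\Pi_2(t)\;=\;\mathrm{co}\text{-}\Sigma_2(t)\;=\;\mathrm{co}\text{-}\mathrm{DTIME}(t)\;=\;\mathrm{DTIME}(t)\;=\;\Sigma_2(t),
\]
so the $\Sigma_2$- and $\Pi_2$-levels at time $t$ both collapse to $\mathrm{DTIME}(t)$.

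Next I would reapply Gupta's simulation theorem, but with a rescaled time bound. Setting $s(n)=t(n)/\log^*t(n)$ so that $s\log^*s\asymp t$ (with minor care to replace $s$ by a time-constructible approximation if needed), the simulation theorem gives $\mathrm{DTIME}(s\log^*s)\subseteq\Sigma_2(s)$, i.e.\ $\mathrm{DTIME}(t)\subseteq\Sigma_2(t/\log^*t)$. Combining this with the collapse obtained above, and using that $\log^*t(n)\geq\log^*n\geq\sqrt{\log^*n}$ for all sufficiently large $n$ (which holds since $t(n)\geq n\log^*n\geq n$), I arrive at
\[
\Pi_2(t(n))\;=\;\mathrm{DTIME}(t(n))\;\subseteq\;\Sigma_2\left(\frac{t(n)}{\log^*t(n)}\right)\;\subseteq\;\Sigma_2\left(\frac{t(n)}{\sqrt{\log^*n}}\right).
\]

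The final step is to contradict this chain by establishing the separation $\Sigma_2(t(n)/\sqrt{\log^*n})\subsetneq\Pi_2(t(n))$. One obtains this by a standard alternation-time diagonalization: a universal $\Pi_2$-machine, clocked via the time-constructor for $t$, enumerates all $\Sigma_2$-machines whose running time is bounded by $t(n)/\sqrt{\log^*n}$ and flips their acceptance on a diagonal input, with the $\sqrt{\log^*n}$ slack absorbing the overhead of the efficient universal $\Sigma_2$-simulator. I expect this diagonalization to be the main technical obstacle: the slack has to be tuned carefully so as to be small enough to be overcome by the $\log^*t$ gain from Gupta's simulation, yet large enough to accommodate an efficient universal simulation. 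Once the hierarchy separation is in hand, the contradiction with the above chain is immediate, so $\mathrm{DTIME}(t)\neq\Sigma_2(t)$.
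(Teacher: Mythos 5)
Your proposal follows essentially the same route the paper attributes to Gupta: collapse $\Pi_2(t)=\mathrm{DTIME}(t)=\Sigma_2(t)$ via closure of deterministic time under complement, apply the simulation $\mathrm{DTIME}(t\log^*t)\subseteq\Sigma_2(t)$ at a rescaled bound, and contradict a diagonalization separating $\Pi_2(t)$ from $\Sigma_2\left(t/\sqrt{\log^*n}\right)$. The only caveat, which the paper itself flags in Section~\ref{gupta_section}, is that the diagonalization requires the rescaled bound $t/\sqrt{\log^*t}$ (or a suitable substitute) to be time-constructible, a point you acknowledge but leave unresolved.
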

In Section \ref{gupta_section} we shall briefly discuss this result.
Santhanam \cite{S01} extended the techniques from \cite{PPST83} and showed that 
\begin{equation}\label{santhanam_ineq}
\mbox{DTIME}\left (n\sqrt{\log^*(n)}\right )\:\subsetneq\: \mbox{NTIME}\left (n\sqrt{\log^*(n)}\right ).
\end{equation}
Furthermore, he showed that at least one of the following two statements must hold: (1) $\mbox{DTIME}(t(n))\neq\mbox{NTIME}(t(n))$ for
all polynomially bounded constructible time bounds $t(n)$, (2) $P\neq \mbox{LOGSPACE}$. However, as Santhanam already stated in \cite{S01}, 
it is not known if the inclusion $\mbox{DTIME}(n)\subseteq\mbox{DTIME}(n\sqrt{\log^*(n)})$ is strict. So it is open if (\ref{santhanam_ineq}) is really a new 
inequality between complexity classes.

In 1969, McCreight and Meyer proved the \emph{Union Theorem} in \cite{McCM69}. This theorem states that for every Blum complexity measure (cf. \cite{B67})
and every 
sufficiently bounded enumerable family of recursive functions $(f_i)$, the union of the complexity classes given by the functions $f_i$ is equal to the 
complexity class of a single recursive function $f$. Applied to the deterministic time complexity measure and the polynomial functions $f_i(n)=n^i$, the 
Union Theorem states that there exists a recursive function $f(n)$ such that $P=\bigcup_i\mbox{DTIME}(n^i)=\mbox{DTIME}(f(n))$. In particular, this function $f(n)$ has the 
following property for every deterministic machine $M$: If $\mbox{time}_M(n)$ denotes the running time of machine $M$ on input length $n$, then there exists some 
polynomial function $f_i(n)$ such that $\mbox{time}_M(n)\leq f_i(n)$ for all $n$ if and only if $\mbox{time}_M(n)=O(f(n))$.\\[0.6ex]
\emph{{\bf Our Contribution.}} We assume that $P=\Sigma_{2}^p$. 
The general idea is now to construct a union function $F(n)$ such that $P=\mbox{DTIME}(F)=\Sigma_2(F)=\Sigma_2^p$ and to obtain a contradiction to 
Theorem \ref{gupta-theorem}. However, if we construct the function $F$ directly as in \cite{McCM69}, then the function $F(n)$ will not be time-constructible.
Taking a close look at the proof of the Union Theorem in \cite{McCM69} and taking our assumption into account, the second idea is to construct $F$ in such a way
that $F(n)$ is computable in time $F(n)^C$, for some constant $C$. Then we might want to use a padding construction to show that 
$\mbox{DTIME}(F)=\Sigma_2(F)$ also implies $\mbox{DTIME}(F^C)=\Sigma_2(F^C)$, and now the function $F^C$ is actually time-constructible.
However, now another problem occurs. For this implication to hold via padding, it is necessary that $F(n)^C$ can be bounded by a function value $F(q(n))$, for some
fixed polynomial $q(n)$. As we shall see below in Section \ref{outline_section}, in a straight forward union construction the function $F$ fails to have this
{\sl padding property}. Our solution will be to switch from a standard family of $\Sigma_2$-machines to a subfamily. This subfamily will contain for each 
$\Sigma_2$-machine $S_i$ and every integer $d$ a machine $\tilde{S}_{i,d}$. We construct this subfamily in such a way that the machines $\tilde{S}_{i,d}$
have the following property: Whenever the running time of the machine $\tilde{S}_{i,d}$ at input length $n$ exceeds a bound of the form $an^b$, then there
already exist sufficiently many smaller input lengths $m<n$ at which the running time of $\tilde{S}_{i,d}$ exceeds a similar but weaker bound of the form 
$(a-i\slash 2)m^{b-i\slash 2}$. The union function will then be constructed in such a way that precisely this property of the machines enforces $F$ to satisfy 
the padding inequality. Moreower, we have to show that switching from the family of all $\Sigma_2$-machines to this subfamily does not change the 
complexity classes. Namely, we show that for each $L\in \Sigma_2^p$ there also exists a polynomial time machine in this subfamily accepting $L$, and for 
every $L\in P$ there exists such a machine which is deterministic. Finally we show that the result from Theorem \ref{gupta-theorem} also holds with respect
to this restricted subfamily of machines.  

Let us now describe this in a slightly more detailed way.
We let $(S_i)$ be a standard enumeration of $\Sigma_2$-machines.
First we show that under the assumption $P=\Sigma_{2}^p$, the problem of deciding for a given tuple $i,x,a,b$ consisting of a machine index $i$, an input string $x$ of length $n$
and two integers
$a,b$ if $\mbox{time}_{S_i}(n)>an^b$ can be solved deterministically in time $c\cdot (i\cdot a\cdot n^b)^c$, where $c$ is a constant (not depending on $i,x,a,b$).
The proof consists of a standard padding construction.
Then we construct a new family $(\tilde{S}_{i,d})$ of $\Sigma_2$-machines, which contains for every machine $S_i$ and every integer $d$ a machine $\tilde{S}_{i,d}$ and
has the following properties:
\begin{itemize}
\item Whenever $S_i$ is a deterministic machine, then for all $d$, the machine $\tilde{S}_{i,d}$ is also deterministic.
\item The running time functions $\mbox{time}_{\tilde{S}_{i,d}}(n)$ of the machines $\tilde{S}_{i,d}$ satisfy the following kind of weak monotonicity condition: 
      For every pair $i,d$ and every input length $n$ with $n$ being sufficiently large, there exists an interval $I_{n,d}$ of integers of the form 
      $I_{n,d}=(n^{1\slash h},n^{1\slash h}\cdot (1+o(1)))$ such that whenever $\mbox{time}_{\tilde{S}_{i,d}}(n)>an^b$ with $a\leq b=O(\log (n))$, 
      then there exist $\Omega(\log\log n)$
      pairwise distinct integers 
      $m$ within the interval $I_{n,d}$ such that $\mbox{time}_{\tilde{S}_{i,d}}(m)>(a-\frac{i}{2})m^{b-\frac{i}{2}}$. 
      Here $h$ is an integer number which only depends on $c$
      (and not on $i,d,n$). We call this weak monotonicity condition \emph{Property $[\star]$}.
\end{itemize}
We let $\tilde{\Sigma}_2(t)$ and $\mbox{DTIM}\tilde{\mbox{E}}(t)$ denote the time complexity classes with respect to this new family $(\tilde{S}_{i,d})$. In particular,
$\tilde{\Sigma}_{2}^p=\bigcup_{p(n)}\tilde{\Sigma}_2(p(n))$ and $\tilde{P}=\bigcup_{p(n)}\mbox{DTIM}\tilde{\mbox{E}}(p(n))$, where the union goes over all polynomials $p(n)$.
We will show that $P=\tilde{P}=\tilde{\Sigma}_2^p=\Sigma_2^p$.\\[0.3ex]
Then we construct a union function $F$ for $\tilde{\Sigma}_2^p$ with respect to the family $(\tilde{S}_{i,d})$. This means that
\[\tilde{P}\: =\: \mbox{DTIM}\tilde{\mbox{E}}(F)\: =\: \tilde{\Sigma}_{2}(F)\: =\:\tilde{\Sigma}_2^p.\]
We show that the function $F(n)$ can be computed deterministically in time $F(n)^C$ for some constant $C<h$. Since 
we can compute the value $F(n)^{C^2}$ from $F(n)$ in time $\log(C^2)\cdot(\log (F(n)^{C^2}))^3$, this particularly yields that the function 
$t(n):=F(n)^{C^2}$ can be computed in time
$t(n)^{1-\epsilon}$ for some $\epsilon >0$. 
We show that the function $F(n)^{C^2}$ also satisfies the \emph{Property $[\star]$}. Moreover, we show that the function $F$ satisfies the following 
inequality:
\begin{equation}\label{zzz_i}
\begin{array}{l}
F(n^{1\slash h})^C\:\leq\: F(n)\:\:\mbox{for all $n$ such that $n^{1\slash h}$ is an integer,}\\[0.6ex]
\mbox{or equivalently:}\:\:F(n)^C\:\:\:\leq\: F(n^h)\:\:\mbox{for all $n$.}
\end{array}
\end{equation}
In the proof of this inequality we will extensively make use of the fact that the machines $\tilde{S}_{i,d}$ satisfy the weak monotonicity \emph{Property $[\star]$}, and also of how the 
union function is constructed.
This inequality will then enable us to apply \emph{Padding} in order to show that 
\begin{equation}\label{zzz_ii}
\mbox{DTIM}\tilde{\mbox{E}}(F)=\tilde{\Sigma}_{2}(F)\:\:\mbox{also implies}\:\: \mbox{DTIM}\tilde{\mbox{E}}(F^{C^2})=\tilde{\Sigma}_{2}(F^{C^2}).
\end{equation}
On the other hand, we will extend Gupta's result \cite{G96} to time classes $\tilde{\Sigma}_2(t)$. Namely, we show that for every function $t(n)\geq n\log^* n$ which is deterministically computable in time
$t(n)^{1-\epsilon}$ for some $\epsilon >0$ and satisfies \emph{Property $[\star ]$}, 
\begin{equation}\label{zzz_iii}
\mbox{DTIM}\tilde{\mbox{E}}(t)\:\subsetneq\:\tilde{\Sigma}_2(t).
\end{equation}
Now (\ref{zzz_ii}) and (\ref{zzz_iii}) contradict each other. Therefore, the assumption $P=\Sigma_2^p$ cannot hold.
%

Baker, Gill and Solovay \cite{BGS75} have shown that the $P$ versus $NP$ problem cannot be settled by relativizing proof techniques. Our results presented in this 
paper rely on the results from Paul et al. \cite{PPST83} and Gupta \cite{G96}. Gasarch \cite{G87} has shown that these results do not relativize
(cf. also Allender \cite{A90}, Hartmanis et al. \cite{H92}, Fortnow \cite{F94}). Natural proofs have been introduced by Razborov, Rudich \cite{RR94}. 
Since we do not prove any circuit lower bound, our results do not contradict the Natural Proof barrier from \cite{RR94}. Moreover, our results do not contradict 
the Algebrization barrier given by Aaronson and Wigderson \cite{AW09}. This is already stated in Section 10 of \cite{AW09}, where the DLIN versus NLIN result from \cite{PPST83}
is listed among examples of results to which their framework does not apply. 

Section \ref{outline_section} contains an extended outline of our constructions. It concludes with a roadmap of our 
constructions and results. Preliminaries are given in Section \ref{prelim_section}.
In Section \ref{implication_section}, we show that under the assumption $P=\Sigma_2^p$, we can test deterministically in time 
$c\cdot (i\cdot p(n))^c$ if the running time of machine $S_i$ on input length $n$ exceeds a given polynomial bound $p(n)$. The constant $c$ does not depend on $p(n)$ or the machine index $i$.
In Section \ref{prog_section}, we construct our new family $(\tilde{S}_{i,d})$ of $\Sigma_2$-machines.
In Section \ref{constr_section} we construct the union function $F$, and in Section \ref{gupta_section} we prove a variant of Gupta's Theorem \ref{gupta-theorem} 
for time classes $\mbox{DTIM}\tilde{\mbox{E}}(t)$ and $\tilde{\Sigma}_2(t)$. This will then give the desired contradiction,
and hence the assumption $P=\Sigma_2^p$ cannot hold.

\section{Motivation and Outline of our Construction}\label{outline_section}
We assume that $P=\Sigma_2^p$. As a direct application of the results and techniques from 
McCreight and Meyer \cite{McCM69}, we can construct a computable function $F$ such that
\begin{equation}\label{eq1}
P\: =\: \mbox{DTIME}(F)\: =\: \Sigma_2(F)\: =\: \Sigma_2^p.
\end{equation}
Here the first and the last equality hold since for every $\Sigma_2$-machine $M$, $\mbox{time}_M(n)$ is polynomially bounded iff $\mbox{time}_M(n)=O(F(n))$, so this 
holds especially for every deterministic machine. The second equality holds due to the assumption $P=\Sigma_2^p$.
Moreover, the function $F$ can be constructed in such a way that it can be computed deterministically in time $F(n)^C$ for some 
constant $C>1$. Thus the function $F^C$ is \emph{time-constructible}, i.e. $F(n)^C$ can be computed deterministically in time $F(n)^C$. 
The idea is now to show that (\ref{eq1}) also implies 
\begin{equation}\label{eq2}
\mbox{DTIME}(F^C)\: =\: \Sigma_2(F^C). 
\end{equation}
This would then directly contradict Theorem \ref{gupta-theorem}, hence the assumption $P=\Sigma_2^p$ cannot hold true.

A standard approach in order to show that (\ref{eq1}) implies (\ref{eq2}) is to make use of \emph{Padding}: 
For each decision problem $L\in \Sigma_2(F^C)$ we construct an associated problem $L'=\{x10^{k-|x|-1}\;|\;x\in L\}$, 
where $k$ is polynomially bounded in the input length 
$|x|$, say for simplicity $k=|x|^h$ for some constant $h$. $L'$ is called a polynomially padded version of $L$. 
Now suppose that we manage to choose $h$ in such a way 
that the following inequality holds:
\begin{equation}\label{eq3}
F(n)^C\:\leq\: F(n^h).
\end{equation} 
Then this directly implies that $L'\in\Sigma_2(F)=\Sigma_2^p$: 
On a given input $y$, we first check in linear time if $y$ is of the form $y=x10^{|x|^h-|x|-1}$
(otherwise we reject). Then we run the $\Sigma_2(F^C)$-algorithm for $L$ on input $x$ and return the result.
Due to inequality (\ref{eq3}), the running time of this $\Sigma_2$-algorithm for $L'$ on an input $y$ of length $m=n^h$ is bounded by 
$O(n^h+F(n)^C)=O(F(n^h))=O(F(m))$, and thus we have $L'\in\Sigma_2(F)$. 
By (\ref{eq1}) we obtain $L'\in\Sigma_2(F)=\mbox{DTIME}(F)=P$.
This implies $L\in P=\mbox{DTIME}(F)\subseteq\mbox{DTIME}(F^C)$. Thus we have shown that (\ref{eq1}) implies (\ref{eq2}). 

%
%

Unfortunately, the approach does not work in this way.
The following problem occurs. If we take a standard indexing $(S_i)$ of $\Sigma_2$-machines and construct a Union Function $F$ as in 
\cite{McCM69} such that for every $\Sigma_2$-machine $S_i$, $\mbox{time}_{S_i}(n)=n^{O(1)}$ iff $\mbox{time}_{S_i}(n)=O(F(n))$, 
then $F$ may not satisfy inequality (\ref{eq3}). In order to see why this is so, we have to take a closer look at the construction of
the union function in \cite{McCM69}. \\
$\mbox{ }$\\
\emph{McCreight and Meyer's Union Function.}
Let us briefly describe the construction of the union function. 
We only consider the case of $\Sigma_2$-machines and of polynomial time bounds, which means we describe the 
construction of the union function for the class $\Sigma_2^p$.
In \cite{McCM69}, the union function is constructed in stages. In stage $n$,
the function value $F(n)$ is determined. During the construction, a list ${\mathcal L}$ of guesses is maintained. A guess is here a pair $(S_i,b_i)$ consisting of a 
$\Sigma_2$-machine $S_i$ and a number $b_i$ which corresponds to the polynomial $b_i\cdot n^{b_i}$. 
Such a guess is \emph{satisfied at stage $n$} if $\mbox{time}_{S_i}(n)\leq b_i\cdot n^{b_i}$, otherwise
the guess is \emph{violated at stage $n$}. Let ${\mathcal L}_n$ denote the list of guesses at the beginning of stage $n$ of the construction. 
The construction starts with ${\mathcal L}_1=\{(S_1,1)\}$. For every $n$, the list ${\mathcal L}_n$ contains $n$ guesses, namely one for each of the first $n$ machines
$S_1,\ldots , S_n$. In stage $n$, if there are guesses in ${\mathcal L}_n$ which are violated at
stage $n$, the lexicographically first such guess, say $(S_i,b_i)$ is \emph{selected}, where lexicographically means that guesses are first ordered by 
increasing value $b_i$ and then by the machine index $i$. Then the function value is defined as $F(n):= n^{b_i}$. The guess $(S_i,b_i)$ is replaced by 
$(S_i,b_i+1)$. If none of the guesses in ${\mathcal L}_n$ is violated at stage $n$, then the function value is defined as $F(n)\colon =n^n$. Finally, at the end of stage 
$n$ a new guess $(S_{n+1},n+1)$ enters the list.\\[0.3ex]
Now one can show that this function $F(n)$ has the following two properties:
\begin{itemize}
\item[1.] For every machine $S_i$ whose running time is polynomially bounded, 
          we have $\mbox{time}_{S_i}(n)=O(F(n))$ which means that there exists a constant $a$ such that
          $\mbox{time}_{S_i}(n)\leq a\cdot F(n)$ for almost all $n$.
\item[2.] For every machine $S_i$ whose running time is not polynomially bounded, 
          we have that for every constant $a$, $\mbox{time}_{S_i}(n)>a\cdot F(n)$ for infinitely many $n$. 
\end{itemize}
Properties 1 and 2 yield that $\Sigma_2^p=\Sigma_2(F)$ and also $P=\mbox{DTIME}(F)$.\\[0.2ex] 
Now the reason why inequality (\ref{eq3}) may not hold becomes clear: It might happen that a guess $(S_i,b_i)$ is satisfied for a very long time, while the 
union function $F(n)$ is getting larger and larger. Then, eventually at some stage $n$, the running time of $S_i$ on input length $n$ exceeds $b_i\cdot n^{b_i}$,
and the guess $(S_i,b_i)$ is selected in stage $n$ of the construction of the union function. This will cause $F(n)$ to drop down to $n^{b_i}$, and thus 
it might be that  $F(n)\ll F(n^{1\slash 2}),F(n^{1\slash 3})$ and so on. Especially it might be that there does not exist any $h$ such that 
the inequality $F(n^{1\slash h})^C\leq F(n)$ holds for all $n$.\\[0.9ex]
%
Our first idea how to circumvent this obstacle is as follows. We modify the construction of the union function.
Instead of the family $(S_i)$ of $\Sigma_2$-machines, we want to work with a restricted subfamily $(\tilde{S}_i)$ 
such that each $\Sigma_2$-machine from this subfamily has
the following property: Whenever $\tilde{S}_i$ violates a guess $(\tilde{S}_i,a)$ at some stage (i.e. input length) $n$ such that $n$ is sufficiently large, then 
the guess $(\tilde{S}_i,a)$ is already violated at a sufficient number of input lengths $m$ within an interval of the form $(n^{1\slash h},n^{1\slash h}\cdot (1+o(1)))$.
Here $h$ is a \emph{global constant} of the construction, i.e. $h$ does not depend on the machine index $i$.

Let us describe why this property is useful for our purpose. We want to achieve that our union function $F(n)$ is computable in time $F(n)^C$ and satisfies
the inequality $F(n^{1\slash h})^C\leq F(n)$ for all $n$ for which $n^{1\slash h}$ is also integer (in that case we call $n$ an $h$-power). 
Suppose that for some $n$, this inequality does not hold. Say at stage $n^{1\slash h}$, the guess $(\tilde{S}_i,b_i)$ is violated and selected by $F$, and at
stage $n$ the guess $(\tilde{S}_j,b_j)$ is violated and selected by $F$. Thus we have
\begin{equation}\label{nb_ineq1}
F(n^{1\slash h})^C=n^{b_i\cdot C\slash h}\: >\: n^{b_j}=F(n).
\end{equation}
Now if we have $C<h$, then the inequality (\ref{nb_ineq1}) yields $b_i>b_j$. So suppose that we construct our union function in such a way that both new guesses 
entering the list ${\mathcal L}$ and guesses which result from a selection and replacement have the property that their $b$-value is always greater or equal to the largest $b$-value in the list ${\mathcal L}$ so far. Then 
this directly implies that the guess $(\tilde{S}_j,b_j)$ must have been already in the list of guesses ${\mathcal L}_{n^{1\slash h}}$ in stage $n^{1\slash h}$. 
Now if this guess is violated in stage $n$
and if we can assure that the number of stages $m$ within the interval $(n^{1\slash h},n^{1\slash h}(1+o(1)))$ in which the guess is also violated
is larger than the number of guesses in the list ${\mathcal L}$
within this interval, then this yields a contradiction: Since the machine $\tilde{S}_j$ satisfies the above property
and since the guess $(\tilde{S}_j,b_j)$ is violated at input length $n$, it is already violated at a sufficient number of input lengths $m$ with $n^{1\slash h}<m<n$.
If the number of these violations is larger than the number of guesses in the list ${\mathcal L}$, then eventually the guess $(\tilde{S}_j,b_j)$ will be violated and have the highest
priority of being selected in the construction of $F$.
Therefore it would have been already selected in a stage within that interval, and then replaced by a guess $(\tilde{S}_j,b_j')$ with $b_j'>b_j$. Thus the guess $(\tilde{S}_j,b_j)$
cannot be 
contained in the list ${\mathcal L}_n$ at stage $n$ anymore, a contradiction.

Thus we obtain the following approach. We want to construct a subfamily $(\tilde{S}_i)$ of $\Sigma_2$-machines with the following properties:
\begin{itemize}
\item Each machine $\tilde{S}_i$ has the property which we described above: Whenever a guess $(\tilde{S}_i,b_i)$ is violated at stage $n$ and $n$ is sufficiently large, then
      there exist sufficiently many input lengths $m$ within the interval $(n^{1\slash h},n^{1\slash h}\cdot (1+o(1)))$ such that the guess $(\tilde{S}_i,b_i)$
      is violated at stage $m$.  
\item This subfamily still defines the same classes $P$ and $\Sigma_2^p$. Namely, for each $L\in \Sigma_2^p$, there exists a polynomially time bounded machine
      $\tilde{S}_i$ in the subfamily such that $L=L(\tilde{S}_i)$, and for each $L\in P$ there exists a polynomially time bounded deterministic machine 
      $\tilde{S}_j$ in the subfamily such that $L=L(\tilde{S}_j)$.
\end{itemize}
Then we want to construct a union function $F$ for $P=\Sigma_2^p$ with respect to this subfamily. This means that
for each machine $\tilde{S}_i$, the running time of $\tilde{S}_i$ is polynomially bounded if and only if the running time is in $O(F(n))$.
As in the original construction of McCreight and Meyer, during the construction of the function $F$ we maintain a list ${\mathcal L}$ of guesses $(\tilde{S}_i,b_i)$. 
$F$ will be constructed in stages. In stage $n$, the function value $F(n)$ is determined.
As before, we let ${\mathcal L}_n$ denote this list of guesses at the beginning of stage $n$ of the construction.
In each stage $n$, we select from this list 
a lexicographically smallest violated guess $(\tilde{S}_i,b_i)$ (first ordered by $b_i$ and then by the index $i$) and define $F(n)=n^{b_i}$. Then this guess 
$(\tilde{S}_i,b_i)$ is replaced by $(\tilde{S}_i,b_n^*)$, where $b_n^*$ is the maximum of all values $b_j$ of guesses $(\tilde{S}_j,b_j)$ in the list ${\mathcal L}_n$.
So when a guess is selected and replaced, its new $b$-value is at least as large as all the other $b$-values of guesses in the list ${\mathcal L}_n$. 
Finally, we want to keep the list ${\mathcal L}$ sufficiently small such that the above argument works. Namely if a guess $(\tilde{S}_i,b_i)$ is contained in the list 
${\mathcal L}_n$ and in the list ${\mathcal L}_{n^{1\slash h}}$ and is violated at stage $n$, then the size of the list ${\mathcal L}_m$ in stages 
$m\in (n^{1\slash h},n^{1\slash h}\cdot (1+o(1)))$ must be smaller than the number of stages $m$ within this interval at which the guess
$(\tilde{S}_i,b_i)$ is also violated. As we described above, this will imply that the inequality $F(n^{1\slash h})^C\leq F(n)$ holds.
In our construction, the list ${\mathcal L}_n$ will be of size $\log^*n$.

We let $\mbox{DTIM}\tilde{\mbox{E}}(t)$ and $\tilde{\Sigma}_2(t)$ denote the time complexity classes with respect to the family $(\tilde{S}_i)$ of machines.
The union function $F$ has the property that $\mbox{DTIM}\tilde{\mbox{E}}(F)=\tilde{\Sigma}_2(F)$.
Then we use a padding construction to show that this also implies $\mbox{DTIM}\tilde{\mbox{E}}(F^C)=\tilde{\Sigma}_2(F^C)$.
Since $F(n)$ is computable in time $F(n)^C$, the function $F(n)^C$ is time-constructible.
The padding construction works since $F(n)$ satisfies the inequality $F(n^{1\slash h})^C\leq F(n)$. 

Finally we want to achieve that the following variant of Gupta's separation result holds for the subfamily $(\tilde{S}_i)$: 
For functions
$t(n)\geq n\log^*n$ which are computable in time $O(t(n))$ by some machine $\tilde{S}_i$, the deterministic class $\mbox{DTIM}\tilde{\mbox{E}}(t)$ is 
strictly contained in $\tilde{\Sigma}_2(t)$.\\[1.3ex]
\emph{First Attempt.} 
We describe now our first attempt how to construct the subfamily $(\tilde{S}_i)$ of $\Sigma_2$-machines and the union function $F$. 
We start by giving a preliminary definition of the property the machines $\tilde{S}_i$ are supposed to have. Since this property will play a central role in our construction, we 
give a name to it and call it Property $[\star]$.\\[0.5ex]
\emph{Property $[\star ]$ (first definition)}\\
        We say a function $g(n)$ satisfies Property $[\star]$ with parameter $c_g$ if for every $n\geq c_g$ and every pair of integers $a,b$ with 
        $c_g\leq a\leq b\leq \log (n)\slash c_g$, the following holds: If $g(n)>an^b$, then there exist at least $\frac{\log\log (n)}{c_g}$ 
        distinct integers $m_i$ within the interval $(n^{1\slash h},n^{1\slash h}\cdot (1+\log (n)\slash n^{1\slash h}))$ such that 
        $g(m_i)\geq am_i^b, i=1,\ldots , \frac{\log\log (n)}{c_g}$.\\[2ex]
Now we start from the standard indexing $(S_i)$ of $\Sigma_2$-machines. We construct for each machine index $i$ a new $\Sigma_2$-machine $\tilde{S}_i$ 
such that the function $\mbox{time}_{\tilde{S}_i}(n)$ satisfies the  
Property $[\star ]$ with some parameter $c_i$. Moreover, if the running time function $\mbox{time}_{S_i}(n)$ of machine $S_i$ already satisfies
Property $[\star ]$, then $L(\tilde{S}_i)=L(S_i)$ and the running time 
of $\tilde{S}_i$ equals the running time of $S_i$. In this way, we obtain the subfamily $(\tilde{S}_i)$. 
Now we want proceed as follows: 
\begin{itemize}
\item We show that $P=\tilde{P}$ and $\tilde{\Sigma}_2^p=\Sigma_2^p$. Since we assume $P=\Sigma_2^p$, this will especially yield $\tilde{P}=\tilde{\Sigma}_2^p$.
\item We construct a union function $F$ for $\tilde{\Sigma}_2^p$. This means that for every index $i$,  
      $\mbox{time}_{\tilde{S}_i}(n)$ 
      is polynomially bounded if and only if $\mbox{time}_{\tilde{S}_i}(n)=O(F(n))$. 
      Hence we have $\tilde{\Sigma}_2(F)=\tilde{\Sigma}_2^p$ and $\mbox{DTIM}\tilde{\mbox{E}}(F)=\tilde{P}$.
\item We show that $F(n)$ can be computed deterministically in time $F(n)^C$ for some constant $C<h$. Note that $h$ is the constant from the definition of 
      \emph{Property $[\star ]$}. 
\item We show that the union function $F$ satisfies the inequality $F(n^{1\slash h})^C\leq F(n)$ for all $h$-powers $n$. 
      This will allow us to make use of a padding construction in order to show that
      $\mbox{DTIM}\tilde{\mbox{E}}(F^{C^2})=\tilde{\Sigma}_2(F^{C^2})$. Furthermore we show that the function
      $F^{C^2}(n)$ also satisfies Property $[\star ]$.
\item We extend the result of Gupta \cite{G96} and show that for each function $t(n)\geq n\cdot\log^*(n)$ which can be computed in time $t(n)^{1-\epsilon}$ for some
      $\epsilon >0$ and satisfies \emph{Propery $[\star ]$},
      $\mbox{DTIM}\tilde{\mbox{E}}(t)\subsetneq\tilde{\Sigma}_2(t)$.
\end{itemize} 
The last two items contradict each other: Since $F(n)$ is deterministically computable in time $F(n)^C$ and satisfies $F(n)\geq n\log^*(n)$, we obtain that the function 
$t(n):=F(n)^{C^2}$ can be computed deterministically in time $F(n)^C=t(n)^{1-\epsilon}$ for $\epsilon=1-C^{-1} >0$. 
Thus we conclude that the assumption $P=\Sigma_2^p$ cannot hold. \\[1ex]
Now we will give a first outline of the construction of the union function $F$. Afterwards we describe why this first attempt does not yet work and
we have to modify the construction and also the 
definition of
Property $[\star ]$.\\[1ex]
\emph{Construction of the Union Function $F$ (preliminary version).}\\
We construct our new union function similarly to the one in \cite{McCM69}. Recall that we want to keep the size of the list of guesses at stage $n$ of order
$\log^*n$. 
This means that whenever the $\log^*$ function increases by one, we will add a new guess to the list. 

More precisely, we make use of the following version of the $\log^*$-function:
\[\log^*(n)\: :=\: \min\{t\:|\:2^{2^{\ldots 2}}|t\:\geq n\}\]
Here by $2^{2^{\ldots 2}}|t$ we denote a tower $T_2(t)$ of height $t$, i.e. $T_2(1)=2,T_2(2)=2^2,T_2(3)=2^{2^2}$ and so on.
We consider the intervals $I_t$ on which this function is equal to $t$. Namely,
\[I_t=[\beta_{t-1}+1,\beta_t]\:\:\mbox{with $\beta_0=0,\beta_1=2,\beta_{t+1}=2^{\beta_t},t\geq 1$}.\]
$F$ is constructed in stages. In stage $n$, the function value $F(n)$ is defined. We maintain a list of guesses ${\mathcal L}$, where a guess is now a pair 
$(\tilde{S}_i,b_i)$ consisting of a $\Sigma_2$-machine $\tilde{S}_i$ and an integer number $b_i$. As before, a guess $(\tilde{S}_i,b_i)$ is called 
\emph{safisfied at stage $n$} if $\mbox{time}_{\tilde{S}_i}(n)\leq b_i\cdot n^{b_i}$. Otherwise, the guess is called \emph{violated at stage $n$}.
We denote by ${\mathcal L}_n$ the list of guesses at the beginning of stage $n$ of the construction. Within our construction, we maintain the following invariants:
\begin{itemize}
\item For all $t\in {\mathbb N}$ and for every stage $n\in I_t$, the number of guesses in the list ${\mathcal L}_n$ is equal to $t=\log^*n$.
      Furthermore, the maximum $b_i$ value that occurs in the list ${\mathcal L}_n$ is also equal to $t$:
      \[\max\{b_i|\: (\tilde{S}_i,b_i)\in {\mathcal L}_n\}=\log^*n=t\]
\item When a guess $(\tilde{S}_i,b_i)$ is violated and selected in stage $n$ of the construction, this implies that $F(n)=n^{b_i}$. Furthermore, this guess is 
      then replaced by $(S_i,\log^*n)$ in the list of guesses ${\mathcal L}$.
\item The maximum value which is attained by $F$ within stages $n$ in the interval $I_t$ is equal to $n^t$:
      \[\mbox{For all $n\in I_t$, $F(n)\leq n^t=n^{\log^*(n)}$.}\]
      Furthermore, in almost all stages $n\in I_t$ we will actually have $F(n)=n^t$, namely 
      in at least $|I_t|-t$ stages $n$ in $I_t$. 
\end{itemize}
Now we can already see why we have to modify the definition of the Property $[\star ]$. Recall that we want to achieve that the function $F^{C^2}(n)$ also satisfies 
Property $[\star ]$. But if $F(n)\leq n^{\log^*(n)}$, then there might be stages $n\in I_t$ such that $F(n)=n^t$ and such that the interval
$(n^{1\slash h},n^{1\slash h}\cdot (1+o(1)))$ is contained in the previous interval $I_{t-1}$. 
Thus, there won't be any stages $m$ within the interval $(n^{1\slash h},n^{1\slash h}\cdot (1+o(1)))$
such that $F(m)=m^t$, and the preliminary version of Property $[\star ]$ which we have described above cannot hold.

A first idea is to proceed as follows: We modify the Property $[\star]$ such that in case when $\mbox{time}_{\tilde{S}_i}(n)>a\cdot n^b$,
we require that there exist sufficiently many stages $m$ within the interval $(n^{1\slash h},(1+o(1))\cdot n^{1\slash h})$ such that
$\mbox{time}_{\tilde{S}_i}(m)> (a-1)\cdot m^{b-1}$. However, this does still not work. The reason is that
we want to use padding in order to show that  $\mbox{DTIM}\tilde{\mbox{E}}(F)=\tilde{\Sigma}_2(F)$ implies 
$\mbox{DTIM}\tilde{\mbox{E}}(F^{C^2})=\tilde{\Sigma}_2(F^{C^2})$. In that situation, we start from some $L\in \tilde{\Sigma}_2(F^{C^2})$ 
and construct an associated padded version $L'$.
Then we have to show that $L'\in\tilde{\Sigma}_2(F)$. As we have already described above, we can construct a $\Sigma_2$-machine $S'$ for $L'$ such that
$\mbox{time}_{S'}(n)=O(F(n))$. But now we also have to assure that the function $\mbox{time}_{S'}(n)$ satisfies Property $[\star]$. For this purpose,
we want to make use of the fact that $L=L(S)$ for some $\Sigma_2$-machine $S$ for which $\mbox{time}_S(n)$ satisfies Property $[\star ]$.
Essentially this means that the Property $[\star]$ needs to be preserved under polynomial padding. The modification above does not preserve Property $[\star ]$
under polynomial padding. Namely, if $t(n)$ is a function which satisfies the above version of the Property $[\star]$, then we cannot conclude that powers $t(n)^{\gamma}$
of the function $t(n)$ also satisfy Property $[\star ]$. This can be seen as follows. If $t(n)^{\gamma}>an^b$, this means that $t(n)>a^{1\slash\gamma}n^{b\slash\gamma}$.
Since the function $t(n)$ satisfies Property $[\star ]$, this implies that there exist sufficiently many smaller integers $m$ such that $t(m)>(a^{1\slash\gamma}-1)m^{b\slash\gamma -1}$.
But this only implies that $t(m)^{\gamma}>(a^{1\slash\gamma}-1)^{\gamma}m^{b-\gamma}$, but not necessarily $t(m)^{\gamma}>(a-1)m^{b-1}$. 

It turns out that the following version of the Property $[\star]$ works. This definition depends now on three parameters $c_g,p_g,d_g$.
The first parameter $c_g$ basically determines from which $n$ on the condition holds. The second parameter $p_g$ says that if $g(n)>an^b$, then
there are sufficiently many smaller integers $m$ with $g(m)>(a-p_g)n^{b-p_g}$. The third parameter $d_g$ determines the size of the interval.\\[0.3ex]
\emph{Definition of the Property $[\star ]$}\\
We say that a function $g(n)$ satisfies \emph{Property $[\star ]$} with parameters $c_g,p_g,d_g$ 
if for all $n\geq 2^{(c_g)^2}$ and all $c_g\leq a\leq b\leq\frac{\log (n)}{c_g}$, if 
$g(n)>an^b$,
then there exist pairwise distinct integers
\[m_1,\ldots, m_{\frac{\log\log (n)}{c_g}}\in I_{n,d_g}
=\left (n^{1\slash h},\: n^{1\slash h}\cdot \left (1+\frac{\log (n)}{n^{1\slash (h\cdot d_g)}}\right )^{d_g}\right )\]  
such that for $i=1,\ldots , \frac{\log\log (n)}{c_{g}}$, $g(m_i)>(a-p_g)\cdot n^{b-p_g}$ and $m_i$ is not an $h$-power, i.e. not of the form $m_i=(m_i')^h$
for any integer $m'_i$.\\[1.5ex]
Now the machines in our subfamily will depend on two parameters: the machine index $i$ of the original machine $S_i$ and the parameter $d$.
For each such pair $i,d$, we construct a new machine $\tilde{S}_{i,d}$ such that the function $\mbox{time}_{\tilde{S}_{i,d}}(n)$
will satisfy the Property $[\star]$ with parameters $c_{i,d}=2^{dh}\cdot (dh)^3\cdot 2^c\cdot i^c$, $p_i=\lceil i\slash 2\rceil$ and $d$.
The particular choice of the parameter $c_{i,d}$ will become
clear below in Section \ref{implication_section} in the proof of Lemma \ref{sig1lemma}. These machines will then be arranged in a linear list, denoted
as $\tilde{S}_{(1)},\tilde{S}_{(2)},\ldots $, and in the construction of the union function, machines will be added to the list of guesses in this order.
 \\[0.3ex]
Furthermore, we also have to modify the construction of the union function $F$. Recall what we want to achieve.
\begin{itemize}
\item $F$ is supposed to satisfy Property $[\star ]$.\\
      As we have already pointed out, this is the reason why the implication in the definition of Property $[\star ]$ is of the form
      $g(n)>an^b\:\Longrightarrow\:\exists\ldots g(m_l)>(a-p_g)m_l^{b-p_g}$ with the additional parameter $p_g$.
\item $F$ is supposed to satisfy the Padding Inequality $F(n)^C\leq F(n^h)$ for all $n$.\\
      As we have already described, the intended way of assuring that $F$ satisfies the Padding Inequality is as follows: Suppose that
      $F(n)^C>F(n^h)$, and let the guess $(\tilde{S}_{(i)},b_i)$ be selected in stage $n$ and the guess $(\tilde{S}_{(j)},b_j)$ in stage $n^h$.
      Then this implies $b_j<b_i$, and therefore the guess $(\tilde{S}_{(j)},b_j)$ is already contained in the list of guesses at stage $n$ of the 
      construction. Now we make use of the fact that the running time function of machine $\tilde{S}_{(j)}$ satisfies Property $[\star ]$.
      This yields existence of sufficiently many integers $m_l\in I_{n^h,d_j}$ where the machine $\tilde{S}_{(j)}$ exceeds the time bound 
      $(b_j-p_j)m_l^{b_j-p_j}$, i.e. the guess $(\tilde{S}_{(j)},b_j-p_j)$ is violated. 
      We want to argue that in this case, in at least one of the stages $m_l$ the guess $(\tilde{S}_{(j)},b_j)$ is already selected and replaced, such that
      it cannot be contained in the list of guesses at stage $n^h$ of the construction, a contradiction. 
\item Since a violation of the guess $(\tilde{S}_{(j)},b_j)$ at stage $n^h$ only guarantees violations of the weaker guess $(\tilde{S}_{(j)},b_j-p_j)$
      at stages $m_l\in I_{n^h,d_j}$, guesses of this form have also to be taken into account in the construction of the union function.
      Therefore we distinguish now between stages $n$ such that $n$ is an $h$-power and other stages. When $n$ is an $h$-power, 
      we just diagonalize against violated guesses of the form $(\tilde{S}_{i,d},b)$, as described before.
      When $n$ is not an $h$-power, we also take into account guesses of the form
      $(\tilde{S}_{i,d},b-p_i)$, where $(\tilde{S}_{i,d},b)$ is a guess in the original list.
      We call this the {\sl extended list of guesses} and denote it as ${\mathcal L}^e$.
\item This has another consequence for the precise definition of Property $[\star ]$. Namely, since 
      in the construction of the union function $F$, 
      guesses of the form $(\tilde{S}_{i,d},b-p_i)$ are considered only in stages $m$ such that $m$ is not an $h$-power, we require
      in the definition of Property $[\star ]$
      that if $g(n)>an^b$, then there exist sufficiently many integers $m_l$ within the interval $I_{n,d_g}$ which are {\sl not $h$-powers} such that
      $g(m_l)>(a-p_g)m_l^{b-p_g}$. 
\item This in turn has another consequence for the construction of the union function. 
      Recall that we want to use the padding inequality to show that $P=\mbox{DTIM}\tilde{\mbox{E}}(F)=\tilde{\Sigma}_2(F)=\Sigma_2^p$ also implies
      $\mbox{DTIM}\tilde{\mbox{E}}(F^{C^2})=\tilde{\Sigma}_2(F^{C^2})$. Given some problem $L\in\tilde{\Sigma}_2(F^{C^2})$, we want to construct an associated padded 
      version $L'$ of $L$, where $L'=\{x10^k|x\in L,k=|x|^{h^2}-|x|-1\}$. Then the decision problem $L'$ can be solved by a $\Sigma_2$-machine in time
      $F(|x|)^{C^2}$, and the padding inequality yields that $F(|x|)^{C^2}\leq F(|x|^h)^C\leq F(|x|^{h^2})$. Thus we obtain that $L'\in\Sigma_2(F)$. Now we also 
      want to conclude that $L'\in\tilde{\Sigma}_2(F)$, which would then yield $L'\in P$ and therefore 
      $L\in P=\mbox{DTIM}\tilde{\mbox{E}}(F)\subseteq \mbox{DTIM}\tilde{\mbox{E}}(F^{C^2})$. 

      But now the following problem occurs. The padded version $L'$ of $L$ has the property that every element $y\in L'$ has an $h$-power length. On the other hand, Property $[\star ]$
      requires existence of integers $m_l$ which are not $h$-powers such that the running time on input length $m_l$ is sufficiently large. This means that we cannot guarantee that 
      $L'\in \tilde{\Sigma}_2(F)$.
\end{itemize}

We solve this last problem as follows. First we modify the padding construction. When we start from some problem $L\in\tilde{\Sigma}_2(F^{C^2})$, then 
the strings $y=x10^k$ in the padded version $L'$ of $L$ are constructed in such a way that their string length is not an $h$-power but an $h$-power minus one.
Namely, 
\[L'\:\: =\:\: \{x10^k\mid x\in L,\: k=|x|^{h^2}-1\}.\]
Now we obtain that $L'$ can be solved by a $\Sigma_2$-machine in time $F(|x|)^{C^2}\leq F(|x|^h)^C\leq F(|x|^{h^2})$, i.e. $L'\in\Sigma_2(F(n+1))$, which means that 
$L'\in\Sigma_2(G(n))$ for the function $G(n):= F(n+1)$. As we shall prove below, we also have $L'\in\tilde{\Sigma}_2(F(n+1))$, and here it is important that the string lengths
of elements from $L'$ are not $h$-powers.  
But we would still like to conclude that $L'\in P$. Therefore, we
construct the union function $F$ such that it does not only satisfy $\tilde{\Sigma}_2^p=\tilde{\Sigma}_2(F)$, but also 
$\tilde{\Sigma}_2^p=\tilde{\Sigma}_2(F(n+1))$. Here $\tilde{\Sigma}_2(F(n+1))$ is the class of all problems that can be solved by some 
machine $\tilde{S}_{i,d}$ whose running time at input length $n$ is bounded by a constant times $F(n+1)$.

So intuitively we do not only require that the running time of a machine at input length $n$ is polynomially bounded iff it is bounded by $O(F(n))$, but also 
that it is polynomially bounded iff it is bounded by $O(F(n+1))$. While this distinction is usually unnecessary for {\sl ''harmless'' functions} like 
polynomials, it may in general make a huge difference for functions which are constructed in a diagonalization process.

Now achieving $\tilde{\Sigma}_2^p=\tilde{\Sigma}_2(F(n))=\tilde{\Sigma}_2(F(n+1))$ is not difficult at all. We just 
maintain for each machine $\tilde{S}_{i,d}$ two guesses in the list: One which is tested for violations at input length $n$, and one which is tested at input length
$n-1$. Thus the list of guesses ${\mathcal L}$ consists of two sublists ${\mathcal L}_1$ and ${\mathcal L}_2$. At stage $n$, both sublists contain guesses for 
the first $\log^*(n)$ machines. At stage $n$ when the function value $F(n)$ is determined, the guesses in the first list
${\mathcal L}_1$ are tested for violations at inputs of length $n$, and guesses in the second list ${\mathcal L}_2$ are tested for violations at input length $n-1$.
%
Thus we obtain the following approach.\\[2ex]
\emph{Construction of the Union Function $F$.} 
The union function $F$ for $\Sigma_2^p=\tilde{\Sigma}_2^p$ with respect to the indexing $(\tilde{S}_{i,d})$ is now constructed as follows.
We arrange these machines in a linear order, denoted as $\tilde{S}_{(1)},\tilde{S}_{(2)},\ldots$ such that each $\tilde{S}_{i,d}$ occurs in this list and such that from
the number $j$ we can efficiently compute the parameters $i,d$ with $\tilde{S}_{i,d}=\tilde{S}_{(j)}$. The $\Sigma_2$-machines $\tilde{S}_{i,d}$ will be added to 
the list of guesses in this order.
As before, ${\mathcal L}_n={\mathcal L}_{n,1}\cup {\mathcal L}_{n,2}$ denotes the list of guesses at the beginning of stage $n$. We construct $F$ as follow:
\begin{itemize}
\item For each stage $n\in I_t$ and $j\in\{1,2\}$, the list ${\mathcal L}_{n,j}$ 
      will contain guesses for the first $t=\log^*n$ machines $\tilde{S}_{(1)},\ldots , \tilde{S}_{(t)}$.
      Additionally we maintain lists ${\mathcal L}'_{n,j},j=1,2$ which contain guesses of the form $(\tilde{S}_{(i)},b_i-p_{(i)})$. These lists
      are generated at the beginning of each interval $I_t$. Namely, if $n$ is the first integer in the interval $I_t$, then 
      ${\mathcal L}'_{n,j}$ consists of all the guesses $(\tilde{S}_{(i)},b_i-p_{(i)})$ such that  $(\tilde{S}_{(i)},b_i)$ is contained in ${\mathcal L}_{n,j}$.
\item In stages $n$ such that $n$ is an $h$-power (i.e. $n^{1\slash h}$ is an integer number), we proceed as before and select the smallest violated guess
      $(\tilde{S}_{(i)},b_i)$ from the list ${\mathcal L}_n={\mathcal L}_{n,1}\cup {\mathcal L}_{n,2}$.
      If this guess is selected from the first list ${\mathcal L}_{n,1}$, we 
      set $F(n)=n^{b_i}$ and the guess is replaced by $(\tilde{S}_{(i)},t)$ in the list ${\mathcal L}_{n,1}$.
      If the guess is selected from the second list ${\mathcal L}_{n,2}$, then 
      we set $F(n)=(n-1)^{b_i}$ and the guess is replaced by $(\tilde{S}_{(i)},t)$ in the list ${\mathcal L}_{n,2}$.
\item In stages $n$ such that $n$ is not an $h$-power, we consider the so called extended list of guesses 
      ${\mathcal L}_{n,1}\cup {\mathcal L}'_{n,1}\cup {\mathcal L}_{n,2}\cup {\mathcal L}'_{n,2}$.
      We select from this extended list the smallest violated guess with respect to an order which we describe in Section \ref{constr_section}. 
      However, if a guess is selected from ${\mathcal L}'_{n,1}\cup {\mathcal L}'_{n,2}$, it will be removed from that list.
      This means that in stages within the
      interval $I_t$, for each machine $\tilde{S}_{(i)},i\leq t$ and every $j\in\{1,2\}$, 
      at most once a guess of the form $(\tilde{S}_{(i)},b_i-p_i)$ is selected from the sublist ${\mathcal L}'_{j}$. 
\end{itemize}
It will turn out that the use of the extended list of guesses implies that the resulting union function $F$ will satisfy the inequality
$F(n^{1\slash h})^C\leq F(n)$ for every $h$-power $n$.

Finally we will show that the result from Gupta \cite{G96} also holds for the restricted class of $\Sigma_2$-machines $(\tilde{S}_{(i)})$.
This will then give a contradiction, hence the assumption $P=\Sigma_2^p$ cannot hold.

At the end of this section we give a roadmap of the constructions and results given in this paper.
\begin{itemize}
\item We assume $P=\Sigma_2^p$. 
\item We start from a standard indexing $(S_i)$ of $\Sigma_2$-machines. Without loss of generality, $S_1$ is a $\Sigma_2$-machine which never terminates on any input.
      Furthermore, we assume that each $\Sigma_2$-machine occurs infinitely often in this indexing.
\item We show in Lemma \ref{sig1lemma} that there exists a constant $c$ such that for a given $\Sigma_2$-machine $S_i$, 
      an input $x$ of length $n$ and integer numbers $a,b$
      it can be checked deterministically in time $c\cdot (i\cdot a\cdot |x|^b)^{c}$ 
      if the running time of $S_i$ on any input of length $n$ exceeds $a\cdot n^{b}$.
      Here $c$ is a \emph{global constant}, i.e. it does not depend on $i,n,a,b$.
\item In Section \ref{prog_section} we give the definition of Property $[\star ]$.
      The parameter $h$ in the definition of Property $[\star ]$ is defined as $h=20(c+2)$ and thus
      only depends on the constant $c$ from Lemma \ref{sig1lemma}. We construct our new family $(\tilde{S}_{i,d})$ of $\Sigma_2$-machines. These machines will be 
      arranged in a linear manner at the beginning of Section \ref{constr_section}. Thus $\tilde{S}_{(i)}$ will be the $i$th machine in this order.
\item In Lemma \ref{progsys_lemma} we show that for all $i,d$, the running time of $\tilde{S}_{i,d}$ is bounded by the running time of $S_i$.
      Furthermore we show that for all $i,d$, the running time function 
      $\mbox{time}_{\tilde{S}_{i,d}}(n)$ of machine $\tilde{S}_{i,d}$ satisfies
      Property $[\star ]$.
      Furthermore, if $\mbox{time}_{S_i}(n)$ already satisfies Property $[\star]$, then $L(S_i)=L(\tilde{S}_{i,d})$. Since functions of the form $b\cdot n^b$ satisfy the Property $[\star ]$,
      this implies that $P=\tilde{P}=\Sigma_2^p=\tilde{\Sigma}_2^p$ holds (Lemma \ref{classes_lemma}).
\item We construct our union function $F$ in Section \ref{constr_section}. In Lemma \ref{union_lemma} we prove that \\
      $\tilde{P}=\mbox{DTIM}\tilde{\mbox{E}}(F)=\mbox{DTIM}\tilde{\mbox{E}}(F(n+1))=\tilde{\Sigma}_2(F(n+1))=\tilde{\Sigma}_2(F)=\tilde{\Sigma}_2^p$.
\item In Lemma \ref{F_computation_lemma} we show that $F(n)$ can be computed deterministically in time $F(n)^C$, where $C=10c$ and $c$ is the constant from
      Lemma \ref{sig1lemma}. Especially we have $C<h\slash 2$.
\item In Lemma \ref{ineq_lemma} we show that $F$ satisfies the {\sl Padding Inequality}, namely the inequality $F(n^{1\slash h})^C\leq F(n)$ for every $h$-power $n$.
\item In Lemma \ref{padding_lemma} we make use of this inequality and a padding construction to show that 
      $\mbox{DTIM}\tilde{\mbox{E}}(F)=\tilde{\Sigma}_2(F)$ also implies $\mbox{DTIM}\tilde{\mbox{E}}(F^{C^2})=\tilde{\Sigma}_2(F^{C^2})$.
\item In Lemma \ref{F_propertystar_lemma} we show that the function $F(n)^{C^2}$ has Property $[\star ]$.
\item Section \ref{gupta_section} provides a variant of Gupta's result for our new indexing $(\tilde{S}_{(i)})$. We show in Theorem \ref{separation_theorem}
      that for each function $t(n)\geq n\log^*n$ which is deterministically computable in time $t(n)^{1-\epsilon}$ for some $\epsilon >0$ and satisfies Property $[\star ]$,
      we have $\mbox{DTIM}\tilde{\mbox{E}}(t)\subsetneq\tilde{\Sigma}_2(t)$. Especially this holds for $t(n)=F(n)^{C^2}$, which yields the desired contradiction. 
      Hence the assumption $P=\Sigma_2^p$ cannot hold.
\end{itemize}

\section{Preliminaries}\label{prelim_section}
An Alternating Turing Machine 
has states which are labelled as accepting, rejecting, universal or existential. The semantics of ATMs generalizes both nondeterministic 
and co-nondeterministic computations: A subtree $T$ of the computation tree of an ATM on a given input $x$ is called accepting subtree
if either the root of this subtree is a configuration with an accepting state, or the state is existential and there exists 
a child of this node whose subtree is accepting, or the state is universal and for every child of this node, the subtree rooted at this 
child is an accepting subtree.
The input $x$ is accepted by the ATM if the computation tree itself is accepting.

%

For a precise description of \textit{Alternating Turing Machines (ATM)} we refer to \cite{CKS81}, \cite{PPR80} and \cite{PR81}.
Most of the notations which we use here are taken from \cite{PPST83}.

\emph{Running Time of Alternating Turing Machines.}
Two different notions of running time of ATMs have been used in the literature (cf. \cite{BGW70},\cite{SFM78}). One version is to say
that for a given ATM $M$ and input $x\in L(M)$, the running time of $M$ on input $x$ is the minimum depth of an accepting subtree 
of the computation tree of $M$ on input $x$. 
A second version which we denote as $\mbox{time}_M$ is defined as follows: $\mbox{time}_M(x)\leq t(|x|)$ if all computation paths of $M$ on input $x$
have length at most $t(|x|)$.
For time-constructible time bounds $t(n)$, both notions are equivalent. 
In this paper, we will use the second notion $\mbox{time}_M$. 


A function $t(n)$ is called \emph{time-constructible} if $t(n)\geq n$ and $t(n)$ can be computed in time $O(t(n))$, 
i.e. there exists some deterministic machine $M$ such that
for all $x\in\{0,1\}^*$, $M(x)=t(|x|)$ in time $O(t(|x|))$.

For a function $f(n)$ mapping integers to integers, let
$\Sigma_k(f(n))$ denote the set of decision problems which are recognized by some alternating machine in time $O(f(n))$ which starts in an
existential state and changes the quantification (between existential and universal states) at most $k-1$ times. 
$\Pi_k(f(n))$ is the set of decision problems which are recognized by some alternating machine in time $O(f(n))$ which starts in an
universal state and changes quantification at most $k-1$ times. 

For a class of functions $C$, let $\Sigma_k(C)$ denote the union of all sets $\Sigma_k(t),\:t\in C$,
and let the classes $\Pi_k(C)$ be defined accordingly. 
Let $\mbox{poly}$ denote the class of all functions $p(n)=k\cdot n^k$. Thus, $\mbox{DTIME}(\mbox{poly})=P$ and
$\Sigma_1(\mbox{poly})=NP$.

$\mbox{PH}=\bigcup_k\Sigma_k^p$ is the \emph{Polynomial Hierarchy}
\cite{S76} with levels $\Sigma_k^p=\Sigma_k(\mbox{poly})$. 
In particular, $P=\Sigma_0^p$ and $NP=\Sigma_1^p$.
The polynomial hierarchy is well known to provide \emph{Downward Separation}. This means that whenever at some level $k$ we have $\Sigma_k^p=\Sigma_{k+1}^p$, 
then the whole hierarchy collapses to the level $k$, namely $\Sigma_k^p=\mbox{PH}$ (cf. \cite{S76}, Thm. 3.2).

%
%
\emph{Indexings of Alternating Machines.}
An \emph{indexing} (also called \emph{G{\"o}delization}) of a class of machines is an encoding of those machines by strings or integer numbers.
Let us briefly describe indexings for the classes of $\Sigma_k$-machines and give a statement about the 
time complexity of an associated universal function. All the machines which we consider here work over the binary alphabet $\{0,1\}$.
An \emph{Alternating Turing Machine (ATM)} over the alphabet $\{0,1\}$ is a tuple $M=(Q,q_0,k,\delta,\lambda)$ with 
\begin{itemize}
\item[] $Q$ being the set of states, $q_0\in Q$ the initial state,
\item[] $k$ being the number of tapes of $S$,
\item[] $\delta\subseteq Q\times\{0,1\}^k\:\times\: Q\times\{0,1\}^k\times\{-1,0,1\}^k$
	the transition function (a multi-valued function, in \cite{CKS81} called the \emph{next move relation}), 
\item[] $\lambda\colon Q\to\{\forall,\exists,a,r\}$ the labeling of states
	(universal states $q$ with $\lambda (q)=\forall$, existential states $q$ with $\lambda (q)=\exists$, accepting states
	$q$ with $\lambda (q)=a$ and rejecting states $q$ with $\lambda (q)=r$).
\end{itemize}
Existential or universal states for which the transition function is single-valued for every combination of input symbols on the $k$ tapes 
can be considered as deterministic states. Alternatively one could also explicitly encode deterministic states by an additional label.
ATMs can be encoded as binary strings in a standard way. Suppose we fix such an encoding such that the set $L_{pc}\subseteq\{0,1\}^*$
of all binary strings which encode an ATM (i.e. the program codes) is in $\mbox{DTIME}(n)$, no program code is prefix of another and there exists an
ATM $U$ (''universal simulator'') such that for each program code $e\in L_{pc}$ and every $x\in\{0,1\}^*$, $\langle e,x\rangle\in L(U)$ iff 
$x\in L(M_e)$, and furthermore $\mbox{time}_{U}(\langle e,x\rangle )=|e|\cdot \mbox{time}_{M_e}(x)^{O(1)}$.  
Here, $M_e$ denotes the ATM whose program code is the string $e$. The time-bound for the universal machine $U$ can be achieved by a 
standard step-by-step simulation of the machine $S_e$ on input $x$. For further details we refer to \cite{SFM78} and \cite{K80}. 

Now we can fix some ATM $M_0$ and extend the indexing in such a way that $M_e:=M_0$ for all $e\not\in L_{pc}$. If we now 
identify integer numbers with their binary representations, we obtain an indexing of the Alternating Turing Machines
$\langle M_i\rangle_{i\in {\mathbb N}}$ together with the universal machine $U_0$ such that 
$L(U_0)=\{\langle i,x\rangle\: |\: x\in L(M_i)\}$ and $\mbox{time}_{U_0}(\langle i,x\rangle )=|i|\cdot\mbox{time}_{M_i}(x)^{O(1)}$.

In the same way we obtain an indexing $\langle S_i\rangle_{i\in {\mathbb N}}$ with a universal machine $U$ for the $\Sigma_2$-machines.

\emph{Properties of Integer Numbers.}
Let $A$ be a property of integer numbers, i.e. a subset $A\subseteq {\mathbb N}$. We write $A(n)$ for $n\in A$. The property $A$ 
\emph{holds almost everywhere} (Notation: $A(n)$ a.e.) if the set ${\mathbb N}\setminus A$ is finite.
The property $A$ holds \emph{infinitely often} (Notation: $A(n)$ i.o.) if $A$ is an infinite subset of ${\mathbb N}$.

\emph{Logarithm.} In this paper, $\log(n)$ denotes the ceiling of the binary logarithm: $\log (n) :=\lceil \log_2(n)\rceil$.\\[0.3ex]
It is well known that polynomial functions grow asymptotically faster than polylogarithmic functions. In our constructions and proofs, we need some explicit estimates
of the point from which on the linear function $f(n)=n$ majorizes a given polylogarithmic function $a\cdot\log (n)^b$. Such an estimate
is provided in the following auxiliary lemma, first for the binary logarithm $\log_2(n)$ and then, based on that, for the ceiled logarithm $\log (n)=\lceil\log_2(n)\rceil$.
The estimate is not tight, but sufficient for our purpose. 
\begin{lemma}\label{auxiliary-lemma}
Let $\alpha$ and $\beta$ be positive integers such that $\alpha\geq 4$ and $\beta\geq 16$. Then (i) and (ii) hold.
\begin{itemize}
\item[(i)] For every $n\geq 2^{\alpha\beta^2}$, we have $\alpha\cdot \log_2(n)^{\beta}\:\leq\: n$.
\item[(ii)] For every $n\geq 2^{\alpha (\beta +1)\cdot 2^{\beta}\cdot \beta^2}$, we have $\alpha\cdot \log (n)^{\beta}\:\leq\: n$.
\end{itemize}
\end{lemma}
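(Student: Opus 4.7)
The plan is to prove (i) by a standard monotonicity argument in logarithmic coordinates, and then derive (ii) from (i) via the crude bound $\log(n)=\lceil\log_2(n)\rceil\leq\log_2(n)+1\leq 2\log_2(n)$ (valid for $n\geq 2$).

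For (i), set $k=\log_2 n$ (allowing real values) and rewrite the target inequality $\alpha k^{\beta}\leq 2^k$ in logarithmic form as $h(k):=k-\log_2\alpha-\beta\log_2 k\geq 0$. The task is then to (a) verify $h(k_0)\geq 0$ at the boundary $k_0=\alpha\beta^2$, and (b) show that $h$ is nondecreasing on $[k_0,\infty)$. For (b), $h'(k)=1-\beta/(k\ln 2)$ is positive for $k\geq\beta/\ln 2$, which is amply implied by $k\geq\alpha\beta^2\geq 4\beta^2$. For (a), expanding gives $\log_2\alpha+\beta\log_2(\alpha\beta^2)=(\beta+1)\log_2\alpha+2\beta\log_2\beta$; I would then use $\log_2\alpha\leq\alpha$ (valid for all $\alpha\geq 1$) together with $\log_2\beta\leq\beta/4$ (valid for $\beta\geq 16$, since $\log_2\beta/\beta$ is decreasing past $e$ and equals $1/4$ at $\beta=16$) to bound this by $\alpha(\beta+1)+\beta^2/2$. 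The remaining inequality $\alpha(\beta+1)+\beta^2/2\leq\alpha\beta^2$ reduces to $\alpha(\beta^2-\beta-1)\geq\beta^2/2$, which at the worst case $\alpha=4,\ \beta=16$ leaves a comfortable slack of roughly $(7/2)\beta^2-4\beta-4>0$. Substituting $k=\log_2 n$ back then gives (i).

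For (ii), the bound $\log(n)\leq 2\log_2(n)$ yields $\alpha\log(n)^{\beta}\leq\alpha\cdot 2^{\beta}\cdot\log_2(n)^{\beta}$. I would then apply (i) with the rescaled constant $\alpha':=\alpha\cdot 2^{\beta}$ in place of $\alpha$; note that $\alpha'\geq 4$ holds automatically, so the hypothesis of (i) is satisfied. The conclusion of (i) then gives $\alpha'\log_2(n)^{\beta}\leq n$ as soon as $n\geq 2^{\alpha'\beta^2}=2^{\alpha\cdot 2^{\beta}\cdot\beta^2}$, which is implied by the slightly weaker threshold $n\geq 2^{\alpha(\beta+1)\cdot 2^{\beta}\cdot\beta^2}$ stated in the lemma.

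The argument is essentially mechanical, so no conceptual obstacle is expected. The only point requiring some care is keeping the elementary estimates $\log_2\alpha\leq\alpha$ and $\log_2\beta\leq\beta/4$ aligned with the precise hypotheses $\alpha\geq 4$ and $\beta\geq 16$, and making sure that the constant blow-up introduced by $\log(n)\leq 2\log_2(n)$ is absorbed correctly when invoking (i) inside the proof of (ii).
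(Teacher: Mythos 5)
Your proof is correct and follows essentially the same route as the paper's: verify the inequality at the boundary $n=2^{\alpha\beta^2}$ and then argue by monotonicity for (i), and reduce (ii) to (i) by bounding $\lceil\log_2 n\rceil$ in terms of $\log_2 n$ and absorbing the resulting $2^{\beta}$ factor into a rescaled $\alpha'$. Your log-coordinate monotonicity check is marginally cleaner than the paper's first- and second-derivative computation, and your sharper bound $(\log_2 n+1)^{\beta}\leq 2^{\beta}\log_2(n)^{\beta}$ makes the stated threshold in (ii) a weaker-than-necessary sufficient condition, but these are cosmetic differences.
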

\begin{proof}
(i) First we show that the inequality holds for $n=2^{\alpha\cdot \beta^2}$. Then we use first and second derivatives in order to show that it also 
holds for all $n>2^{\alpha\cdot \beta^2}$.\\[0.2ex]
For $n=2^{\alpha\cdot \beta^2}$, we have to show that 
\[\alpha\cdot \log_2(n)^{\beta}=\alpha\cdot (\alpha\beta^2)^{\beta}\:\leq\: 2^{\alpha\beta^2},\]
which is (by taking logarithms) equivalent to 
\begin{equation}\label{aux_xx}
\log_2(\alpha)+\beta (\log_2(\alpha)+2\log_2(\beta))\:\leq\: \alpha\beta^2.
\end{equation}
The left hand side of inequality (\ref{aux_xx}) is less or equal $\beta (2\log_2(\alpha)+2\log_2(\beta))$. 
We divide both sides of (\ref{aux_xx}) by $\beta$, and thus it is sufficient to show
\[2\log_2(\alpha)+2\log_2(\beta)\:\leq \alpha\cdot \beta.\]
Now $\alpha\geq 4$ implies $2\log_2(\alpha)\leq \alpha$ and $\beta\geq 16$ implies $2\log_2(\beta)\leq \beta$. Thus, the left hand side of the last inequality is 
$\leq \alpha+\beta\leq \alpha\cdot \beta$, and the inequality $\alpha\cdot \log_2(n)^{\beta}\leq n$ holds for $n=2^{\alpha\beta^2}$.\\[0.2ex]
Now we build the first derivatives of both sides $L(n)=\alpha\log_2(n)^{\beta}$ and $R(n)=n$ of the inequality in (i):
\[L'(n)=\frac{d}{dn}\alpha\log_2(n)^{\beta}=\alpha\beta\cdot\frac{\log_2(n)^{\beta -1}}{\ln(2)\cdot n},\:R'(n)=\frac{d}{dn}n=1\]
Again, for $n=2^{\alpha\beta^2}$ we have $L'(n)= \alpha\beta\cdot\frac{(\alpha\beta^2)^{\beta -1}}{\ln(2)\cdot 2^{\alpha\beta^2}}$.
Thus we have $L'(2^{\alpha\beta^2})\leq R'(2^{\alpha\beta^2})=1$ iff
\begin{equation}\label{aux_x}
\frac{1}{\ln (2)}\cdot \alpha\beta\cdot (\alpha\beta^2)^{\beta -1}\:\leq\: 2^{\alpha\beta^2}.
\end{equation}
The logarithm of the left hand side of (\ref{aux_x}) is
\[\beta\log_2(\alpha)+(2\beta -1)\log_2(\beta )-\log_2\ln (2)\leq \beta (\log_2(\alpha)+2\log_2(\beta ))-\log_2\ln (2).\]
We set $l:=-\log_2\ln (2)$. Note that $l>0$.
Thus (\ref{aux_x}) holds if $\beta (\log_2(\alpha )+2\log_2(\beta ))+l\leq \alpha\beta^2$, 
i.e. $\log_2(\alpha)+2\log_2(\beta)+l\slash \beta\leq \alpha\cdot \beta$. Now again since $\alpha\geq 4$ and $\beta\geq 16$, we have 
$\alpha\geq\log_2(\alpha)$ and $\beta\geq 2\log_2(\beta )$. Since $\frac{1}{2}<l<\frac{3}{5}$, the inequality $\alpha+\beta+\frac{l}{\beta}<\alpha\cdot\beta$
holds, which implies  
$L'(n)\leq R'(n)$ for $n=2^{\alpha\beta^2}$. Now if we show that the second
derivatives satisfy $L''(n)\leq R''(n)$ for all $n\geq 2^{\alpha\beta^2}$, then (i) holds. Since $R''(n)=0$, it suffices to show 
that $L''(n)\leq 0$ for $n\geq 2^{\alpha\beta^2}$.
We have 
\[L''(n)=\frac{\alpha\beta}{\ln (2)}\cdot \frac{(\beta -1)\log_2(n)^{\beta -2}\cdot\frac{1}{\ln (2)}\cdot \frac{1}{n}\cdot n-\log_2(n)^{\beta -1}\cdot 1}{n^2},\]
and therefore we have $L''(n)\leq 0$ iff $\frac{\beta -1}{\ln (2)}\log_2(n)^{\beta -2}\leq\log_2(n)^{\beta -1}$ iff $\frac{\beta -1}{\ln (2)}\leq\log_2(n)$. 
This last inequality holds, since we have $\ln (2)>\frac{1}{2}$ and therefore also 
$\log_2(n)\geq\log_2(2^{\alpha\beta^2})=\alpha\beta^2\geq 2\cdot (\beta -1)>\frac{\beta -1}{\ln (2)}$. 
Altogether we have shown that $L(2^{\alpha\beta^2})\leq R(2^{\alpha\beta^2})$ and $L'(n)\leq R'(n)$ for all $n\geq 2^{\alpha\beta^2}$, and therefore
$L(n)\leq R(n)$ holds for all $n\geq 2^{\alpha\beta^2}$. This concludes the proof of (i). \\[0.2ex]
\emph{Proof of (ii):} 
Since $\log (n)$ is defined as $\lceil\log_2(n)\rceil$, we have $\log (n)\leq\log_2(n)+1$ and therefore also 
$\log (n)^{\beta}\leq(\log_2(n)+1)^{\beta}\leq (\beta +1)\cdot 2^{\beta}\cdot \log_2(n)^{\beta}$. 
Hence in order to show that $\alpha\log(n)^{\beta}\leq n$, it suffices to show that
$\alpha(\beta +1)\cdot 2^{\beta}\cdot\log_2 (n)^{\beta}\leq n$. 
Using (i), this last inequality holds for $n\geq 2^{\alpha'\cdot \beta^2}$ with $\alpha'=\alpha (\beta +1)\cdot 2^{\beta}$, which concludes the proof of (ii).
\end{proof}

\section{An Implication from $P=\Sigma_2^p$}\label{implication_section}
%
We assume $P=\Sigma_2^p$. In this section we show that this implies that we can test deterministically in time $p(n)^c$ 
if a given $\Sigma_2$-machine exceeds running time $p(n)$ on input length $n$. Both the construction of the subfamily $(\tilde{S}_i)$ of $\Sigma_2$-machines and 
the construction of the union function will rely on this result.
Recall that we assume $(S_i)$ to be a standard indexing of the $\Sigma_2$-machines. 
Let us consider the following decision problem: We are given a 
tuple $\langle i,x,a,b\rangle$ consisting of a machine index $i$, some input string $x$ and two integers $a,b$, and we want to decide
if the running time of
machine $S_i$ on input strings of length $n=|x|$ is always bounded by $a\cdot n^{b}$. As a direct consequence of the assumption $P=\Sigma_2^p$,
the following lemma shows that this problem can be solved deterministically in time $c\cdot (i\cdot a\cdot n^{b})^c$ for some fixed constant $c$.
\begin{lemma}\label{sig1lemma}
Suppose $\mbox{P}=\Sigma_2^p$. Then there exists some constant $c$ such that the decision problem
\[\begin{array}{l}
L_{check}=\left\{\langle x,i,a,b\rangle\: |\:\mbox{for all inputs $x'$ of length}\:\: |x'|=|x|,\: \mbox{all computation}\right.\\ 
\quad\quad\quad\quad\quad\quad\left.\mbox{paths of $S_i$ on input $x'$ terminate within $\leq a\cdot |x|^{b}$ steps}\right\}
\end{array}\]
can be solved deterministically in time $c\cdot  (i\cdot a\cdot n^{b})^c$,
more precisely: There exists a deterministic algorithm 
which recognizes $L_{check}$ and has a running time on input $\langle x,i,a,b\rangle$ being bounded by $c\cdot (i\cdot a\cdot |x|^{b})^c$. 
\end{lemma}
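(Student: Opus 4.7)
The strategy is to introduce a padding so that the natural nondeterministic verification of the complement $\bar{L}_{check}$ runs in time polynomial in the (padded) input length, invoke the hypothesis $P=\Sigma_2^p$ to collapse $NP$ into $P$, and then transfer the resulting deterministic polynomial algorithm back to the unpadded problem. A single language $L'_{check}\in P$ will supply a single polynomial bound, which is precisely what is needed in order to exhibit a uniform constant $c$ not depending on $i,x,a,b$.

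First I would express $\bar{L}_{check}$ as an existential statement: $\langle x,i,a,b\rangle\in\bar{L}_{check}$ iff there exist an input $x'$ with $|x'|=|x|$ and a computation path $p$ of $S_i$ on $x'$ of length strictly greater than $a\cdot |x|^b$. Given such a pair $(x',p)$, verification reduces to a step-by-step simulation of $S_i$ (extracted from the index $i$ via the standard indexing fixed in the Preliminaries) and a length comparison; this can be carried out deterministically in time polynomial in $|i|+a\cdot |x|^b$. The witness $(x',p)$ itself has bit-length polynomial in $|i|+a\cdot |x|^b$ as well. The key subtlety is that this length can be exponentially larger than the raw input length $|\langle x,i,a,b\rangle|$, since $a$ and $b$ are encoded in binary; therefore $\bar{L}_{check}$ is \emph{not} directly placed in $NP$ by this observation.

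I would resolve this by working with the padded language
\begin{equation*}
L'_{check}\;=\;\{\,\langle x,i,a,b,1^{N}\rangle\;:\;\langle x,i,a,b\rangle\in L_{check},\; N=i+a\cdot |x|^b\,\}.
\end{equation*}
The complement $\bar{L}'_{check}$ now lies in $NP$: on a padded input of length $m\geq N$, a witness $(x',p)$ has size polynomial in $m$, and verification takes time polynomial in $m$. The hypothesis $P=\Sigma_2^p$ forces the polynomial hierarchy to collapse to $P$ (downward separation, as recalled in the Preliminaries); in particular $NP\subseteq\Sigma_2^p=P$, hence also $\mathrm{co}\text{-}NP=P$. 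Thus there exist a single constant $c_0$ and a single deterministic algorithm $A$ recognizing $L'_{check}$ in time $m^{c_0}$ on inputs of length $m$.

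To decide $L_{check}$ on input $\langle x,i,a,b\rangle$, the algorithm first computes the padding $1^{i+a\cdot |x|^b}$ in time linear in its length (which is $O(i+a\cdot |x|^b)$) and then runs $A$ on the padded string. The total running time is bounded by $(|\langle x,i,a,b\rangle|+i+a\cdot |x|^b)^{O(c_0)}\leq c\cdot (i\cdot a\cdot |x|^b)^c$ for a sufficiently large constant $c$ that absorbs $c_0$ together with the constants hidden in the padding and simulation overhead. The only point requiring care is the uniformity of $c$ across all tuples $\langle x,i,a,b\rangle$, and this is the main (though rather mild) obstacle: uniformity follows automatically because $A$ and its polynomial exponent $c_0$ are intrinsic to the single language $L'_{check}\in P$, independent of the specific input.
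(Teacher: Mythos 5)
Your proposal is correct and follows essentially the same route as the paper: pad the instance with a unary string of length about $a\cdot|x|^b$ so that guessing an input $x'$ and a long computation path of $S_i$ places the (padded) complement in $NP\subseteq\Sigma_2^p=P$, then run the resulting single deterministic polynomial-time algorithm on the padded string to get the uniform constant $c$. The only cosmetic difference is that the paper pads the complement language directly with a free parameter $k$ and substitutes $k=a\cdot|x|^b$ at the end, whereas you pad the positive language with $N=i+a\cdot|x|^b$; this changes nothing of substance.
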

\begin{proof}
We make use of padding. Consider the following auxiliary decision problem $L_{check}'$ which is a padded version of the complement of $L_{check}$: 
\[\begin{array}{l}
L_{check}'=\left \{\langle x,i\rangle 10^k\: |\: \exists x'\: \left [|x'|=|x|\:\mbox{and at least one computation path of $S_i$}\right.\right.\\
\left.\left.\hspace{3.1cm}\mbox{on input $x'$ which does not terminate within $\leq k$ steps} \right ]\:\right \}
\end{array}\]
We observe that $L_{check}'\in \mbox{NP}$: For a given input $\langle x,i\rangle 10^k$, we can just guess nondeterministically some string $x'$ of length $|x|$
and some computation path of length $k$ for the machine $S_i$ on input $x'$ and then check deterministically (by a step-by-step simulation of
the computation path of $S_i$ on input $x'$) that this path does not terminate within $k$ steps.

Since $\mbox{NP}=\Sigma_1^p\subseteq\Sigma_2^p$ and we assume $\mbox{P}=\Sigma_2^p$, this implies that 
the decision problem $L_{check}'$ is in $\mbox{P}$. 
Thus, the complement of $L_{check}'$ is also in $P$. Therefore, let ${\mathcal B}'$ be a deterministic algorithm which accepts precisely the 
complement of $L_{check}'$ and whose running time on input $\langle x,i\rangle 10^k$ is bounded by $c_0\cdot (|x|+|i|+k)^{c_0}$ for  
all instances $\langle x,i\rangle 10^k$, for some constant $c_0$.
%
This directly gives us the following deterministic algorithm ${\mathcal B}$ for the decision problem $L_{check}$:
\begin{itemize}
\item[] {\bf \underline{Algorithm $\mathbf{{\mathcal B}}$ for $\mathbf{L_{check}}$}}\\
	{\bf Input:} $\langle x,i,a,b\rangle$\\
	Construct the string $\langle x,i\rangle 10^{a\cdot |x|^{b}}$\\
	Compute and return ${\mathcal B}'\left (\langle x,i\rangle 10^{a\cdot |x|^{b}}\right )$
\end{itemize}
Algorithm ${\mathcal B}$ needs $O(a\cdot |x|^b)$ steps to construct the string $\langle x,i\rangle 10^{a\cdot |x|^{b}}$.
Furthermore, algorithm ${\mathcal B}$ uses $O\left (c_o\cdot \left (|x|+|i|+a\cdot |x|^{b}\right )^{c_0}\right )$ steps to simulate ${\mathcal B}'$ 
on input $\langle x,i\rangle 10^{a\cdot |x|^{b}}$.
Thus there exists some constant $c$ such that the running time of algorithm ${\mathcal B}$ on input $\langle x,i,a,b\rangle$ is 
always less or equal $c\cdot (i\cdot a\cdot |x|^{b})^c$. This concludes the proof of the lemma.
\end{proof}
{\bf Remark.} From now on we fix a constant $c$ such that the statement in the Lemma \ref{sig1lemma} holds. Moreover, we can choose $c$ sufficiently large for later purpose.
In particular, we assume that $c$ is sufficiently large that 
for all $n\geq 2^c$, $\log\log (n-1)>3(\log^*(n))^2$. This will be used below in the proof of Lemma \ref{ineq_lemma} . 
\section{A New Family of $\Sigma_2$-Machines}\label{prog_section}
In this section we will first define the \emph{Property $[\star ]$} for functions $f\colon {\mathbb N}\to {\mathbb N}$. Then 
we start from a standard indexing $(S_i)$ of $\Sigma_2$-machines and construct our new family of $\Sigma_2$-machines $(\tilde{S}_{i,d})$.
This family will contain for each machine $S_i$ and every integer $d\in {\mathbb N}$ a machine $\tilde{S}_{i,d}$.
We will construct the machines $\tilde{S}_{i,d}$ in such a way that for every $i$ and $d$, the running time function $\mbox{time}_{\tilde{S}_{i,d}}(n)$ of machine $\tilde{S}_{i,d}$
satisfies Property $[\star ]$.
Furthermore we show that if the running time function $\mbox{time}_{S_i}(n)$ of machine $S_i$ already satisfies Property $[\star ]$, 
then there exist some machine index $j$ and some integer $d$ such that
$\mbox{time}_{\tilde{S}_{j,d}}(n)\leq \mbox{time}_{S_i}(n)$ and such that $L(\tilde{S}_{j,d})=L(S_i)$. Especially this will hold for all $i$ with 
$\mbox{time}_{S_i}(n)$ being polynomial in $n$. We shall show that this implies $\tilde{\Sigma}_2^p=\Sigma_2^p$.

Let us start by giving the precise definition of Property $[\star ]$.
\begin{definition}\label{property_def}\emph{(Property $[\star ]$ for Functions)}\\
We let $h=20\cdot (c+2)$, where $c$ is the constant from Lemma \ref{sig1lemma}. \\[1ex]
We say a function $g\colon {\mathbb N}\to {\mathbb N}$ has \emph{Property $[\star ]$} with parameters 
$c_g,d_g$ and $p_g$ if the following condition is satisfied.
\begin{itemize}
\item[{$[\star ]$}] For all $n\geq 2^{c_g^2}$, for all pairs of integers $a,b$ with $c_g\leq a\leq b\leq\frac{\log (n)}{c_g}$, the following holds:
                  If $g(n)>an^b$, then there exist 
                  pairwise distinct integers
                  \[m_1,\ldots , m_{\left\lceil\frac{\log\log n}{c_g}\right\rceil}\in I_{n,d}=
                  \left (n^{1\slash h},n^{1\slash h} \cdot\left (1+\frac{\log (n)}{n^{1\slash (h\cdot d_g)}}\right )^{d_g}\right )\]
                  such that $g(m_l)>(a-p_g)m_l^{b-p_g}$ and $m_l$ is not an $h$-power, $l=1,\ldots , \left\lceil\frac{\log\log n}{c_g}\right\rceil$.
\end{itemize}
\end{definition}
Now we construct our new family $(\tilde{S}_{i,d})$ of $\Sigma_2$-machines, which contains for each machine $S_i$ and each integer $d$ a machine $\tilde{S}_{i,d}$.
The machine $\tilde{S}_{i,d}$ will have the following property:
\begin{itemize}
\item The running time function $\mbox{time}_{\tilde{S}_{i,d}}(n)$ satisfies Property $[\star ]$.
\item If the running time function of the machine $S_i$ already satisfies the Property $[\star]$, then the machine 
      $\tilde{S}_{i,d}$ still computes the same as the machine $S_i$, formally: $L(S_i)=L(\tilde{S}_{i,d})$.
\end{itemize}
We shall show that functions of the form $b\cdot n^b$ satisfy Property $[\star ]$. So if a problem $L$ is contained in $\Sigma_2^p$, then there
always exists a $\Sigma_2$-machine $S_i$ which solves $L$ and has a running time precisely equal to $b\cdot n^b$ for some $b$. Therefore, the machine $\tilde{S}_{i,d}$
will also solve $L$, and this shows that $\Sigma_2^p=\tilde{\Sigma}_2^p$. Furthermore, if the machine $S_i$ is deterministic, then $\tilde{S}_{i,d}$ will also be
deterministic, which implies that we also have $P=\tilde{P}$.

The idea of how to construct $\tilde{S}_{i,d}$ is now as follows: On a given input $x$ of length $n$, the machine $\tilde{S}_{i,d}$ simulates the computation of the machine 
$S_i$ on input $x$. But at the same time, $\tilde{S}_{i,d}$ checks for increasing values $a,b$ if there are sufficiently many input lengths $m$ within the interval 
$I_{n,d}$ on which the running time of $S_i$ is at least $(a-p_i)m^{b-p_i}$. If this is not the case, then $\tilde{S}_{i,d}$ will stop the simulation within at most 
$an^b$ steps. Since the Property $[\star ]$ is recursive (the condition must hold for all $n$), it does not suffice to consider input lengths within the interval $I_{n,d}$, but
also within the intervals $I_{m,d}$ for $m\in I_{n,d}$ and so forth. The crucial part in the construction will be to show that machine $\tilde{S}_{i,d}$ always has enough time
to perform all these checks. Let us now give the details.

Recall that $S_1$ is a $\Sigma_2$-machine which runs to infinity on every input. We set $c_{1,d}=1$ for all $d\in {\mathbb N}$.
For $i>1$, we set 
\begin{eqnarray}
c_{i,d} & := & 2^{dh}\cdot (dh)^3\cdot 2^c\cdot i^c 
\end{eqnarray}
and $p_i:=\lceil i\slash 2\rceil$, where $c$ is the constant from Lemma \ref{sig1lemma}. 
The reason for this specific choice of the parameters $c_{i,d}$ and $p_i$ will become clear in the construction of the machines $\tilde{S}_{i,d}$.

The machine $\tilde{S}_{i,d}$ simulates the machine $S_i$ on the given input, but in parallel it will check on some extra working tapes
if the Property $[\star ]$ holds for $S_i$.
In order to give the precise definition of the machine $\tilde{S}_{i,d}$, we make use of the following predicate $P(\:)$. Intuitively, at a given 
point in the computation of the machine $\tilde{S}_{i,d}$, the predicate $P(\: )$ tells us if $\tilde{S}_{i,d}$ is allowed to continue its computation such as 
to satisfy Property $[\star ]$.
\begin{definition}{(Predicate $P$ for machines $\tilde{S}_{i,d}$)}\\
We say that predicate $P(i,d,n,a,b)$ holds if $(n\geq 2^{c_{i,d}^2}\:\mbox{and}\: c_{i,d}\leq a\leq b\leq\frac{\log (n)}{c_{i,d}})$ implies that
there exist pairwise distinct integers $m_l,l=1,\ldots \left\lceil\frac{\log\log (n)}{c_{i,d}}\right\rceil$
in the interval 
\[I_{n,d}\: =\: \left (n^{1\slash h},n^{1\slash h}\cdot \left (1+\frac{\log (n)}{n^{1\slash (d\cdot h)}}\right )^d\right )\]
such that for each such $l$, the integer $m_l$ is not an $h$-power, $\mbox{time}_{S_i}(m_l)>(a-p_i)m_l^{b-p_i}$ and $P(i,d,m_l,a-p_i,b-p_i)$ holds.
\end{definition}
We observe that the running time function $\mbox{time}_{S_i}(n)$ satisfies Property $[\star ]$ with parameters $c_{i,d},d,p_i$ iff for all $n\geq 2^{c_{i,d}^2}$ 
and for all $c_{i,d}\leq a\leq b\leq \frac{\log (n)}{c_{i,d}}$, $\mbox{time}_{S_i}(n)>an^b$ implies that the predicate $P(i,d,n,a,b)$ holds.

The $\Sigma_2$-machines $\tilde{S}_{i,d}$ will have the following properties.
For $i=1$ and $d\in {\mathbb N}$, $\tilde{S}_{1,d}$ is a $\Sigma_2$-machine which runs to infinity on every input.
For $i>1$,
the machine $\tilde{S}_{i,d}$ is defined as follows.
On input $x$ of length $n$, if $n\leq 2^{c_{i,d}^2}$, then $\tilde{S}_{i,d}$ just simulates $S_i$ on input $x$ and makes in total the same number of computation
steps as $S_i$. On the other hand, if $n>2^{c_{i,d}^2}$, then the machine $\tilde{S}_{i,d}$ proceeds as follows: Let $L=L_{i,d}(n)$ denote the number of pairs
of integers $(a,b)$ with $c_{i,d}\leq a\leq b\leq\frac{\log (n)}{c_{i,d}}$, and let
$(a_l,b_l),l=1,\ldots , L$ be these pairs in lexicographically increasing order, first ordered by the second entry $b_l$ and then by the first entry $a_l$.
We consider the associated time intervals $T_0=\left (0,a_1n^{b_1}\right ]$, $T_{l}=\left (a_ln^{b_l},a_{l+1}n^{b_{l+1}}\right ],l<L$
and $T_{L}=\left (a_{L}n^{b_{L}},\infty\right )$.
Now $\tilde{S}_{i,d}$ performs also a step-by-step simulation of the machine $S_i$ on input $x$, but 
additionally it uses a number of additional tapes to do the following. Within each interval $T_l,l<L$,
it checks if the predicate $P(i,d,n,a_{l+1},b_{l+1})$ holds.
If within some interval $T_l$ the computation of $S_i$ on input $x$ terminates, then $\tilde{S}_{i,d}$
terminates as well, with the same output (accept/reject).
If within some interval $T_l$, the computation of $S_i$ does not yet terminate but the
predicate $P(i,d,n,a_{l+1},b_{l+1})$ does not hold, $\tilde{S}_{i,d}$ also completes this interval and then terminates and rejects.
Otherwise, it continues within the next inerval $T_{l+1}$. If the computation reaches the interval $T_L$, then it just continues to simulate
the computation of $S_i$ and does not check the predicate $P(\: )$ anymore.
The computation of $\tilde{S}_{i,d}$ is organized in such a way that while being in an interval $T_l,l<L$,
it always makes precisely as many computation steps as $S_i$.
%
We give a pseudo-code description of the machine $\tilde{S}_{i,d}$.\\[1.1ex]
\hspace*{0.3cm}{\sl Machine $\tilde{S}_{i,d}$}
\vspace*{-0.1cm}
\begin{itemize}
\item[] {\sl Input:} $x$ of length $n$
\item[] {\sl If} $n< 2^{c_{i,d}^2}$, simulate the computation of machine $S_i$ on input $x$\\ and make in total the same number of computation steps
      as $S_i$.
\item[] {\sl If} $n\geq 2^{c_{i,d}^2}$\\
        \hspace*{0.5cm} {\sl For} $l=0,\ldots L-1$ (where $L=L_{i,d}(n)$)\\
        \hspace*{1.0cm} Continue the simulation of computation of $S_i$ on input $x$\\
        \hspace*{1.0cm} and at the same time check if $P(i,d,n,a_{l+1},b_{l+1})$ holds.\\
        \hspace*{1.0cm} If the computation of $S_i$ terminates within the interval $T_l$, \\
        \hspace*{1.0cm} then $\tilde{S}_{i,d}$ also terminates with the same output.\\
        \hspace*{1.0cm} If $P(i,d,n,a_{l+1},b_{l+1})$ does not hold, stop and reject.\\
        \hspace*{0.5cm} $\slash\star$ Now we are in the interval $T_{L}=\left (a_{L}n^{b_{L}},\infty \right )$ $\star\slash$\\
        \hspace*{0.5cm} Continue the simulation of computation of $S_i$ on input $x$.
\end{itemize}
We will now show that for each $l<L$, the size of the interval $T_l$ suffices to check if the predicate $P(i,d,n,a_{l+1},b_{l+1})$ holds.\\[1.0ex]
The predicate $P(i,d,n,a,b)$ is defined recursively. For $P(i,d,n,a,b)$ to hold it is required that there are sufficiently many
smaller integers $m$ in the interval $I_{n,d}$ such that $\mbox{time}_{S_i}(m)>(a-p_i)m^{b-p_i}$ and such that the predicate $P(i,d,m,a-p_i,b-p_i)$ holds. 
This in turn means 
that there exist sufficiently many integers $m'$ in the interval $I_{m,d}$ such that $\mbox{time}_{S_i}(m')>(a-2p_i)(m')^{b-2p_i}$, and so on. 
Now we have $\mbox{time}_{S_i}(m)>(a-p_i)m^{b-p_i}$ iff the tuple $(0^m,i,a-p_i,b-p_i)$ is a no-instance of the decision problem 
$L_{check}$ from Lemma \ref{sig1lemma}. 
We want to give an estimate of the number of instances of the problem $L_{check}$ we have to solve in order to decide if the predicate $P(i,d,n,a,b)$ holds. 
Therefore we consider the following set of integers $R_{i,d}(n)$, which is defined recursively along the definition of the predicate $P$:
$R_{i,d}(n)=\bigcup_lR_{i,d}^l(n)$ with
\[\begin{array}{r@{\:\:}c@{\:\:}l}
R_{i,d}^1(n) & = & I_{n,d}\: =\: \left (n^{1\slash h},\: n^{1\slash h}\left (1+\frac{\log (n)}{n^{1\slash (dh)}}\right )^d \right ),\\
R_{i,d}^{l+1}(n) & = & \bigcup_{{m\in R_{i,d}^l(n), m\geq 2^{c_{i,d}^2} }}I_{m,d}.
\end{array}\]
It follows directly from the definition of the predicate $P$ that in order to decide if $P(i,d,n,a,b)$ holds, it suffices to 
solve appropriate instances of the problem $L_{check}$ from Lemma \ref{sig1lemma} for the integers $m\in R_{i,d}(n)$. 
Now we want to give a bound on the cardinality of the set $R_{i,d}(n)$. 
For this purpose we will make use of the following auxiliary result.
\begin{lemma}\label{auxil_lemma}
If $n\geq 2^{c_{i,d}^2}$, then $\frac{\log (n)}{n^{1\slash (d\cdot h)}}\leq 1$.
\end{lemma}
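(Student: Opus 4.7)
The plan is to reduce the claim to Lemma \ref{auxiliary-lemma}(ii). Rewriting $\log(n)/n^{1/(dh)} \leq 1$ in the equivalent polynomial form $\log(n)^{dh} \leq n$, the task becomes to show that this inequality holds whenever $n \geq 2^{c_{i,d}^2}$, where $c_{i,d} = 2^{dh}\cdot (dh)^3 \cdot 2^c \cdot i^c$ (for $i>1$; the case $i=1$ is trivial since $\tilde{S}_{1,d}$ runs to infinity on every input and the inequality is never actually invoked for it).

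To invoke Lemma \ref{auxiliary-lemma}(ii), I would set $\alpha = 4$ and $\beta = d\cdot h$. The hypothesis $\alpha \geq 4$ is immediate, and the hypothesis $\beta \geq 16$ holds because $h = 20(c+2) \geq 40$ and $d \geq 1$, so that $dh \geq 40$. Applying the auxiliary lemma yields $4\log(n)^{dh} \leq n$ (hence in particular $\log(n)^{dh} \leq n$) for every
\[n \;\geq\; 2^{\,4(dh+1)\cdot 2^{dh}\cdot (dh)^2}.\]

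The remaining step is to check that the threshold $2^{c_{i,d}^2}$ in the hypothesis of the present lemma is already at least as large as the threshold produced by Lemma \ref{auxiliary-lemma}(ii). Since $4(dh+1)\cdot 2^{dh}\cdot (dh)^2 \leq 8\cdot 2^{dh}\cdot (dh)^3$ for $dh\geq 1$, it suffices to show $c_{i,d}^2 \geq 8\cdot 2^{dh}\cdot (dh)^3$. Squaring the definition gives
\[c_{i,d}^2 \;=\; 2^{2dh}\cdot (dh)^6 \cdot 2^{2c}\cdot i^{2c} \;\geq\; 2^{2dh}\cdot (dh)^6,\]
and the factor $2^{dh}\cdot (dh)^3$ by which this exceeds $8\cdot 2^{dh}\cdot (dh)^3$ is clearly at least $8$ for $dh \geq 40$. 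Hence $2^{c_{i,d}^2}$ exceeds the threshold needed, and the lemma follows.

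I do not expect any genuine difficulty here; the content of the lemma is a routine constant-chasing estimate that unpacks the inequality $\log(n) \leq n^{1/(dh)}$ from the aggressive choice of $c_{i,d}$. The only mild point of care is ensuring that the factor $2^{dh}$ built into $c_{i,d}$ is what absorbs the doubly-exponential threshold $2^{dh}\cdot(dh)^2$ that Lemma \ref{auxiliary-lemma}(ii) introduces; this is precisely why $c_{i,d}$ was defined with that factor in Section \ref{prog_section}.
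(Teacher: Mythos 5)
Your proof is correct and follows essentially the same route as the paper: both reduce the claim to $\log(n)^{dh}\leq n$, invoke Lemma \ref{auxiliary-lemma}(ii) with $\alpha=4$, $\beta=dh$, and verify that $c_{i,d}^2\geq 4(dh+1)\cdot 2^{dh}\cdot (dh)^2$ by direct constant-chasing from the definition of $c_{i,d}$. Your side remark about $i=1$ (where $c_{1,d}=1$ and the bound would fail) is a fair observation the paper glosses over, but it does not change the substance.
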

\begin{proof}
We want
to make use of Lemma \ref{auxiliary-lemma}.
Since $n\geq 2^{c_{i,d}^2}$ and $c_{i,d}$ is defined as $2^{dh}(dh)^3\cdot 2^c\cdot i^c$, we have
\[n\:\geq\: 2^{(2^{dh}(dh)^32^ci^c)^2}=2^{2^{2dh}\cdot (dh)^6\cdot 2^{2c}\cdot i^{2c}}\]
We want to show that for $\alpha =4$ and $\beta =dh$, this implies that
\begin{equation}\label{eq_xxx_z}
n\geq 2^{\alpha (\beta +1)\cdot 2^{\beta}\cdot \beta^2}=2^{4(dh+1)\cdot 2^{dh}\cdot (dh)^2}.
\end{equation}
It is sufficient to show that
\begin{equation}\label{eq_xxxx_z}
2^{2dh}\cdot (dh)^6\cdot 2^{2c}\cdot i^{2c}\:\geq\: 4(dh+1)\cdot 2^{dh}\cdot (dh)^2,
\end{equation}
which is equivalent to $2^{dh}\cdot (dh)^4\cdot 2^{2c}\cdot i^{2c}\geq 4(dh+1)$. We have $2^{2c}\geq 4$ and $(dh)^2\geq dh+1$.
Thus (\ref{eq_xxxx_z}) holds, which implies that (\ref{eq_xxx_z}) holds as well.
Thus Lemma \ref{auxiliary-lemma} (ii) yields that $4\cdot\log (n)^{dh}\leq n$.
This implies that $n\geq (\log n)^{dh}$, which is equivalent to $\frac{\log (n)}{n^{1\slash (dh)}}\leq 1$.
\end{proof}

Now we can give a bound on the size of the set $R_{i,d}(n)$ as follows.
Since $n\geq 2^{c_{i,d}^2}$, we can apply Lemma \ref{auxil_lemma} and obtain that
\[I_{n,d}\: =\: \left (n^{1\slash h},\: n^{1\slash h}\left (1+\frac{\log (n)}{n^{1\slash (dh)}}\right )^d \right )\:\subseteq\: 
\left (n^{1\slash h},\: n^{1\slash h}\cdot 2^d\right ).\]
We have $R_{i,d}^l(n)\subseteq [L_l,R_l]$, where
$L_l=n^{1\slash h^l}$, $R_1=n^{1\slash h}\cdot 2^d$, $R_2=(n^{1\slash h}\cdot 2^d)^{1\slash h}\cdot 2^d$ and in general, 
\begin{eqnarray*}
R_l & = & 2^{d\cdot (1+\frac{1}{h}+\ldots +\frac{1}{h^{l-1}})}\cdot n^{1\slash h^l}\: =\:
        2^{d\cdot \frac{1-1\slash h^l}{1-1\slash h}}\cdot n^{1\slash h^l}\\
    & = & 2^{d\cdot (1-\frac{1}{h^l})\cdot\frac{h}{h-1}}\cdot  n^{1\slash h^l}
    \: \leq \: 2^{\frac{d\cdot h}{h-1}}\cdot n^{1\slash h^l}
\end{eqnarray*}
Moreover, if $R_l< 2^{c_{i,d}^2}$, then it follows that $R_{i,d}^{l+1}(n)=\emptyset$. 
Since we have shown that $R_l\leq  2^{\frac{d\cdot h}{h-1}}\cdot n^{1\slash h^l}$,
%
we obtain that $2^{\frac{d\cdot h}{h-1}}\cdot n^{1\slash h^l}<2^{c_{i,d}^2}$ implies $R_l<2^{c_{i,d}^2}$. Taking logarithms, we obtain
\begin{eqnarray*}
 & & \frac{1}{h^l}\cdot\log (n)+\frac{d\cdot h}{h-1}\: <\: c_{i,d}^2\\
 &\Leftrightarrow & h^l\: >\: \frac{1}{c_{i,d}^2-\frac{d\cdot h}{h-1}}\cdot\log (n)\\
 &\Leftrightarrow & l\: >\: \frac{1}{\log (h)}\left ( \log\log (n)-\log\left (c_{i,d}^2-\frac{d\cdot h}{h-1} \right )\right )
\end{eqnarray*}
So in particular, for $l>\log\log (n)$ we have $R_{i,d}^l(n)=\emptyset$. Since $R_{i,d}^l\subseteq [L_l,R_l]$, we have 
\[|R_{i,d}^l|\:\leq R_l-L_l+1\leq R_l\leq R_1.\]
Thus we obtain the following very rough bound on the cardinality of the set $R_{i,d}(n)$:
\[|R_{i,d}(n)|\:\:\leq\:\:2^{\frac{d\cdot h}{h-1}}\cdot n^{1\slash h}\cdot \log\log (n)\]
Now the running time for deciding the predicate $P(i,d,n,a,b)$ is dominated by the time needed to test for all integers $m\in R_{i,d}^l(n),1\leq l\leq \log\log (n)$ if
$\mbox{time}_{S_i}(m)>(a-l\cdot p_i)m^{b-l\cdot p_i}$. According to Lemma \ref{sig1lemma}, the time for each single test can be bounded by 
$c\cdot ((a-p_i)m^{b-p_i})^c$. Moreover, we have shown that $m\in R_{i,d}(n)=\bigcup_lR_{i,d}^l(n)$ implies that $m\leq 2^d\cdot n^{1\slash h}$. 
The number of tests is bounded by 
\[\sum_l|R_{i,d}^l(n)|\:\leq\: 2^{\frac{d\cdot h}{h-1}}\cdot n^{1\slash h}\cdot \log\log (n)\:\leq\: n^{2\slash h}\cdot\log\log (n).\]
Therefore, the time needed to solve all these instances of $L_{check}$ is bounded by
\begin{equation}\label{bound_term}
n^{2\slash h}\cdot\log\log (n)\cdot c\cdot\left (i\cdot (a-p_i)\cdot \left (n^{1\slash h}\cdot 2^d \right )^{b-p_i} \right )^c.
\end{equation}
In order to obtain an upper bound for the term in (\ref{bound_term}), we use Lemma \ref{auxil_lemma} and obtain
$\log (n)\leq n^{1\slash (d\cdot h)}$. We take the logarithm, apply again Lemma \ref{auxil_lemma} and obtain
\begin{equation}\label{hilf_1}
\log\log (n)\:\:\leq\:\:\frac{1}{dh}\cdot n^{1\slash (d\cdot h)}.
\end{equation}
Moreover, $c_{i,d}=2^{dh}\cdot (dh)^3\cdot 2^c\cdot i^c$ yields $i\leq c_{i,d}^{1\slash c}$. Since we have $n\geq 2^{c_{i,d}^2}\geq 2^{i^{2c}}$, and again using
Lemma \ref{auxil_lemma}, we obtain that
\begin{equation}\label{hilf_2}
i^{2c}\:\leq\: \log (n)\leq n^{1\slash (d\cdot h)},\:\:\mbox{i.e.}\:\: i^c\:\leq\: n^{1\slash (2dh)}.
\end{equation}
Since $a\leq\frac{\log (n)}{c_{i,d}}$, we obtain 
\begin{equation}\label{hilf_3}
a-p_i\:\leq\: a\:\leq\: \frac{\log (n)}{c_{i,d}}\:\leq\: n^{1\slash (dh)}.
\end{equation}
Finally, $n\geq 2^{c_{i,d}^2}$ directly implies that
\begin{equation}\label{hilf_4}
2^{d}\:\:\leq\:\: n^{1\slash (dh)}.
\end{equation}
Now we use (\ref{hilf_1})-(\ref{hilf_4}) in order get an upper bound for (\ref{bound_term}) and obtain that the instances of the problem $L_{check}$ in the 
computation of the predicate $P(i,d,n,a,b)$ can be solved in time 
\[\begin{array}{l@{\:\:}l@{\:\:}l}
         & n^{2\slash h}\cdot\log\log (n)\cdot c\cdot \left (i\cdot (a-p_i)\cdot \left (n^{1\slash h}\cdot 2^d\right )^{b-p_i} \right )^c & \\[0.8ex]
   \leq  & n^{2\slash h}\cdot  n^{1\slash (dh)}\cdot\frac{c}{dh} \cdot \left (i\cdot (a-p_i)\cdot \left (n^{1\slash h}\cdot 2^d\right )^{b-p_i} \right )^c 
         & \mbox{(using (\ref{hilf_1}))}\\[0.8ex]
   \leq  & n^{2\slash h}\cdot  n^{1\slash (dh)}\cdot\frac{c}{dh} \cdot n^{1\slash (2dh)}\cdot n^{c\slash (dh)}
           \cdot \left (n^{1\slash h}\cdot 2^d\right )^{c\cdot (b-p_i)} & \mbox{(using (\ref{hilf_2}) and (\ref{hilf_3}))}\\[0.8ex] 
   \leq  & n^{2\slash h}\cdot n^{1\slash (dh)}\cdot\frac{c}{dh}\cdot n^{1\slash (2dh)}\cdot 
           n^{\frac{c}{dh}}\cdot\left (n^{1\slash (dh)} \right )^{(b-p_i)\cdot c}\cdot n^{\frac{b-p_i}{h}\cdot c} & \mbox{(using (\ref{hilf_4}))}\\[0.8ex]
   \leq  & \left (n^{\frac{c}{dh}+\frac{c}{h}}\right )^{b-p_i}\cdot (n^{1\slash h})^{2+\frac{1}{d}+\frac{1}{2d}+\frac{c}{d}} & \mbox{(since $\frac{c}{dh}\leq 1$)}\\[0.8ex]
   \leq  & \left (n^{\frac{c}{dh}+\frac{c}{h}}\right )^{b-p_i}\cdot \left (n^{\frac{2}{h}+\frac{1}{dh}+\frac{1}{2dh}+\frac{c}{dh}} \right )^{b-p_i} & \\[0.8ex]
   \leq  & \left (n^{\frac{6c}{h}}\right )^{b-p_i} & \\[0.8ex]
   \leq  & n^{b-p_i}. & \mbox{(since $\frac{6c}{h}\leq 1$)}  
\end{array}\]
Altogether we obtain that the predicate $P(i,d,n,a,b)$ can be computed in time $n^{b-p_i}$. 

Now we turn back to the construction of the machines $\tilde{S}_{i,d}$. We consider the case when the input length $n$ satisfies $n\geq 2^{c_{i,d}}$.
In that case, the computation of the machine $\tilde{S}_{i,d}$ is split into time intervals
$T_0=\left (0,a_1n^{b_1}\right ],T_l=\left (a_ln^{b_l},a_{l+1}n^{b_{l+1}}\right ],1\leq l\leq L-1$ and $T_L=\left (a_{L}n^{b_L},\infty \right )$.
In each interval $T_l$ with $l<L$, the machine $\tilde{S}_{i,d}$ might have to solve the predicate $P(i,d,n,a_{l+1},b_{l+1})$, and we have to show that the
computation time within the interval $T_l$ is sufficient to do so. This follows now from the following lower bound on the size of the intervals $T_l$:
\begin{equation}\label{interval_bound}
\mbox{For $l=0,\ldots , L-1$, }\:\quad\quad |T_l|\geq n^{b_{l+1}}.
\end{equation} 
In order to prove (\ref{interval_bound}), we consider two cases. If $b_l=b_{l+1}$, then we have $a_{l+1}=a_l+1$, and therefore
$|T_l|=a_{l+1}n^{b_{l+1}}-a_ln^{b_l}=n^{b_{l+1}}$. Otherwise, if $b_{l+1}=b_l+1$, then we have $a_{l+1}=c_{i,d}>1$ and $a_l=\lfloor\frac{\log (n)}{c_{i,d}}\rfloor <n$, thus
$|T_l|=a_{l+1}n^{b_{l+1}}-a_ln^{b_l}>(c_{i,d}-1)n^{b_{l+1}}>n^{b_{l+1}}$. Thus we obtain that (\ref{interval_bound}) holds.

Thus the machine $\tilde{S}_{i,d}$ can simulate $S_i$ on the given input $x$ in such a way that 
$\mbox{time}_{\tilde{S}_{i,d}}(n)\leq\mbox{time}_{S_i}(n)$ and such that for each pair $a,b$, the machine $\tilde{S}_{i,d}$ continues the simulation for 
more than $a\cdot n^b$ steps only if $P(i,d,n,a,b)$ holds. Thus from the definition of the predicate $P(i,d,n,a,b)$ it follows that
the function $\mbox{time}_{\tilde{S}_{i,d}}$ satisfies Property $[\star ]$ with 
parameters $c_{i,d},d$ and $p_i=\lceil\frac{i}{2}\rceil$.
Moreover, if the function $\mbox{time}_{S_i}(n)$ already satisfies Property $[\star ]$ with parameters $c_{i,d},d,p_i$, then 
the machine $\tilde{S}_{i,d}$ completely simulates $S_i$ on every input, i.e. $L(\tilde{S}_{i,d})=L(S_i)$.
Altogether we have shown:
\begin{lemma}\label{progsys_lemma}
For all $i$ and $d$, the machine $\tilde{S}_{i,d}$ has the following properties.
\begin{itemize}
\item[(a)] For all $n$, $\mbox{time}_{\tilde{S}_{i,d}}(n)\leq\mbox{time}_{S_i}(n)$.
\item[(b)] The function $\mbox{time}_{\tilde{S}_{i,d}}(n)$ satisfies Property $[\star]$ with parameters $c_{i,d},d,p_i$.
\item[(c)] If the function $\mbox{time}_{S_i}(n)$ already satisfies Property $[\star ]$ with parameters $c_{i,d},d,p_i$, then
           $L(\tilde{S}_{i,d})=L(S_i)$.
\end{itemize}
\end{lemma}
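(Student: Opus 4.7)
The plan is to verify the three claims by tracing through the construction of $\tilde{S}_{i,d}$ given just above the lemma statement and isolating the three key facts the construction provides.

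For part (a), I would observe that the machine $\tilde{S}_{i,d}$ performs a step-by-step simulation of $S_i$ on the given input $x$, and that inside every time interval $T_l$ it makes exactly as many computation steps as $S_i$ does. The only extra actions $\tilde{S}_{i,d}$ takes are early terminations: either because $S_i$ itself halts or because the predicate $P(i,d,n,a_{l+1},b_{l+1})$ evaluates to false, in which case $\tilde{S}_{i,d}$ also halts (after completing the current interval, but without ever outrunning $S_i$). Consequently $\mbox{time}_{\tilde{S}_{i,d}}(n)\leq \mbox{time}_{S_i}(n)$.

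For part (b), the key point is that if $\mbox{time}_{\tilde{S}_{i,d}}(n)>an^b$ with $c_{i,d}\leq a\leq b\leq \log(n)/c_{i,d}$ and $n\geq 2^{c_{i,d}^2}$, then the construction guarantees that the machine never aborted in any interval $T_l$ whose right endpoint is $\leq an^b$. By the definition of the algorithm, this can only happen if the predicate $P(i,d,n,a,b)$ holds, and hence by unfolding $P$ we obtain the witnesses $m_l$ required by Property $[\star]$ with parameters $c_{i,d},d,p_i$. The substantive task here is to justify that the machine really has the resources to evaluate $P(i,d,n,a_{l+1},b_{l+1})$ inside $T_l$; this has already been prepared in the text by (i) bounding $|R_{i,d}(n)|\leq 2^{dh/(h-1)}\cdot n^{1/h}\cdot \log\log(n)$ via Lemma \ref{auxil_lemma}, (ii) bounding each individual $L_{check}$ query in time $c\cdot((a-p_i)m^{b-p_i})^c$ by Lemma \ref{sig1lemma}, and (iii) combining these with the estimates in (\ref{hilf_1})–(\ref{hilf_4}) to obtain total running time $\leq n^{b-p_i}$ for deciding the predicate. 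Since this is smaller than the lower bound $|T_l|\geq n^{b_{l+1}}$ established in (\ref{interval_bound}), all checks complete within their respective intervals. This is where the particular choice $c_{i,d}=2^{dh}(dh)^3\cdot 2^c\cdot i^c$ gets used, and it is also the main technical obstacle: the whole chain must go through using nothing stronger than the hypotheses $n\geq 2^{c_{i,d}^2}$ and $a\leq b\leq \log(n)/c_{i,d}$.

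For part (c), assume that the running time function $\mbox{time}_{S_i}(n)$ itself satisfies Property $[\star]$ with parameters $c_{i,d},d,p_i$. Then, unfolding the recursive definition of the predicate by induction on the recursion depth (bounded by $\log\log(n)$, since $R_{i,d}^l(n)$ becomes empty once $l>\log\log(n)$), the condition $\mbox{time}_{S_i}(n)>an^b$ in the admissible range of $(a,b)$ implies that $P(i,d,n,a,b)$ holds. Consequently $\tilde{S}_{i,d}$ never performs the ``stop and reject'' step: in each interval $T_l$ either the simulation of $S_i$ finishes inside $T_l$ (and $\tilde{S}_{i,d}$ returns the same answer), or the simulation continues and $P(i,d,n,a_{l+1},b_{l+1})$ holds, allowing the simulation to enter $T_{l+1}$. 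Eventually the computation reaches $T_L$, where $\tilde{S}_{i,d}$ just runs $S_i$ without further checks. In every case, $\tilde{S}_{i,d}$ accepts $x$ if and only if $S_i$ accepts $x$, so $L(\tilde{S}_{i,d})=L(S_i)$, completing the lemma.
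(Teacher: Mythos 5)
Your proposal is correct and follows essentially the same route as the paper: part (a) from the lockstep simulation with early termination, part (b) from the fact that the machine survives past $an^b$ steps only if $P(i,d,n,a,b)$ holds (with the timing analysis $n^{b-p_i}\leq |T_l|$ making the checks feasible), and part (c) from the observation that Property $[\star]$ for $\mbox{time}_{S_i}$ forces the predicate to hold at every level of the recursion, so the simulation is never aborted. The paper's "proof" of this lemma is precisely the construction and concluding paragraph of Section \ref{prog_section}, and your decomposition matches it.
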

Now we define the complexity classes associated to the new family $(\tilde{S}_{i,d})$ of $\Sigma_2$-machines in a standard way.
\begin{definition} $\mbox{ }$\\
For a function $t\colon {\mathbb N}\to {\mathbb N}$ we define the classes $\mbox{DTIM}\tilde{\mbox{E}}(t)$ and $\tilde{\Sigma}_2(t)$:
\[\begin{array}{l@{\:}c@{\:}l}
\mbox{DTIM}\tilde{\mbox{E}}(t) & = & \left\{L\: |\: \exists\:\mbox{$t(n)$-time bounded}\right.\\
                                  &   &  \quad\quad\quad\left.\mbox{deterministic machine $\tilde{S}_{i,d}$ with $L(\tilde{S}_{i,d})=L$}\right\}\\
\tilde{\Sigma}_2(t)            & = & \{L\: |\: \exists\:\mbox{$t(n)$-time bounded $\Sigma_2$-machine $\tilde{S}_{i,d}$ with $L(\tilde{S}_{i,d})=L$}\}
\end{array}\] 
Especially we define $\tilde{P}=\bigcup_{p(n)}\mbox{DTIM}\tilde{\mbox{E}}(p(n))$ and $\tilde{\Sigma}_{2}^p=\bigcup_{p(n)}\tilde{\Sigma}_2(p(n))$,
where the union goes over all polynomials $p(n)$.
\end{definition}
The next lemma shows that the resulting polynomial time classes are equal to the standard polynomial time classes $P$ and $\Sigma_2^p$ respectively. 
\begin{lemma}\label{classes_lemma}
We have $P=\tilde{P}$ and $\Sigma_2^p=\tilde{\Sigma}_2^p$.
\end{lemma}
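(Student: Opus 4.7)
The plan is to prove both equalities by showing each inclusion separately. The ``easy'' inclusions $\tilde{P} \subseteq P$ and $\tilde{\Sigma}_2^p \subseteq \Sigma_2^p$ are essentially by construction: each $\tilde{S}_{i,d}$ is itself a $\Sigma_2$-machine, since it simulates $S_i$ in parallel with evaluating the predicate $P(i,d,n,a,b)$, and the latter reduces under $P=\Sigma_2^p$, via Lemma \ref{sig1lemma}, to a deterministic polynomial-time test. Thus $\tilde{S}_{i,d}$ appears in the standard indexing $(S_j)$ for some $j$, so any language accepted by $\tilde{S}_{i,d}$ in polynomial time is in $\Sigma_2^p$. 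If $S_i$ is deterministic then so is $\tilde{S}_{i,d}$, since all the auxiliary subroutines are deterministic; this handles the $\tilde P\subseteq P$ inclusion.

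For the reverse inclusions, fix $L\in\Sigma_2^p$ and a $\Sigma_2$-machine $M$ accepting $L$ in time $O(n^k)$. Using a standard clocking construction, produce a $\Sigma_2$-machine $M'$ with $L(M')=L$ whose running time is exactly $g(n):=B\cdot n^B$ for a constant $B>k$. Since every $\Sigma_2$-machine occurs infinitely often in the enumeration $(S_i)$, we may take $M'=S_i$ for some $i$. The key claim is that $g(n)=B n^B$ satisfies Property $[\star]$ with parameters $c_{i,d},d,p_i$ for, say, $d=1$. Granting this claim, Lemma \ref{progsys_lemma}(c) yields $L(\tilde{S}_{i,d})=L(S_i)=L$, and Lemma \ref{progsys_lemma}(a) gives $\mbox{time}_{\tilde{S}_{i,d}}(n)\leq g(n)$, which is polynomial. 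Hence $L\in\tilde{\Sigma}_2^p$. The implication $L\in P\Rightarrow L\in\tilde P$ is identical, noting that the clocking construction and the predicate test are both deterministic.

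To verify the claim, fix $n\geq 2^{c_{i,d}^2}$ and integers $c_{i,d}\leq a\leq b\leq \log(n)/c_{i,d}$ with $g(n)>an^b$. Since $a\geq c_{i,d}\geq 2$, the inequality $Bn^B>an^b$ forces $b\leq B$ for $n$ past the threshold (otherwise $an^{b-B}$ dominates $B$). Hence $B-b+p_i\geq p_i\geq 1$, and for every integer $m$ with $m\geq n^{1/h}$,
\[
g(m)\: =\: B\cdot m^{B-b+p_i}\cdot m^{b-p_i}\:\geq\: B\cdot m^{b-p_i}\:>\:(a-p_i)\cdot m^{b-p_i},
\]
since $a-p_i<a\leq b\leq B$; the case $a-p_i\leq 0$ is trivial. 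It remains to exhibit enough non-$h$-power witnesses in $I_{n,d}$. By Lemma \ref{auxil_lemma}, for $d=1$ the interval $I_{n,1}$ contains $\Omega(\log n)$ integers while containing only $O(1)$ many $h$-powers, which easily exceeds $\lceil \log\log(n)/c_{i,d}\rceil$ for $n\geq 2^{c_{i,d}^2}$. This verifies Property $[\star]$.

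I do not expect any substantive obstacle here. The clocking step is standard, and Property $[\star]$ for the single polynomial $Bn^B$ reduces to the elementary growth comparison above together with the counting of non-$h$-powers in $I_{n,d}$. The only mild subtlety is that $c_{i,d}=2^{dh}(dh)^3\cdot 2^c\cdot i^c$ itself grows with $i$, so the threshold $2^{c_{i,d}^2}$ depends on the index we choose; but we are free to take $i$ arbitrarily large by the infinite-repetition property of the indexing, and this only shrinks the range of $(a,b)$ to which Property $[\star]$ applies, never invalidating the verification above.
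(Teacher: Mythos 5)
Your proposal is correct and follows essentially the same route as the paper: clock the machine to run in exactly a fixed polynomial number of steps, verify that such an exact polynomial running-time function satisfies Property $[\star]$ (the time bound holds at every $m$ in $I_{n,d}$ because the exponent $b$ is forced to be at most the clock exponent, and the interval contains enough non-$h$-powers), and then invoke Lemma \ref{progsys_lemma}(c), observing that determinism is preserved. The only cosmetic differences are that the paper counts non-$h$-powers by noting that no two consecutive integers are both $h$-powers rather than by spacing of $h$-powers, and verifies the property for all $d$ rather than just $d=1$; neither affects correctness.
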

\begin{proof}
Let $L\in \Sigma_2^p$. There exist 
a $\Sigma_2$-machine $S_i$ and some constant $q\in {\mathbb N}$ such that $L=L(S_i)$ and 
such that for each input $x$ of length $n$, $S_i$ makes precisely $qn^q$ computation steps on every computation path for inputs of length $n$. 
According to Lemma \ref{progsys_lemma}(c), it suffices to show that there exists some $d\in {\mathbb N}$ such that the function
$\mbox{time}_{S_i}(n)$ satisfies Property $[\star]$ with parameters $c_{i,d},p_i,d$. We will now actually show that this holds for all $d\in {\mathbb N}$.
So suppose that $n\geq 2^{c_{i,d}^2}$ and $c_{i,d}\leq a\leq b\leq\frac{\log (n)}{c_{i,d}}$ are such that 
$\mbox{time}_{S_i}(n)=qn^q>an^b$. It suffices to show that the following two conditions hold:
\begin{itemize}
\item[(i)] For all $m\in I_{n,d}$, $\mbox{time}_{S_i}(m)>(a-p_i)\cdot m^{b-p_i}$.
\item[(ii)] The number of integers in the interval $I_{n,d}$ which are not $h$-powers is at least $\left\lceil\frac{\log\log (n)}{c_{i,d}}\right\rceil$.
\end{itemize}
{\sl Concerning (i):} Since $\mbox{time}_{S_i}(m)=qm^q$ for all $m$, it suffices to show that $q\geq b$. So suppose for the contrary that $q<b$. 
Since $b\leq\frac{\log (n)}{c_{i,d}}\leq\log (n)$ and $n\geq 2^{c_{i,d}^2}$, using Lemma \ref{auxiliary-lemma}, this implies that $b\leq\log (n)\leq n^{1\slash (dh)}<n$.
But then we have $qn^q<n^{q+1}\leq an^b$, a contradiction. Thus we have $q\geq b$. \\[0.3ex]
{\sl Concerning (ii):} The number of integers $m\in I_{n,d}$ which are not $h$-powers is at least $\frac{1}{2}\cdot |I_{n,d}|$, since two consecutive integers cannot 
both be $h$-powers simultaneously. The size of the interval $I_{n,d}$ can be estimated as follows:
\begin{eqnarray*}
|I_{n,d}| & = & n^{1\slash h}\left (1+\frac{\log (n)}{n^{1\slash (d\cdot h)}}\right )^d\: -n^{1\slash h} -1\\
          & = & \left (n^{1\slash h}+\log (n)\cdot n^{\frac{1}{h}-\frac{1}{d\cdot h}}\right )\cdot \left (1+\frac{\log (n)}{n^{1\slash (d\cdot h)}}\right )^{d-1}
                -n^{1\slash h}-1\\
          & \geq & \log (n)\cdot n^{(1-\frac{1}{d})\frac{1}{h}}\\
          & >    &  2\cdot \left\lceil\frac{\log\log (n)}{c_{i,d}}\right\rceil\:\:\:\:\:\:
                    \mbox{(since $\log (n)>\log\log (n)$, $n^{(1-\frac{1}{d})\frac{1}{h}}>1$ and $c_{i,d}>4$).}
\end{eqnarray*}
Thus (ii) holds as well. 
Altogether we have shown that the function 
$\mbox{time}_{S_i}(n)$ satisfies Property $[\star ]$ with parameters $c_{i,d},p_i,d$. Hence, due to Lemma \ref{progsys_lemma},
$L(\tilde{S}_{i,d})=L(S_i)$. Moreover, if $S_i$ is a deterministic machine,
then $\tilde{S}_{i,d}$ is also a deterministic machine. This yields $P=\tilde{P}$ and $\Sigma_2^p=\tilde{\Sigma}_2^p$.
\end{proof}
\section{Construction of the Union Function}\label{constr_section}
Now we describe how the Union Function $F$ is constructed. The general approach is the same as in \cite{McCM69}. We have already given an outline of the construction 
in Section \ref{outline_section}. Here we will first briefly recall the notions and notations which we are making use of. Then we will give a detailed pseudo-code
description of the construction of $F$. Afterwards, we will prove in Lemma \ref{union_lemma} 
that $F$ is indeed a union function for $\Sigma_2^p=\tilde{\Sigma}_2^p$ with respect to the
family $(\tilde{S}_{i,d})$ of $\Sigma_2$-machines which we constructed in the preceeding section. Finally we will show in Lemma \ref{ineq_lemma} that $F$ satisfies the inequality 
$F(n^{1\slash h})^C\leq F(n)$ for all $h$-powers $n$ and that $F(n)$ can be computed deterministically in time $F(n)^C$ for some constant $C$, namely for $C=10c<\frac{1}{2}h$.

The function $F\colon {\mathbb N}\to {\mathbb N}$ is constructed in stages. In stage $n$ of the construction, the value $F(n)$ is defined.
Within the construction, we maintain a list ${\mathcal L}$ of \emph{guesses} $(\tilde{S}_{i,d},b_{i,d})$. We arrange the machines $\tilde{S}_{i,d}$
in a list $\tilde{S}_{(1)},\tilde{S}_{(2)},\ldots , \tilde{S}_{(n)},\ldots$ such that each machine $\tilde{S}_{i,d}$ occurs in this list. 
In order to guarantee that the union function $F(n)$ can be computed in time $F(n)^C$, we will construct the list $\tilde{S}_{(1)},\tilde{S}_{(2)},\ldots$
in such a way that when some machine $\tilde{S}_{i,d}$ is the $j$th machine in this list, then both the
%
associated constant $c_{i,d}$ and the machine index with respect to the original enumeration $(S_l)$ of $\Sigma_2$-machines we were starting from 
are sufficiently small, and $i,d$ and the machine index of $\tilde{S}_{(j)}$ with respect to the numbering $(S_l)$ can be computed efficiently from $j$ 
(conditions (i)-(iii) below). 

Let us describe this now in detail. First we note that for a given machine $\tilde{S}_{i,d}$, we can compute a machine index, say $k(i,d)$, of this machine 
with respect to the original enumeration $(S_l)$ of $\Sigma_2$-machines. This means that $\tilde{S}_{i,d}=S_{k(i,d)}$, and this function $k(\cdot,\cdot)$ is 
computable. Without loss of generality we assume that $k(1,d)=1$ for all $d$ - recall that $S_1$ is a machine which runs to infinity on every input.
Moreover, the parameter $c_{i,d}$ was defined as $c_{i,d}=2^{dh}\cdot (dh)^3\cdot 2^c\cdot i^c$, and the machine $\tilde{S}_{i,d}$ satisfies
Property $[\star ]$ with parameters $c_{i,d},d$ and $p_i=\lceil i\slash 2\rceil$. We construct now the list 
$\tilde{S}_{(1)},\tilde{S}_{(2)},\ldots , \tilde{S}_{(n)},\ldots$ in such a way that the following conditions hold.
\begin{itemize}
\item[(i)] If $\tilde{S}_{i,d}$ is the $j$th machine in this list, i.e. $\tilde{S}_{i,d}=\tilde{S}_{(j)}$, then 
           the associated constant $c_{i,d}$ satisfies $c_{i,d}\leq j$. 
\item[(ii)] If $\tilde{S}_{i,d}$ is the $j$th machine in the list, then the index $k(i,d)$ of this machine with respect to the original 
            enumeration $(S_l)$ of $\Sigma_2$-machines satisfies $k(i,d)\leq j$.
\item[(iii)] The parameters $i,d$ and $k(i,d)$ such that $\tilde{S}_{(j)}=\tilde{S}_{i,d}=S_{k(i,d)}$ can be computed from $j$ in time $O(j^2)$.
\end{itemize}
We construct the list $\tilde{S}_{(1)},\tilde{S}_{(2)},\ldots$ as follows. We let $\alpha\colon {\mathbb N}\times {\mathbb N}\to {\mathbb N}$ be a bijection such that
both $\alpha$ and its inverse are efficiently computable (precisely: in time polynomial in the bit-length of the input)
and such that $\alpha(1,1)=1$. Now we add the machines into the list
in the order given by the bijection $\alpha$, but in a delayed way such as to satisfy (i)-(iii):
We define $\tilde{S}_{(1)}=\tilde{S}_{1,1}$. Now if 
$\tilde{S}_{(1)},\ldots ,\tilde{S}_{(j-1)}$ are already constructed, then we spend at most $j^2-(j-1)^2$ computation steps to do the following:

Let $\tilde{S}_{(j-1)}=\tilde{S}_{i',d'}$. Compute $\alpha(i',d')=j'$ and compute $\alpha^{-1}(j'+1)=(i,d)$. Now compute the index $k(i,d)$ and 
check if $c_{i,d}\leq j$ and if $k(i,d)\leq j$.
If $j^2-(j-1)^2=2j-1$ computation steps are not sufficient to do this or if $c_{i,d}> j$ or $k(i,d)>j$, then we define $\tilde{S}_{(j)}:=\tilde{S}_{(j-1)}$.
Otherwise we define $\tilde{S}_{(j)}:=\tilde{S}_{i,d}=S_{k(i,d)}$. 
In this way, the initial part $\tilde{S}_{(1)},\ldots , \tilde{S}_{(t)}$ of the list can be constructed in $t^2$ steps, and (i)-(iii) are satisfied.

{\sl Notation.} If $\tilde{S}_{i,d}=\tilde{S}_{(j)}$, then we let $c_{(j)}$ and $p_{(j)}$ denote the values $c_{i,d}$ and $p_i$ respectively.

We have already introduced the intervals $I_t=[\delta_{t-1}+1,\delta_t]$ on which the $\log^*$ function is equal to $t$, i.e. 
with $\delta_0=0,\delta_1=2$ and $\delta_{t+1}=2^{\delta_t},t\geq 1$.
These intervals are a partition of ${\mathbb N}$, and we have $\log^*(n)=t$ for $n\in I_t$, where $\log^*(n)=\min\{t|2^{2^{\ldots 2}}|t\:\geq n\}$.

Recall that we have to construct the union function $F$ such as to achieve three things. First, $F(n)$ is supposed to be a union function for 
$\tilde{P}=\tilde{\Sigma}_2^p$, namely such that
\begin{equation}\label{union_eq_1}
\tilde{P}=DTIM\tilde{E}(F)=\tilde{\Sigma}_2(F)=\tilde{\Sigma}_2^p.
\end{equation}
Moreover, $F(n)$ is supposed to satisfy the padding inequality 
\begin{equation}\label{union_eq_pad}
F(n)^C\leq F(n^h)\quad\quad\mbox{for all $n$.}
\end{equation} 
Finally, in order to let the padding construction be consistent with 
the definition of Property $[\star]$, we also need to assure that 
\begin{equation}\label{union_eq_2}
\tilde{P}=DTIM\tilde{E}(F(n+1))=\tilde{\Sigma}_2(F(n+1))=\tilde{\Sigma}_2^p.
\end{equation}
Recall that in general a union function is constructed in terms of guesses $(\tilde{S}_{(i)},b_i)$ and selecting lexicographically smallest violated guesses
and diagonalizing against them. Now it is not difficult to satisfy the conditions (\ref{union_eq_1}) and (\ref{union_eq_2}) simultaneously: 
We just maintain in every stage two different kinds of guesses for every machine $\tilde{S}_{(i)}$. In stage $n$ of the construction, one of these guesses is 
checked for violation at input length $n$, and the other one at input length $n-1$.

Therefore, in stages $n\in I_t=[\delta_{t-1}+1,\delta_{t}]$, the list ${\mathcal L}$ of guesses consists of two sublists
${\mathcal L}_{1}$ and ${\mathcal L}_{2}$. Both sublists contain guesses for the machines
$\tilde{S}_{(1)},\ldots , \tilde{S}_{(t)}$. In stage $n$, guesses in the list ${\mathcal L}_{1}$ are tested for violations at input length $n$, and 
guesses in the sublist ${\mathcal L}_{2}$ are checked for violations at input length $n-1$ (i.e. if $\mbox{time}_{\tilde{S}_{(i)}}(n-1)>b_i(n-1)^{b_i}$) . 
The two guesses for a machine $\tilde{S}_{(i)}$ in list ${\mathcal L}_{1}$ and ${\mathcal L}_{2}$ are
treated independently. Each guess $(\tilde{S}_{(i)},b_i)$ has the property that $b_i\leq \log^*(n)=t$. 

The next detail in the construction guarantees that the padding inequality (\ref{union_eq_pad}) will be satisfied.
In the construction, we distinguish between stages $n$ such that $n$ is an $h$-power and stages $n$ such that $n$ is not an $h$-power. In the case when 
$n$ is an $h$-power, we select a smallest violated guess $(\tilde{S}_{(i)},b_i)$ from 
the list ${\mathcal L}_{1}\cup {\mathcal L}_{2}$ with respect to the following order: 
first ordered increasingly by the value $b_i$, then increasingly by the index of the sublist ($1$ or $2$) which they belong to, and then by the index $i$.
If a guess $(\tilde{S}_{(i)},b_i)\in {\mathcal L}_{1}$ is selected , we set $F(n)=n^{b_i}$ and replace 
$(\tilde{S}_{(i)},b_i)$ by the guess  $(\tilde{S}_{(i)},\log^*(n))$ in ${\mathcal L}_{1}$. If 
a guess $(\tilde{S}_{(i)},b_i)\in {\mathcal L}_{2}$ is selected, we define $F(n)$ as $(n-1)^{b_i}$ and 
replace the guess $(\tilde{S}_{(i)},\log^*(n))$ in the list ${\mathcal L}_{2}$. 
Note that since $S_1$ is a $\Sigma_2$-machine which runs to infinity on every input, 
the list ${\mathcal L}_n$ will always contain at least one violated guess.

In the case when $n$ is not an $h$-power, we proceed differently. We maintain two additional lists ${\mathcal L}'_1,{\mathcal L}'_2$ which are constructed as follows.
At the beginning of each interval $I_t$, we consider the lists ${\mathcal L}_1,{\mathcal L}_2$ and let
for $j=1,2$ the list ${\mathcal L}'_j$ consist of all guesses $(\tilde{S}_{(i)},b_i-p_{(i)})$ such that the guess $(\tilde{S}_{(i)},b_i)$ is contained in 
${\mathcal L}_j$. Now in stages $n$ such that $n$ is not an $h$-power, we select violated guesses of the form $(\tilde{S}_{(i)},b_i)$ or $(\tilde{S}_{(i)},b_i-p_{(i)})$ from the 
{\sl extended list} ${\mathcal L}_1\cup {\mathcal L}'_1\cup {\mathcal L}_2\cup {\mathcal L}'_2$,
namely with respect to the following order: First ordered increasingly by the value $b_i$, then by the index $j$ of the sublist (i.e. $1$ or $2$), then by the second entry
($b_i$ or $b_i-p_{(i)}$) of the guess and then by the machine index $i$.
The reason for this particular order will become clear in the proof 
of Lemma \ref{ineq_lemma}, where we show that the function $F$ satisfies the padding inequality $F(n)^C\leq F(n^h)$. 

Now the function value $F(n)$ and the list updates are defined as follows.
\begin{itemize}
\item If some guess $(\tilde{S}_{(i)},b_i)$ from ${\mathcal L}_1$ is selected, $F(n)$ is defined as $n^{b_i}$, and the guess is replaced by $(\tilde{S}_{(i)},\log^*(n))$ in 
${\mathcal L}_1$. 
\item If some guess $(\tilde{S}_{(i)},b_i)$ from ${\mathcal L}_2$ is selected, $F(n)$ is defined as $(n-1)^{b_i}$, and the guess is replaced by $(\tilde{S}_{(i)},\log^*(n))$ in
${\mathcal L}_2$. 
\item If a guess $(\tilde{S}_{(i)},b_i-p_{(i)})$
is selected from ${\mathcal L}'_1$, $F(n)$ is defined as $n^{b_i-p_{(i)}}$, the guess is removed from ${\mathcal L}'_1$ and the guess
$(\tilde{S}_{(i)},b_i)$ in the list  ${\mathcal L}_1$ is replaced by $(\tilde{S}_{(i)},\log^*(n))$. 
\item If a guess $(\tilde{S}_{(i)},b_i-p_{(i)})$ is selected from ${\mathcal L}'_{2}$, $F(n)$ is defined as $(n-1)^{b_i-p_{(i)}}$, the guess is 
removed from ${\mathcal L}'_2$ and the guess $(\tilde{S}_{(i)},b_i)$ in the list  ${\mathcal L}_2$ is replaced by $(\tilde{S}_{(i)},\log^*(n))$.
\end{itemize}
At the end of each stage $\delta_{t}$ (the last stage within an interval $I_t$, i.e. immediately before the value $\log^*(n)$ increases by $1$), 
a new machine $\tilde{S}_{(t+1)}$ enters the lists. Namely, the guess $(\tilde{S}_{(t+1)},t+1)$ is added to the list ${\mathcal L}_1$ and to the list ${\mathcal L}_2$. 
\begin{itemize}
\item[] {\sl Notation:} For $j=1,2$, let ${\mathcal L}_{n,j}$ denote the list ${\mathcal L}_j$ at the beginning of stage $n$ of the construction, and 
        let  ${\mathcal L}'_{n,j}$ denote the list ${\mathcal L}'_j$ at the beginning of stage $n$.
\end{itemize}
Directly from this construction it follows that the following invariants hold during this construction.\\
\begin{itemize} 
\item The maximum $b$-value $b_n^*$ in the list ${\mathcal L}_n={\mathcal L}_{n,1}\cup {\mathcal L}_{n,2}$ 
      at the beginning of stage $n$ satisfies $b_n^*=t=\log^*(n)$ for all $n\in I_t$.
\item Both lists ${\mathcal L}_{n,1}$ and ${\mathcal L}_{n,2}$ are of size $t=\log^*n$.
\item The extended list ${\mathcal L}_{n,1}\cup {\mathcal L}'_{n,1}\cup {\mathcal L}_{n,2}\cup {\mathcal L}'_{n,2}$ is of size at most $4\log^*(n)$.
\end{itemize}
Recall that in stage $n$, list ${\mathcal L}_{n,1}$ is used for diagonalization against violations at input length $n$ and 
the list ${\mathcal L}_{n,2}$ for diagonalization against violations at input length $n-1$.
This will guarantee that the conditions (\ref{union_eq_1}) and (\ref{union_eq_2}) holds. This condition, combined with the padding inequality (\ref{union_eq_pad}), will 
be crucial in the padding construction in the proof of Lemma \ref{padding_lemma}.
%
%
%
We are now ready to give a pseudocode description of the construction of the union function $F$.\\[1.2ex]
        {\bf\underline{Construction of $\mathbf{F(n)}$}}\\[0.6ex]
        {\bf Stage $\mathbf{1}$ (Initialization):}\\
        \hspace*{0.5cm} ${\mathcal L}_1:={\mathcal L}_2:={\mathcal L}'_1:={\mathcal L}'_2:=\{(\tilde{S}_{(1)},1)\}$\\
        \hspace*{0.5cm} $F(1):=1$\\[0.6ex]
        {\bf Stage $\mathbf{n>1}$:}\\
        \hspace*{0.5cm} Let $t\in {\mathbb N}$ such that $n\in I_t$, i.e. $t=\log^*n$.\\
        \hspace*{0.5cm} {\bf If} $n=\delta_{t-1}+1$ is the first stage in the interval $I_t$\\
        \hspace*{0.9cm} For $j=1,2$, let ${\mathcal L}'_j:=\:\{(\tilde{S}_{(i)},b_i-p_{(i)})\mid (\tilde{S}_{(i)},b_i)\in {\mathcal L}_j\}$\\
        \hspace*{0.5cm} {\bf If} $n$ is an $h$-power\\
        \hspace*{0.9cm} Select the smallest violated guess $(\tilde{S}_{(i)},b_{i})\in {\mathcal L}_1\cup {\mathcal L}_2$ with respect to the\\
        \hspace*{0.9cm} lexicographic order (first ordered by $b_i$, then by the list index $j=1$ or $2$ and then by $i$).\\
        \hspace*{0.9cm} {\bf If} $(\tilde{S}_{(i)},b_{i})$ is selected from ${\mathcal L}_1$\\
        \hspace*{1.1cm} Set $F(n):=n^{b_i}$ and replace $(\tilde{S}_{(i)},b_{i})$ by $(\tilde{S}_{(i)},\log^*(n))$ in ${\mathcal L}_1$\\
         \hspace*{0.9cm} {\bf else}\\
        \hspace*{1.1cm} Set $F(n):=(n-1)^{b_i}$ and replace $(\tilde{S}_{(i)},b_{i})$ by $(\tilde{S}_{(i)},\log^*(n))$ in ${\mathcal L}_2$\\
        \hspace*{0.5cm} {\bf else} $\slash\star$ $n$ is not an $h$-power $\star\slash$\\
        \hspace*{0.9cm} Select the smallest violated guess from ${\mathcal L}_1\cup {\mathcal L}'_1\cup {\mathcal L}_2\cup {\mathcal L}'_2$\\
        \hspace*{0.9cm} with respect to the following order:\\
        \hspace*{1.1cm} First ordered by $b_i$, then by the list index $j\in\{1,2\}$, then by the second entry\\
        \hspace*{1.1cm} (i.e. the entry $b_i$ or $b_i-p_{(i)}$ respectively), and then by machine index $i$\\
        \hspace*{0.9cm} {\bf If} a guess $(\tilde{S}_{(i)},b_{i}-p_{(i)})$ is selected from ${\mathcal L}'_{1}$\\
        \hspace*{1.3cm} Set $F(n):=n^{b_{i}-p_{(i)}}$\\
        \hspace*{1.3cm} Replace $(\tilde{S}_{(i)},b_{i})$ by $(\tilde{S}_{(i)},\log^*(n))$ in ${\mathcal L}_1$\\
        \hspace*{1.3cm} Remove $(\tilde{S}_{(i)},b_{i}-p_{(i)})$ from ${\mathcal L}'_1$\\
        \hspace*{0.9cm} {\bf else if } a guess $(\tilde{S}_{(i)},b_{i}-p_{(i)})$ is selected from ${\mathcal L}'_{2}$\\
        \hspace*{1.3cm} Set $F(n):=(n-1)^{b_{i}-p_{(i)}}$, replace $(\tilde{S}_{(i)},b_{i})$ by $(\tilde{S}_{(i)},\log^*(n))$ in ${\mathcal L}_2$\\
        \hspace*{1.3cm} and remove $(\tilde{S}_{(i)},b_{i}-p_{(i)})$ from ${\mathcal L}'_{2}$\\
        \hspace*{0.9cm} {\bf else} let $(\tilde{S}_{(i)},b_{i})$ be the guess which is selected, say from list ${\mathcal L}_j$.\\
        \hspace*{1.3cm} {\bf If} $j=1$, set $F(n):=n^{b_{i}}$ and replace $(\tilde{S}_{(i)},b_{i})$ by $(\tilde{S}_{(i)},\log^*(n))$ in ${\mathcal L}_1$.\\
        \hspace*{1.3cm} {\bf else} set $F(n):=(n-1)^{b_{i}}$ and replace $(\tilde{S}_{(i)},b_{i})$ by $(\tilde{S}_{(i)},\log^*(n))$ in ${\mathcal L}_2$.\\
        \hspace*{0.5cm} {\bf If} $n=\delta_{t}$ is the last stage in the interval $I_t$\\
        \hspace*{0.9cm} {\bf Add} the guess $(\tilde{S}_{(t+1)},t+1)$ both to the list ${\mathcal L}_1$ and to the list ${\mathcal L}_2$.\\
        {\bf End of Stage $\mathbf{n}$}\\

\begin{lemma}\label{union_lemma}
The function $F$ satisfies 
\[P=\tilde{P}=\mbox{DTIM}\tilde{\mbox{E}}(F)=\mbox{DTIM}\tilde{\mbox{E}}(F(n+1))=\tilde{\Sigma}_2(F(n+1))=\tilde{\Sigma}_2(F)=\tilde{\Sigma}_2^p=\Sigma_2^p.\]
\end{lemma}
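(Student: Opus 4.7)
From Lemma \ref{classes_lemma} we already have $P = \tilde{P}$ and $\Sigma_2^p = \tilde{\Sigma}_2^p$, and the standing assumption says $P = \Sigma_2^p$; so the outer equalities collapse and the plan is to identify both of these classes with $\mbox{DTIM}\tilde{\mbox{E}}(F)$, $\mbox{DTIM}\tilde{\mbox{E}}(F(n+1))$, $\tilde{\Sigma}_2(F)$ and $\tilde{\Sigma}_2(F(n+1))$. I would do this by the standard Union Theorem dichotomy, proving (A) for each $i$, if $\mbox{time}_{\tilde{S}_{(i)}}(n)$ is polynomially bounded, then $\mbox{time}_{\tilde{S}_{(i)}}(n)$ is both $O(F(n))$ and $O(F(n+1))$, and (B) if $\mbox{time}_{\tilde{S}_{(i)}}(n)$ is not polynomially bounded, then it is neither $O(F(n))$ nor $O(F(n+1))$. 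Granting (A) and (B), the inclusions $\tilde{\Sigma}_2^p \subseteq \tilde{\Sigma}_2(F)$ and $\tilde{\Sigma}_2^p \subseteq \tilde{\Sigma}_2(F(n+1))$ come from (A), their reverses are the contrapositives of (B), and the deterministic case is subsumed because $\tilde{S}_{i,d}$ is deterministic whenever $S_i$ is.

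For (B), I would diagonalize with the guess $(\tilde{S}_{(i)}, b_i)$ in ${\mathcal L}_{n,1}$. Since $\mbox{time}_{\tilde{S}_{(i)}}$ is super-polynomial, for every fixed $b_i$ the inequality $\mbox{time}_{\tilde{S}_{(i)}}(n) > b_i n^{b_i}$ holds at infinitely many $n$, so the guess is violated and selected infinitely often; each selection raises $b_i$ to $\log^*(n)$, driving the $b_i$ values at selection stages to infinity. At any selection stage $n$, $F(n) = n^{b_i}$ with $\mbox{time}_{\tilde{S}_{(i)}}(n) > b_i \cdot F(n)$, which suffices once $b_i$ exceeds any preassigned constant $C$. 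The same argument in ${\mathcal L}_{n,2}$, which tests at input length $n-1$ and sets $F(n) = (n-1)^{b_i}$ on selection, after the index shift $n \mapsto n+1$ refutes $\mbox{time}_{\tilde{S}_{(i)}}(n) = O(F(n+1))$.

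For (A), fix $\tilde{S}_{(i^*)}$ with $\mbox{time}_{\tilde{S}_{(i^*)}}(n) \leq b \cdot n^b$ for some constant $b$. The driving observation is that the $b$-values in ${\mathcal L}_{n,1} \cup {\mathcal L}_{n,2}$ are non-decreasing in $n$ (updates only replace $b_j$ by $\log^*(n) \geq b_j$), and new entries enter with $b$-value $t+1 \to \infty$, so for every fixed threshold, after some finite stage every guess $(\tilde{S}_{(j)}, b_j) \in {\mathcal L}_{n,1} \cup {\mathcal L}_{n,2}$ either already has $b_j$ above the threshold, or belongs to a polynomially bounded $\tilde{S}_{(j)}$ whose $b_j$ climbs past the threshold in finitely many selections, or to a super-polynomial $\tilde{S}_{(j)}$ whose $b_j$ still climbs without bound. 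At each sufficiently large $h$-power stage $n$, the selected guess therefore has $b$-value $\geq b$, which gives $F(n) \geq n^b$ and $\mbox{time}_{\tilde{S}_{(i^*)}}(n) \leq b \cdot F(n)$; the parallel analysis using ${\mathcal L}_{n,2}$ handles $F(n+1)$.

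The main obstacle is the non-$h$-power stages, at which the selected guess may come from the extended list ${\mathcal L}'_{n,1} \cup {\mathcal L}'_{n,2}$ and carry effective $b$-value $b_j - p_{(j)}$ that sits below $b$ even when $b_j$ is large, since $p_{(j)} = \lceil i/2 \rceil$ can be comparable to $b_j$. I would combine three facts to pin down a finite exceptional set: each ${\mathcal L}'_j$ guess is removed after at most one selection per interval $I_t$; the selection rule simultaneously refreshes the corresponding ${\mathcal L}_j$ entry to $(\tilde{S}_{(j)}, \log^*(n))$, tying the ${\mathcal L}'$ dynamics to the $\log^*$ clock; and the ordering conditions (i)--(iii) imposed when arranging $\tilde{S}_{(1)}, \tilde{S}_{(2)}, \ldots$ bound $c_{(j)} \leq j$ and hence $p_{(j)}$ by a slowly growing function of $j$. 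Welding these together to show $F(n) \geq n^b / O(1)$ outside a finite set of stages is where the argument will take the most care.
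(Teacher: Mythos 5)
Your overall route is the paper's: reduce the chain of equalities to the dichotomy ``$\mbox{time}_{\tilde{S}_{(i)}}$ is polynomially bounded iff it is $O(F(n))$ iff it is $O(F(n+1))$'', prove the negative direction by the standard selection argument in ${\mathcal L}_1$ resp.\ ${\mathcal L}_2$, and prove the positive direction by showing that $F$ majorizes every polynomial. Your sketch of the negative direction is essentially the paper's (the paper additionally notes that the selection may occur via a guess $(\tilde{S}_{(j)},b_j-p_{(j)})$ from ${\mathcal L}'$, in which case $F(p)=p^{\,b_j-p_{(j)}}$ while $\mbox{time}_{\tilde{S}_{(j)}}(p)>(b_j-p_{(j)})F(p)$ with $b_j-p_{(j)}$ still tending to infinity, so nothing changes).

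The step you explicitly leave open --- why selections from ${\mathcal L}'_{1}\cup{\mathcal L}'_{2}$ at non-$h$-power stages cannot keep $F$ small --- is a genuine gap, and your framing of it points in a slightly wrong direction. There is no ``finite exceptional set of stages'': since ${\mathcal L}'$ is regenerated at the start of every interval $I_t$, guesses from ${\mathcal L}'$ are selected at infinitely many stages, and this must be harmless rather than rare. What closes the argument is a quantitative fact you do not state. A machine enters the list as $(\tilde{S}_{(j)},j)$ and all later updates only increase its value, so $b_j\geq j$ at all times; by condition (i) on the arrangement of the list, $i'\leq c_{i',d}\leq j$ where $\tilde{S}_{(j)}=\tilde{S}_{i',d}$, hence $p_{(j)}=\lceil i'/2\rceil\leq \lceil j/2\rceil\leq b_j/2+1$ and therefore \emph{every} ${\mathcal L}'$-guess carries value $b_j-p_{(j)}\geq b_j/2-1$. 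Consequently, every guess created (by entry or by replacement) after the first stage $n_a$ with $\log^*(n_a)\geq 2(a+1)$ has $b$-value at least $2(a+1)$ and discounted value at least $a$, and the finitely many guesses already present at stage $n_a$ with possibly smaller values are each selected only finitely often; so $F(m)\geq (m-1)^{a}$ for almost every $m$, which is the majorization needed for your part (A). Your third ingredient (``$p_{(j)}$ bounded by a slowly growing function of $j$'') is not the relevant comparison --- $p_{(j)}$ is linear in $j$ --- the point is the comparison of $p_{(j)}$ with $b_j$ via $b_j\geq j$. With this observation inserted, your plan coincides with the paper's proof.
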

\begin{proof} 
It is sufficient to prove that the equations $\tilde{\Sigma}_2(F)=\tilde{\Sigma}_2(F(n+1))=\tilde{\Sigma}_2^p$ 
and $\mbox{DTIM}\tilde{\mbox{E}}(F)=\mbox{DTIM}\tilde{\mbox{E}}(F(n+1))=\tilde{P}$ hold, since we have already shown 
in Lemma \ref{classes_lemma} that
$P=\tilde{P}=\tilde{\Sigma}_2^p=\Sigma_2^p$ holds. Now, in order to show $\tilde{\Sigma}_2(F)=\tilde{\Sigma}_2(F(n+1))=\tilde{\Sigma}_2^p$ 
and $\mbox{DTIM}\tilde{\mbox{E}}(F)=\mbox{DTIM}\tilde{\mbox{E}}(F(n+1))=\tilde{P}$, 
it is sufficient to 
show that for each machine $\tilde{S}_{(i)}$, the following three properties (I), (II) and (III) are equivalent:
\begin{itemize}
\item[(I)] $\tilde{S}_{(i)}$ is polynomially time bounded, i.e. there exists a constant $a$ such that $\mbox{time}_{\tilde{S}_{(i)}}(n)\leq a\cdot n^a$ for all $n$. 
\item[(II)] There exists a constant $b$ such that $\mbox{time}_{\tilde{S}_{(i)}}(n)\leq b\cdot F(n)$ a.e.
\item[(III)] There exists a constant $b$ such that $\mbox{time}_{\tilde{S}_{(i)}}(n)\leq b\cdot F(n+1)$ a.e.
\end{itemize}
Let us first show the implication (I)$\Rightarrow$(II). In order to show that for every machine  $\tilde{S}_{(i)}$, (I) implies (II), it suffices to show
that 
\begin{equation}\label{majo_eq}
\mbox{for all $a\in {\mathbb N}$, $F(n)\:\geq\: a\cdot n^a$ a.e.}
\end{equation}
which means that the union function $F(n)$ majorizes every polynomial. So let us show that (\ref{majo_eq}) holds. 
Let $a\in {\mathbb N}$. We have to show that $F(n)\geq a\cdot n^a$ almost everywhere.
Recall that ${\mathcal L}_n={\mathcal L}_{n,1}\cup {\mathcal L}_{n,2}$ denotes the list of guesses ${\mathcal L}={\mathcal L}_1\cup {\mathcal L}_2$ 
at the beginning of stage $n$. 
We let ${\mathcal L}'_n={\mathcal L}'_{n,1}\cup {\mathcal L}'_{n,2}$. 
Moreover, $b_n^*=\log^*(n)$ is the maximum $b$-value in the list ${\mathcal L}_n$.
In the construction of the function $F$, the value $b_n^*$ is increased by $1$ at every stage in which a new guess enters the list. 
For a given integer $a$, let $n_a$ be the smallest integer such that $b_n^*\geq 2(a+1)$. The list ${\mathcal L}_{n_a}\cup {\mathcal L}'_{n_a}$ is finite. Thus the set of guesses 
in ${\mathcal L}_{n_a}\cup {\mathcal L}'_{n_a}$ which will ever be violated and selected in some stage $m\geq n_a$ is finite. 
Say at some stage $n_1\geq n_a$, the last such guess is selected in the construction
of the function $F$. This means that from that stage on, i.e. for all stages $m\geq n_1+1$, 
the guess - say $(\tilde{S}_{(i)},b_{i})$ or $(\tilde{S}_{(i)},b_{i}-p_{(i)})$ -  which is selected in stage $m$
satisfies $b_{i}>b_i-p_{(i)}\geq a+1$. Then we have $F(m)\geq (m-1)^{b_{i}-p_{(i)}}\geq (m-1)^{a+1}$. Moreover, for $m$ being sufficiently large, 
we have $(m-1)^{a+1}>a\cdot m^a$, which yields that for $m$ being sufficiently large, $F(m)>a\cdot m^a$. Thus we have shown that (I) implies (II).\\[0.3ex] 

Now we show the implication (II)$\Rightarrow$(I), namely by showing that $\neg$(I) implies $\neg$(II). 
In stage $n$ in the construction of the function $F$, the function value $F(n)$ is determined based on the list ${\mathcal L}_n$ 
in case when $n$ is an $h$-power. Otherwise, if 
$n$ is not an $h$-power, $F(n)$ is determined based on the extended list 
\[{\mathcal L}_n\cup {\mathcal L}'_n\: =\: {\mathcal L}_{n,1}\cup  {\mathcal L}'_{n,1}\cup {\mathcal L}_{n,2}\cup  {\mathcal L}'_{n,2}.\] 
Let us denote by ${\mathcal L}(n)$ the list of guesses which are taken into account in stage $n$ of the construction, i.e. 
${\mathcal L}(n)$ is defined as ${\mathcal L}_n$ in case when $n$ is an $h$-power, and ${\mathcal L}(n)={\mathcal L}_n\cup {\mathcal L}'_n$ in case when $n$ is not an $h$-power.
So we can say for every stage $n$ that $F(n)$ is determined based on the list ${\mathcal L}(n)$.
In the construction of the union function $F(n)$, the guesses in ${\mathcal L}(n)$ are linearly ordered in the following way: 
First guesses are ordered increasingly by $b_i$, then by the list index $j\in\{1,2\}$, then by the second entry ($b_i$ or $b_i-p_{(i)}$) and then by the index $i$.
In each stage $n$, the smallest violated guess from ${\mathcal L}(n)$ with respect to this order is selected. 
Note that since $S_1$ and therefore also $\tilde{S}_{(1)}$ is a machine which runs to infinity on every 
input, in each stage $n$ of the construction there is at least one guess in the list which is violated at stage $n$.

Now suppose $\tilde{S}_{(j)}$ is a $\Sigma_2$-machine whose running time is not polynomially bounded, i.e.
such that for every $a\in {\mathbb N}$, $\mbox{time}_{\tilde{S}_{(j)}}(n)>a\cdot n^a$ infinitely often.
This directly implies that for every $b_{j}\in {\mathbb N}$, the guesses $(\tilde{S}_{(j)},b_{j})$ and $(\tilde{S}_{(j)},b_{j}-p_{(j)})$ are violated infinitely often.
For each such $b_{j}$, the number of guesses
$(\tilde{S}_{(l)},b_{l})$ which eventually occur in the extended list ${\mathcal L}\cup {\mathcal L'}$ and are smaller than $(\tilde{S}_{(j)},b_{j})$ 
or $(\tilde{S}_{(j)},b_{j}-p_{(j)})$ is finite.
Whenever such a guess is violated and selected in the construction of $F$ at some stage $n$ such that ${\mathcal L}(n)$ already contains the 
guess $(\tilde{S}_{(j)},b_{j})$, it is replaced by a guess $(\tilde{S}_{(l)},b_n^*)=(\tilde{S}_{(l)},\log^*(n))$. 
Since $b_n^*=\log^*(n)$ is a monotone increasing unbounded function of $n$,
after a finite number of stages, 
the guess $(\tilde{S}_{(j)},b_{j})$ is the smallest guess in the list ${\mathcal L}\cup {\mathcal L}'$ for which one of the guesses 
$(\tilde{S}_{(j)},b_{j}),\: (\tilde{S}_{(j)},b_{j}-p_{(j)})$ will ever be violated and selected again. Furthermore we will have
$b_{j}<b_n^*=\log^*(n)$ if $n$ is sufficiently large.
From that point on, whenever one of the guesses 
$(\tilde{S}_{(j)},b_{j}),\: (\tilde{S}_{(j)},b_{j}-p_{(j)})$ is violated again 
and is contained in the list ${\mathcal L}\cup {\mathcal L}'$, 
say in some stage $p$, 
it will be selected in the construction of $F$, which means that $F(p)=p^{b}$ and
$\mbox{time}_{\tilde{S}_{(j)}}(p)>b\cdot p^{b}$ for some $b\in\{b_j,b_j-p_{(j)}\}$. 
Then the guess $(\tilde{S}_{(j)},b_{j})$ will be replaced by some guess $(\tilde{S}_{(j)},b_p^*)$ with 
$b_p^*\geq b_{j}+1$. Since by assumption, every guess for $\tilde{S}_{(j)}$ is violated infinitely often, this yields a monotone increasing unbounded sequence of
integers 
$b_{j,1}<b_{j,2}<b_{j,3}\leq\ldots$ such that each guess $(\tilde{S}_{(j)},b_{j,l})$ will eventually be in the list of guesses ${\mathcal L}\cup {\mathcal L}'$ 
and such that this guess will eventually be selected in the construction of $F$, say in some stage $p_l$. Since for all $l\in {\mathbb N}$,
$F(p_l)=p_l^{b_{j,l}}$ and $\mbox{time}_{\tilde{S}_j}(p_l)>b_{j,l}\cdot p_l^{b_{j,l}}$, there cannot exist any constant $b$ such that 
$\mbox{time}_{\tilde{S}_{(j)}}(n)\leq b\cdot F(n)$ for almost all $n$. Thus we have shown that $\neg$(I) implies $\neg$(II).\\[0.6ex]
Since we have already shown that the function $F(n)$ (and therefore also the function $n\mapsto F(n+1)$) majorizes every polynomial, this immediately yields that 
(I) implies (III). Now the proof that $\neg$(I) implies $\neg$(III) is basically the same as the proof of $\neg$(I)$\Rightarrow\neg$(II):
Suppose that $\tilde{S}_{(j)}$ is a machine for which (I) does not hold, i.e. the running time of $\tilde{S}_{(j)}$ is not polynomially bounded. Then for every
integer $b_j$, both guesses $(\tilde{S}_{(j)},b_j)$ and $(\tilde{S}_{(j)},b_j-p_{(j)})$ will be violated infinitely often, i.e. for infinitely many integers $n$ 
they will be violated at length $n-1$. For every such $b_j$ such that $(\tilde{S}_{(j)},b_j)$ is eventually contained in the list ${\mathcal L}_2$, 
only finitely many guesses in the list ${\mathcal L}\cup {\mathcal L}'$ will have a higher priority of being selected than the guess $(\tilde{S}_{(j)},b_j)$ or 
$(\tilde{S}_{(j)},b_j-p_{(j)})$ respectively. From some input length on, whenever such a guess is selected, its $b$-value will be updated to a value greater than 
$b_{(j)}$. Thus after finitely many stages, the following holds: Whenever $(\tilde{S}_{(j)},b_j)$ or $(\tilde{S}_{(j)},b_j-p_{(j)})$ is violated again at some input length $n-1$,
then in stage $n$ of the construction of the union function,
it will have the highest priority and therefore be selected from the list ${\mathcal L}_2$ or ${\mathcal L}'_2$. 
If $(\tilde{S}_{(j)},b_j)$ is selected from list ${\mathcal L}_2$ in stage $n$, then 
this means that $\mbox{time}_{\tilde{S}_{(j)}}(n-1)>b_j\cdot (n-1)^{b_j}$, and the value of the union function is defined as 
$F(n)=(n-1)^{b_j}$. 
If $(\tilde{S}_{(j)},b_j-p_{(j)})$ is selected from the list ${\mathcal L}'_2$ in stage $n$, then the value of the union function 
is $F(n)=(n-1)^{b_j-p_{(j)}}$, while the running time of the machine $\tilde{S}_{(j)}$ at input length $n-1$ satisfies $\mbox{time}_{\tilde{S}_{(j)}}(n-1)>(b_j-p_{(j)})(n-1)^{b_j-p_{(j)}}$.
In both cases we have $\mbox{time}_{\tilde{S}_{(j)}}(n-1)>(b_j-p_{(j)})F(n)$, and the sequence of values $b_j-p_{(j)}$ is monotone increasing and unbounded. Thus
the condition (III) does not hold. 

This concludes the proof of the lemma. 
\end{proof}
Now we are going to show that the function value $F(n)$ can be computed in time polynomial in $F(n)$. This means that there exists 
a constant $C$ such that the function $F(n)^C$ is time constructible. In particular we show that this holds for $C=10c$, where $c$ is the constant from 
Lemma \ref{sig1lemma}. Below we will then show that $F$ also satisfies the inequality
$F(n^{1\slash h})^C\leq F(n)$ for every $h$-power $n$. This allows us to apply a padding technique in order to show that 
we also have $\mbox{DTIM}\tilde{\mbox{E}}(F^{C^2})=\tilde{\Sigma}_2(F^{C^2})$, which will yield a contradiction. 
\begin{lemma}\label{F_computation_lemma}
There is a constant $C$ such that $F(n)$ can be computed deterministically in time $F(n)^C$. More precisely, there exists a deterministic algorithm which gets 
as an input the integer $n$ and computes the function value $F(n)$ in at most $F(n)^C$ steps.
\end{lemma}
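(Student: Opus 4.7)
The algorithm $\mathcal{A}$ computes $F(n)$ by a direct stage-by-stage simulation of the construction: it processes $m=1,2,\ldots,n$ in order, maintaining the four lists ${\mathcal L}_1$, ${\mathcal L}_2$, ${\mathcal L}'_1$, ${\mathcal L}'_2$, regenerating ${\mathcal L}'_j$ at the start of each interval $I_t$, selecting the smallest violated guess in the prescribed sort order, setting $F(m)$ and updating the lists, and appending $(\tilde{S}_{(t+1)},t+1)$ to ${\mathcal L}_1$ and ${\mathcal L}_2$ at stage $\delta_t$. The only nontrivial primitive is testing whether a guess $(\tilde{S}_{(j)},b)$ is violated at input length $m$ (or $m-1$); since $\tilde{S}_{(j)}$ is itself a $\Sigma_2$-machine, Lemma \ref{sig1lemma} resolves this in time $c\cdot(k(j)\cdot b\cdot m^b)^c$, where $k(j)$ is the index of $\tilde{S}_{(j)}$ in the original enumeration $(S_l)$.

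Writing $t=\log^*(m)$, the construction invariants together with conditions (i)--(iii) on the arrangement $\tilde{S}_{(1)},\tilde{S}_{(2)},\ldots$ guarantee that every guess $(\tilde{S}_{(j)},b)$ present at stage $m$ satisfies $j\leq t$, $k(j)\leq j$, and $b\leq t$, and that the parameters $(i,d,k(j))$ are computable from $j$ in $O(j^2)$ steps. Hence each violation test costs at most $c\cdot(t^2\cdot m^b)^c$. Iterating through the at most $4t$ guesses of the extended list in increasing $b$-order and halting at the first violation bounds the total cost of stage $m$ by $\mathrm{poly}(t)\cdot m^{c\cdot b_{\mathrm{sel}}(m)}$, where $b_{\mathrm{sel}}(m)$ denotes the effective $b$-value of the selected guess. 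Since the pseudocode gives $F(m)\geq (m-1)^{b_{\mathrm{sel}}(m)}$ in every case, this per-stage cost is at most $F(m)^{c+o(1)}$ for $m$ sufficiently large; the polylog bookkeeping is absorbed by taking $c$ large, which is permitted by the remark following Lemma \ref{sig1lemma}.

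The hard part is to sum these per-stage costs and obtain $\sum_{m=1}^{n}\mathrm{time}_{\mathcal{A}}(m)\leq F(n)^{10c}$, which is the claim with $C=10c$ (note $C<h/2$). Using the trivial bound $F(m)\leq m^{\log^*(m)}\leq n^{\log^*(n)}$ for $m\leq n$ gives a total crude bound of $n\cdot n^{(c+o(1))\log^*(n)}$, so it suffices to establish $F(n)\geq n^{\log^*(n)/5}$ almost everywhere. This in turn reduces to showing $b_{\mathrm{sel}}(n)=\Omega(\log^*(n))$ a.e., which is the delicate step of the proof. One tracks the evolution of ${\mathcal L}_n$ and exploits that $\tilde{S}_{(1)}$ is violated at every stage and is selected whenever its guess carries the minimum $b$-value in the list, refreshing that value to $\log^*(m)$ on each selection. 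Consequently the minimum $b$-value among violated guesses in ${\mathcal L}_n$ stays close to $\log^*(n)$, forcing $b_{\mathrm{sel}}(n)$ to be $\Omega(\log^*(n))$ for almost all $n$. With the generous choice $C=10c$, the remaining polynomial and polylog factors are comfortably absorbed into the exponent, and the lemma follows.
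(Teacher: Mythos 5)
There is a genuine gap, and it is exactly the obstacle the paper's proof is built to circumvent. Your summation step rests on the claim that $F(n)\geq n^{\log^*(n)/5}$ almost everywhere, equivalently that $b_{\mathrm{sel}}(n)=\Omega(\log^*(n))$ a.e. This is false, and it \emph{must} be false for $F$ to be a union function. A machine $\tilde{S}_{(j)}$ enters the lists with the guess $(\tilde{S}_{(j)},j)$ at the end of interval $I_{j-1}$; if its running time stays below $j\cdot n^{j}$ for a very long time, that guess sits in the list unrefreshed with $b$-value $j$, and at the first (possibly enormous) stage $m$ at which it or its companion $(\tilde{S}_{(j)},j-p_{(j)})$ is violated, it is the violated guess of \emph{minimum} $b$-value and is therefore selected, giving $F(m)=m^{j}$ or $m^{j-p_{(j)}}$ with $j\ll\log^*(m)$. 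Property 2 of the union function in Section \ref{outline_section} requires precisely this behaviour infinitely often. Your heuristic for the lower bound only tracks $\tilde{S}_{(1)}$, whose $b$-value is indeed refreshed to $\log^*$ at every selection; it does not control the other guesses, which can remain at small $b$-values for arbitrarily long while satisfied, and a single such dormant guess suffices to pull $F(n)$ down the moment it becomes violated.

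Consequently the direct stage-by-stage simulation cannot run in time $F(n)^{O(1)}$: at a stage $n$ where $F(n)=n^{b}$ with $b$ small, earlier stages $m<n$ typically have $F(m)=m^{\log^*(m)}$, and merely testing the guesses needed to update the lists at those stages already costs about $m^{c\log^*(m)}\gg F(n)^{O(1)}$. The paper resolves this with a truncation device your proposal lacks: it defines auxiliary functions $F_b$ in which every $b$-value occurring in the lists is capped at $b+1$, shows that $F_b(n)$ is computable in time roughly $n^{O(bc)}$, proves the consistency statements that $F_b(n)=F(n)$ whenever $F_b(n)=n^{a}$ with $a\leq b/3$ and that $F(n)=n^{a}$ forces $F_b(n)=F(n)$ for all $b\geq 3a$, and then computes $F_1(n),F_2(n),\ldots$ until the first $b$ with $F_b(n)\leq n^{b/3}$. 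Your per-stage cost accounting is essentially correct and reappears inside the analysis of a single $F_b$, but the outer loop over truncation levels is the missing idea, and without it the claimed bound $F(n)^{10c}$ does not follow.
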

\begin{proof} We describe an algorithm which computes the function value $F(n)$ for a given $n$. Let us first give some intuition.
In order to compute the function value $F(n)$ we first have to compute the function values $F(1),\ldots, F(n-1)$, 
or at least the lists of guesses ${\mathcal L}(1),\ldots, {\mathcal L}(n-1)$. Recall that for every $m$, the list ${\mathcal L}(m)$ was defined as ${\mathcal L}_m$ in case when $m$ is an 
$h$-power, and as ${\mathcal L}_m\cup {\mathcal L}'_m$ otherwise. 
If we would compute $F(n)$ just directly along
the definition of $F$, the following problem would occur. 
It might happen that $F(n)=n^b$ for some integer $b$, but some of the previous function values have a much larger exponent, for instance
$F(m)=m^B$ for some $B\gg b$. In order to compute $F(n)$, we would first
compute $F(1),\ldots , F(n-1)$, which might then take time $\approx n^B$ in order to compute the function value $n^b$. 
In this way, we would not be able to compute $F(n)$ within time 
polynomial in $F(n)$.

The idea how to circumvent this obstacle and to compute $F(n)$ in time $F(n)^{O(1)}$ is now as follows: 
We choose some integer number $b$ and simulate the computation of the function $F(n)$, 
but within this simulation we cut off all the guesses in the lists at $b+1$. i.e. 
replace all the values $b_i,b_i-p_{(i)}$ which occur in the lists ${\mathcal L}(n)$
by the values $\min\{b_i,b+1\}$ and $\min\{b_i-p_{(i)},b+1\}$. 
We denote the function which is computed by this simulation as $F_b(n)$.
It will turn out that this function $F_b(n)$ can be computed in time $n^{O(b)}$.
Moreover, we will show that the functions $F(n)$ and $F_b(n)$ are related as follows:
\begin{itemize}
\item[] If $F(n)=n^a$, then $F_b(n)=F(n)$ for all $b\geq 3a$. \\If $F_b(n)=n^a$ with $a\leq \frac{b}{3}$, then $F(n)=F_b(n)$.
\end{itemize}
This gives then the following method for computing the function $F(n)$.
We compute $F_b(n)$ for increasing values of $b$ until we find the smallest $b$ for which $F_b(n)\leq n^{b\slash 3}$. For this $b$, we 
will then know that the function value is correct, i.e. we have $F(n)=F_b(n)$.

Let us now describe this method in detail.
Recall that the list ${\mathcal L}_n={\mathcal L}_{n,1}\cup {\mathcal L}_{n,2}$ contains guesses of the form $(\tilde{S}_{(i)},b_i)$,
and the list ${\mathcal L}'_n={\mathcal L}'_{n,1}\cup {\mathcal L}'_{n,2}$ contains guesses of the form $(\tilde{S}_{(i)},b_i-p_{(i)})$.
%
If $n\in I_t=[\delta_{t-1}+1,\delta_t]$, then the largest $b_i$ which occurs in the list ${\mathcal L}_n$ is denoted as $b_n^*$ and satisfies $b_n^*=t=\log^*(n)$.
Now we let $F_b(n)$ be the function which is computed by the modification of the algorithm for $F$ where all guesses $(\tilde{S}_{(i)},b_i)$ are
replaced by guesses $(\tilde{S}_{(i)},\min\{b_i,b+1\})$ and the guesses $(\tilde{S}_{(i)},b_i-p_{(i)})$ 
are replaced by $(\tilde{S}_{(i)},\min\{b_i-p_{(i)},b+1\})$. In the pseudocode description of the function $F_b$ below we will use the following notation.
\begin{itemize}
\item The lists of guesses are denoted as ${\mathcal L}^b={\mathcal L}_1^b\cup {\mathcal L}_2^b$ and ${\mathcal L}^{'b}={\mathcal L}_1^{'b}\cup{\mathcal L}_2^{'b}$. 
\item Guesses in the list ${\mathcal L}^b$ are denoted as $(\tilde{S}_{(i)},\beta_i)$, where $\beta_i$ denotes the value $\min\{b_i,b+1\}$.
\item Guesses in the list ${\mathcal L}^{'b}$ are denoted as $(\tilde{S}_{(i)},\pi_i)$, where 
      $\pi_i$ denotes\\ the value $\min\{b_i-p_{(i)},b+1\}$.
\end{itemize}
Now $F_b$ is the function which is computed by the following algorithm.\\
$\mbox{ }$\\
        {\bf\underline{Construction of $\mathbf{F_b(n)}$}}\\[0.6ex]
        {\bf Stage $\mathbf{1}$ (Initialization):}\\
        \hspace*{0.5cm} $F_b(1):=1,\:\:{\mathcal L}_1^b={\mathcal L}_2^b:={\mathcal L}_1^{'b}:={\mathcal L}_2^{'b}:=\{(\tilde{S}_{(1)},\beta_{1})\}$\\[0.6ex]
        {\bf Stage $\mathbf{n>1}$:}\\
        \hspace*{0.5cm} Let $t\in {\mathbb N}$ such that $n\in I_t$.\\
        \hspace*{0.5cm} {\bf If} $n=\delta_{t-1}+1$ is the first stage in the interval $I_t$\\
        \hspace*{0.9cm} For $j=1,2$ let ${\mathcal L}_j^{'b}:=\{(\tilde{S}_{(i)},\pi_i)\mid (\tilde{S}_{(i)},\beta_i)\in {\mathcal L}^b_j,\: \pi_i=\beta_i-p_{(i)}\}$\\
        \hspace*{0.5cm} {\bf If} $n$ is an $h$-power\\ 
        \hspace*{0.9cm} Let $(\tilde{S}_{(i)},\beta_{i})$ be the smallest violated guess in ${\mathcal L}_1^b\cup {\mathcal L}_2^b$\\
        \hspace*{0.9cm} (in lexicographic order, first ordered by $\beta_i$, then by the list index $j=1$ or $2$, then by $i$)\\
        \hspace*{0.9cm} {\bf If} $j=1$\\
        \hspace*{1.1cm} Set $F_b(n):=n^{\beta_i}$\\
        \hspace*{1.1cm} Replace $(\tilde{S}_{(i)},\beta_{i})$ by $(\tilde{S}_{(i)},\min\{\log^*(n),b+1\})$ in ${\mathcal L}_1^b$\\
        \hspace*{0.9cm} {\bf else}\\ 
        \hspace*{1.1cm} Set $F_b(n):=(n-1)^{\beta_i}$\\
        \hspace*{1.1cm} Replace $(\tilde{S}_{(i)},\beta_{i})$ by $(\tilde{S}_{(i)},\min\{\log^*(n),b+1\})$ in ${\mathcal L}_2^b$\\
        \hspace*{0.5cm} {\bf else} $\slash\star$ $n$ is not an $h$-power $\star\slash$\\
        \hspace*{0.9cm} Select the smallest violated guess from ${\mathcal L}_1^b\cup{\mathcal L}_1^{'b}\cup {\mathcal L}_2^b\cup {\mathcal L}_2^{'b}$\\
        \hspace*{0.9cm} with respect to the following order: first ordered by $\beta_i$, then by\\
        \hspace*{0.9cm} the list index $j$, then by the second entry $\beta_i$ or $\pi_i=\beta_i-p_{(i)}$ and then by $i$)\\
        \hspace*{0.9cm} {\bf If} a guess $(\tilde{S}_{(i)},\pi_i)$ is selected from ${\mathcal L}_j^{'b}$\\
        \hspace*{1.3cm} Set $F_b(n):=n^{\beta_{i}-p_{(i)}}$ in case $j=1$, $F_b(n):=(n-1)^{\beta_{i}-p_{(i)}}$ in case $j=2$\\
        \hspace*{1.3cm} Replace $(\tilde{S}_{(i)},\beta_{i})$ by $(\tilde{S}_{(i)},\min\{\log^*(n),b+1\})$ in ${\mathcal L}_j^b$.\\
        \hspace*{1.3cm} Remove  $(\tilde{S}_{(i)},\pi_i)$ from ${\mathcal L}_j^{'b}$\\
        \hspace*{0.9cm} {\bf else} \\
        \hspace*{1.3cm} Let $(\tilde{S}_{(i)},\beta_{i})$ be the guess which is selected, say from list ${\mathcal L}_j^b$.\\
        \hspace*{1.3cm} Set $F_b(n):=n^{\beta_i}$ for $j=1$, $(n-1)^{\beta_i}$ for $j=2$.\\
        \hspace*{1.3cm} Replace $(\tilde{S}_{(i)},\beta_i)$ by $(\tilde{S}_{(i)},\min\{\log^*(n),b+1\})$ in ${\mathcal L}_j^b$.\\
        \hspace*{0.5cm} {\bf If} $n=\delta_{t}$ is the last stage in the interval $I_t$\\
        \hspace*{0.9cm} {\bf Add} the guess $(\tilde{S}_{(t+1)},\beta_{t+1})$ to the list ${\mathcal L}_1^b$ and to ${\mathcal L}_2^b$\\
        \hspace*{1.2cm} both with $\beta_{t+1}=\min\{t+1,b+1\}$\\
        {\bf End of Stage $\mathbf{n}$}\\

$\mbox{ }$\\
So as to recall, in the computation of the union function $F(n)$ we maintain a list ${\mathcal L}={\mathcal L}_1\cup {\mathcal L}_2$ of guesses
of the form $(\tilde{S}_{(i)},b_i)$, together with the list ${\mathcal L}'={\mathcal L}'_1\cup {\mathcal L}'_2$, that contains
guesses of the form $(\tilde{S}_{(i)},b_i-p_{(i)})$. Now in the computation of the function $F_b(n)$ we maintain a list 
${\mathcal L}^b={\mathcal L}_1^b\cup {\mathcal L}_2^b$ which contains guesses $(\tilde{S}_{(i)},\beta_i)$. The values $\beta_i$ are generated as follows.
When a new guess with a machine $\tilde{S}_{(i)}$ enters the list at the end of a stage $n-1$, instead of the original value $b_i=\log^*n$ it gets 
the value $\beta_i=\min\{b_i,b+1\}=\min\{\log^*n,b+1\}$. 
Moreover, the lists ${\mathcal L}_1^{'b}$ and ${\mathcal L}_2^{'b}$ contain the guess $(\tilde{S}_{(i)},\pi_i)$, where 
$\pi_i= \min\{\log^*(n)-p_{(i)},b+1\}$. 
When a guess $(\tilde{S}_{(i)},b_i)$ or $(\tilde{S}_{(i)},b_i-p_{(i)})$ is selected
in the construction of $F(n)$. it is replaced by $(\tilde{S}_{(i)},\log^*n)$. In the computation of $F_b(n)$, 
a guess $(\tilde{S}_{(i)},\beta_i)$ is now replaced by $(\tilde{S}_{(i)},\min\{\log^*n,b+1\})$.

Let us call a guess $(\tilde{S}_{(i)},\beta_i)$ correct in ${\mathcal L}_n^b$ if $\beta_i<b+1$. Let us now take a look at a stage $n$ in the computation 
of the function $F_b$. Violated guesses in the lists ${\mathcal L}_n^b$ and ${\mathcal L}_n^{'b}$ which are correct always have a higher priority of being selected
than violated guesses whose value is equal to $b+1$. This is just due to the fact that we always give higher priority to guesses with smaller value.
Therefore, the following invariants are maintained in the computation of $F_b(n)$: 
\begin{itemize}
\item For every machine $\tilde{S}_{(i)}$ and every integer $\beta<b+1$, we have that 
      $(\tilde{S}_{(i)},\beta)$ enters the list ${\mathcal L}$ in stage $n$ of the computation of the function $F$ iff
      $(\tilde{S}_{(i)},\beta)$ enters the list ${\mathcal L}^b$ in stage $n$ of the computation of the function $F_b$. 
\item For every machine $\tilde{S}_{(i)}$, every $j\in\{1,2\}$ and every integer $\beta<b+1$,
      $(\tilde{S}_{(i)},\beta)$ is selected from list ${\mathcal L}_j$ in stage $n$ in the computation of the function $F$ iff
      $(\tilde{S}_{(i)},\beta)$ is selected from list ${\mathcal L}_j^b$ in stage $n$ in the computation of the function $F_b$.   
      Moreover, $\pi_i=\beta_i-p_{(i)}$, and $(\tilde{S}_{(i)},\beta_i-p_{(i)})$ is selected from list ${\mathcal L}'_j$ in stage $n$ 
      in the computation of the function $F$ iff
      $(\tilde{S}_{(i)},\pi_i)$ is selected in stage $n$ in the computation of the function $F_b$.
\end{itemize}
Thus selections of guesses $(\tilde{S}_{(i)},\beta_i)$ or $(\tilde{S}_{(i)},\pi_i)$ with the guess $(\tilde{S}_{(i)},\beta_i)$ being correct, i.e. $\beta_i<b+1$
are always the same in the computation of $F$ and $F_b$.
The value $p_{(i)}$ satisfies $p_{(i)}\leq\lceil \frac{i}{2}\rceil$. Furthermore, since the machine $\tilde{S}_{(i)}$ enters the list ${\mathcal L}$ within the guess
$(\tilde{S}_{(i)},i)$, we have $b_i\geq i$, which yields $p_{(i)}\leq \frac{b_i}{2}+1$
and therefore $b_{i}-p_{(i)}\geq \frac{b_{i}}{2}-1\geq\frac{b_{i}}{3}$.
Thus we obtain:
\begin{itemize}
\item If $F_b(n)=n^a$ with $b\geq 3a$, then $F(n)=F_b(n)$.
\item If $F(n)=n^a$, then for $b\geq 3a$ we have $F_b(n)=F(n)$.
\end{itemize}
Thus we 
can compute $F(n)=n^a$ by computing $F_1(n),\ldots , F_{3a}(n)$.
A single value $F_b(n)$ can be computed by running the algorithm for the function $F_b$ up to stage $n$. In each stage, 
at most $4\cdot\log^*(n)$ guesses of the form $(\tilde{S}_{(i)},\beta_i)$ with $i\leq\log^*(n)$ and $\beta_i\leq b+1$ have to be checked for 
violation at input length at most $n$. Testing a guess $(\tilde{S}_{(i)},\beta_i)$ for violation at some input length $m$ means to solve the associated instance
of the decision problem $L_{check}$ in Lemma \ref{sig1lemma}. 
Since we have constructed the list $\tilde{S}_{(1)},\tilde{S}_{(2)},\tilde{S}_{(3)},\ldots$ in such a way that 
for each $j\in {\mathbb N}$, $\tilde{S}_{(j)}=S_{k}$ for some $k\leq j$ and such that this index $k$ with respect to the original numbering $(S_i)$ of
$\Sigma_2$-machines can be computed from $j$ in time $O(j^2)$, we obtain directly from Lemma \ref{sig1lemma} that every single test of a guess $(\tilde{S}_{(i)},\beta_i)$
for violation at input length $m\leq n$
can be solved deterministically in time at most $c\cdot (i\cdot \beta_im^{\beta_i})^c\leq c\cdot (\log^*(n)\cdot  b\cdot n^{b})^c$.
In each stage $m\leq n$, the number of guesses which have to be taken into account in stage $m$ is bounded by $4\cdot \log^*(m)\leq 4\cdot\log^*(n)$.
The time needed to compute the function value $F_b(n)$ is dominated by the time for testing the guesses for violations, which is bounded by
\[n\cdot 4\log^*(n)\cdot c\cdot (\log^*(n)\cdot b\cdot n^b)^c.\]
Now in order to compute the function value $F(n)=n^a$, it suffices to compute the function values $F_1(n),\ldots , F_{3a}(n)$, which can be done deterministically
in time 
\[\begin{array}{c@{\:\:}l}
     & 3a\cdot n\cdot 4\log^*(n)\cdot c\cdot (\log^*(n)\cdot 3a\cdot n^{3a})^c\\
\leq & 3ac\cdot n^2\:\:\cdot \quad\quad\quad\quad (\log^*(n)\cdot 3a\cdot n^{3a})^c\\
\leq & 3ac\cdot n^2\:\:\cdot \quad\quad\quad\quad (n\cdot\:\: n\cdot\:\: n^{3a})^c\\
=    & 3ac\cdot n^2\cdot n^{(3a+2)c}\leq n^{10ac}=F(n)^{10c}.
\end{array}\]
Therefore we can compute $F(n)$ in time $F(n)^{10c}$. Hence for $C=10c$, we can
compute $F(n)$ deterministically in time $F(n)^C$.
This concludes the proof of the Lemma. 
%
\end{proof}
{\bf Remark.} This constant $C$ only depends on the constant $c$ from Lemma \ref{sig1lemma}. In the definition of Property $[\star]$, we have
defined $h$ as $h=20\cdot (c+2)$. Especially we have $h>2\cdot C$.\\[1.2ex]
\begin{lemma}\label{ineq_lemma}
For every $h$-power $n$, $F(n^{1\slash h})^C\leq F(n)$.
\end{lemma}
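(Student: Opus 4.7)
Assume for contradiction that there is an $h$-power $n$ with $F(n^{1/h})^C > F(n)$, and write $n' := n^{1/h}$. Let $(\tilde{S}_{(j)}, b_j)$ be the guess whose selection at stage $n$ determines $F(n) \in \{n^{b_j}, (n-1)^{b_j}\}$; since $n$ is an $h$-power, the selection must be from $\mathcal{L}_1 \cup \mathcal{L}_2$ and $\mbox{time}_{\tilde{S}_{(j)}}$ violates $b_j \cdot n^{b_j}$ at $n$ or at $n-1$. Let the guess selected at stage $n'$ correspond to a machine $\tilde{S}_{(i)}$ whose current value in the $\mathcal{L}_1 \cup \mathcal{L}_2$-entry is $b_i$, so $F(n') \leq (n')^{b_i}$. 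The inequality $F(n')^C > F(n)$ rearranges (absorbing the $\pm 1$ terms for $n$ large) to $b_i C/h > b_j$; by the remark following Lemma \ref{F_computation_lemma}, $h > 2C$, hence $b_i > 2 b_j$ and in particular $b_j < b_i \leq \log^*(n') \leq \log^*(n)$.

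Next, I claim the guess $(\tilde{S}_{(j)}, b_j)$ is already present at the start of stage $n'$. Every value appearing in a $\mathcal{L}_1$- or $\mathcal{L}_2$-entry is $\log^*(m)$ for some stage $m$ not exceeding the current stage (either the entry stage of a newly added machine, or the stage at which the entry was last selected and replaced). Since $b_j < \log^*(n')$, the last update to the $\tilde{S}_{(j)}$-entry happened at a stage $m \in I_{b_j}$, so $m \leq \delta_{b_j} < n'$. Between $m$ and $n$ the entry cannot have been re-selected (otherwise its value would strictly exceed $b_j$), so it has survived unchanged throughout $[n', n]$.

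Now apply Property $[\star]$ (Definition \ref{property_def}) to $\mbox{time}_{\tilde{S}_{(j)}}$, whose parameters $c_{(j)}, d_{(j)}, p_{(j)}$ are supplied by Lemma \ref{progsys_lemma}(b): the violation $\mbox{time}_{\tilde{S}_{(j)}}(n) > b_j n^{b_j}$ (or at $n-1$) — invoked with $a = b = b_j$, which sits in the admissible band $[c_{(j)}, \log(n)/c_{(j)}]$ for $n$ large, since $c_{(j)} \leq j \leq \log^*(n)$ — produces $\lceil \log\log(n-1)/c_{(j)}\rceil$ distinct non-$h$-power integers $m_\ell \in I_{n,d_{(j)}} \subseteq (n', n'(1+o(1)))$ with $\mbox{time}_{\tilde{S}_{(j)}}(m_\ell) > (b_j - p_{(j)}) m_\ell^{b_j - p_{(j)}}$. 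Because each $m_\ell$ is not an $h$-power, stage $m_\ell$ of the construction consults the extended list, and because $(\tilde{S}_{(j)}, b_j)$ is in the relevant $\mathcal{L}_1$ (or $\mathcal{L}_2$) throughout, the mirror $(\tilde{S}_{(j)}, b_j - p_{(j)})$ lives in the paired $\mathcal{L}'$-sublist from the start of its $I_t$-interval and is violated at $m_\ell$.

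Finally, I argue the target must actually be selected at some $m_\ell$, yielding the contradiction. At any $m_\ell$ that does not select the target, some strictly preceding competitor is selected; such a competitor, once selected, is either removed (if drawn from $\mathcal{L}'$) or bumped to primary value $\log^*(m_\ell) > b_j$ (if drawn from $\mathcal{L}$), in either case leaving the preceding-set. Competitors are replenished only when $\mathcal{L}'$ is refreshed at an $I_t$-boundary, and $I_{n,d_{(j)}}$ is multiplicatively so short that it crosses only $O(1)$ such boundaries; combined with $|\mathcal{L} \cup \mathcal{L}'| \leq 4\log^*(n)$ and $c_{(j)} \leq \log^*(n)$, the total competitor supply is at most $O((\log^*(n))^2)$, which is strictly less than $\lceil \log\log(n-1)/c_{(j)}\rceil \geq \log\log(n-1)/\log^*(n)$ by the remark following Lemma \ref{sig1lemma} ($\log\log(n-1) > 3(\log^*(n))^2$). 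Hence some $m_\ell$ selects $(\tilde{S}_{(j)}, b_j - p_{(j)})$; the $\mathcal{L}_1$-entry is then replaced by $(\tilde{S}_{(j)}, \log^*(m_\ell))$ with $\log^*(m_\ell) \geq \log^*(n') > b_j$, and its value can only grow thereafter, contradicting the presence of $(\tilde{S}_{(j)}, b_j)$ in $\mathcal{L}_{n,1}$ at stage $n$. The main obstacle is precisely this competitor accounting — tracking how $\mathcal{L}'$-refreshes at $I_t$-boundaries can resurrect blockers, verifying that the multiplicative shortness of $I_{n,d_{(j)}}$ limits such refreshes to $O(1)$, and disposing of the degenerate cases $b_j < c_{(j)}$ or $n'$ pathologically close to a $\delta_t$-boundary.
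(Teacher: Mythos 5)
Your proposal follows essentially the same route as the paper's proof: derive $b_i>b_j$ from $F(n^{1/h})^C>F(n)$ and $h>2C$, conclude that $(\tilde{S}_{(j)},b_j)$ already sits in the list at stage $n^{1/h}$, invoke Property $[\star]$ of $\mbox{time}_{\tilde{S}_{(j)}}$ to produce many non-$h$-power stages in $I_{n,d}$ (resp.\ $I_{n-1,d}$) at which $(\tilde{S}_{(j)},b_j-p_{(j)})$ is violated, and then argue by competitor exhaustion that the guess must be selected inside that interval and bumped above $b_j$, a contradiction. All the key ideas are present, including the two cases $F(n)=n^{b_j}$ versus $F(n)=(n-1)^{b_j}$ and the role of the remark after Lemma \ref{sig1lemma}.

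The one step that does not close as written is the final numerical comparison. You bound the competitor supply by $O((\log^*(n))^2)$ and then claim this is less than $\lceil\log\log(n-1)/c_{(j)}\rceil\geq\log\log(n-1)/\log^*(n)$, citing $\log\log(n-1)>3(\log^*(n))^2$. But that remark only yields $\log\log(n-1)/\log^*(n)>3\log^*(n)$, which does not dominate a $\Theta((\log^*(n))^2)$ competitor count. The fix is that your own accounting actually gives a linear, not quadratic, bound: the extended list has size at most $4\log^*(n)$ at any time, each competitor leaves the preceding-set permanently once selected (removed from ${\mathcal L}'$, or bumped to a value $\geq\log^*(n^{1/h})>b_j$ in ${\mathcal L}$), new entrants arrive with values $>b_j$, and the $O(1)$ refreshes of ${\mathcal L}'$ inside the short interval each add only $O(\log^*(n))$ entries. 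So the true competitor count is $O(\log^*(n))$ (the paper uses $2\log^*(n)$), and $3\log^*(n)$ violation stages then suffice. With that correction the argument is complete.
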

\begin{proof} Since $F(1)=1$, the inequality holds for $n=1$. Therefore we consider now the case $n\geq 2^h$. In order to prove the padding inequality from the lemma,
suppose for the contrary that $n\geq 2^h$ is an $h$-power (i.e. such that $n^{1\slash h}$ is also an integer number), and such that
$F(n^{1\slash h})^C>F(n)$. The idea is now to make use of Property $[\star ]$ in order to get a contradiction. Before we go into details, let us first 
describe the general idea of the proof. 

In both stages $n^{1\slash h}$ and $n$, guesses are selected and the function value is defined accordingly. We will show that 
when, say, a guess with a machine $\tilde{S}_{(i)}$ is selected in stage $n^{1\slash h}$ and a guess with machine $\tilde{S}_{(j)}$ is selected in stage $n$,
then from the inequality $F(n^{1\slash h})^C>F(n)$ and the fact that $h$ is sufficiently larger than $C$, it follows that $b_i>b_j$, where these are the two values
at stage $n^{1\slash h}$ and $n$ respectively. From this we can conclude that the guess $(\tilde{S}_{(j)},b_j)$ must be already contained in the list
${\mathcal L}_{n^{1\slash h}}$. Then we use Property $[\star ]$ to conclude that there must be $\Theta (\log\log (n))$ violations of the guess
$(\tilde{S}_{(j)},b_j-p_j)$ within the stages $n^{1\slash h}$ and $n$. Since the size of the list ${\mathcal L}={\mathcal L}_1\cup {\mathcal L}_2$
is only of order $\log^*(n)$, at least one of these violations must have highest priority and therefore be selected. At that point, the value 
$b_j$ will be replaced by some value $>\log^*(n^{1\slash h})>b_j$, and thus the guess cannot be contained in the list at stage $n$ anymore, a contradiction.

We are now ready to give the details of the proof.
Suppose that within the construction of the union function $F$, in stage $n^{1\slash h}$ one of the guesses $(\tilde{S}_{(i)},b_{i})$, $(\tilde{S}_{(i)},b_{i}-p_i)$
is violated
and selected, and in stage $n$ the guess $(\tilde{S}_{(j)},b_{j})$ is violated and selected. Note that since $n$ is an $h$-power, in stage $n$ no guess of the
form  $(\tilde{S}_{(j)},b_{j}-p_{(j)})$ is selected. Thus, we have the following two cases: \\[0.2ex]
Case 1: $(\tilde{S}_{(i)},b_{i})$ is selected in stage $n^{1\slash h}$. Then we have
$F(n^{1\slash h})\in\{(n^{1\slash h}-1)^{b_i},(n^{1\slash h} )^{b_i}\}$ and $F(n)\in\{(n-1)^{b_j},n^{b_j}\}$. Thus $F(n^{1\slash h})^C>F(n)$ implies that 
\[(n^{1\slash h})^{b_i\cdot C}\geq F(n^{1\slash h})^C >F(n)\geq (n-1)^{b_j}\geq n^{b_j\slash 2},\]
and using the fact that $h>2C$, we conclude that 
\[\frac{b_{i}}{2}\: >\: \frac{C}{h}\cdot b_{i}\:\: >\:\: \frac{b_{j}}{2},\:\:\mbox{i.e.}\:\: b_i>b_j.\]
Case 2: $(\tilde{S}_{(i)},b_{i}-p_{(i)})$ is selected in stage $n^{1\slash h}$. Then we proceed as in Case 1, now obtaining
\[(n^{1\slash h})^{b_i\cdot C}\:>(n^{1\slash h})^{(b_i-p_i)\cdot C}\: \geq \: F(n^{1\slash h})^C\:\: >\:\: F(n)>(n-1)^{b_j}>n^{b_{j}\slash 2},\]
from which we also conclude $b_i>b_j$.\\[0.3ex]
In the construction of the union function, new guesses always enter the list with a $b$-value larger than the currently largest $b$-value in the list.
When a guess is violated and selected in the construction, its second component 
is replaced by the currently largest $b$-value. This implies that in our situation, the guess $(\tilde{S}_{(j)},b_{j})$ that is selected in stage $n$ must be contained in the
list ${\mathcal L}_{n^{1\slash h}}$. Since the guess $(\tilde{S}_{(j)},b_{j})$ is selected in stage $n$, it is violated at input length $n$ or $n-1$, i.e. we have
$\mbox{time}_{\tilde{S}_{(j)}}(n)>b_{j}\cdot n^{b_{j}}$ or $\mbox{time}_{\tilde{S}_{(j)}}(n-1)>b_{j}\cdot (n-1)^{b_{j}}$.
Furthermore, the machine $\tilde{S}_{(j)}$ has entered the list at the end of the stage $\delta_{j-1}$, namely with the guess $(\tilde{S}_{(j)},j)$. By construction we 
have $c_{(j)}\leq j$. Therefore, 
$c_{(j)}\leq b_j \leq\log^*(n^{1\slash h})\leq \frac{\log (n-1)}{c_{(j)}}\leq\frac{\log (n)}{c_{(j)}}$. 
Now we consider the two cases $F(n)=n^{b_j}$ and $F(n)=(n-1)^{b_j}$.

{\sl Case: $F(n)=n^{b_j}$.} 
Since $\mbox{time}_{\tilde{S}_{(j)}}$ has Property $[\star ]$, there exist 
pairwise distinct integers
\[m_1,\ldots , m_{\left\lceil\frac{\log\log (n)}{c_{(j)}}\right\rceil}
\in\left (n^{1\slash h},(n^{1\slash h}\cdot \left (1+\frac{\log (n)}{n^{1\slash (d\cdot h)}} \right )^d \right )\]
which are not $h$-powers and such that 
$\mbox{time}_{\tilde{S}_{(j)}}(m_l)> (b_j-p_{(j)})m_l^{b_j-p_{(j)}},l=1,\ldots , \left\lceil\frac{\log\log n}{c_{(j)}}\right\rceil$, 
where $\tilde{S}_{(j)}=\tilde{S}_{j',d}$ and $c_{(j)}=c_{j',d}$.
For every stage $m\in \left.\left (n^{1\slash h},n\right. \right ]$ we have $|{\mathcal L}_m|=\log^*(m)\leq\log^* (n)\leq\frac{\log\log (n)}{c_{(j)}}$.
Whenever some guess different from $(\tilde{S}_{(j)},b_j)$ is selected within a stage $m>n^{1\slash h}$,
its $b$-value is replaced by $\log^*(m)>b_j$. Therefore, after at most $2\log^*n$ such stages, $(\tilde{S}_{(j)},b_j)\in {\mathcal L}_1$ 
is the guess with the smallest value $b_j$
in the list ${\mathcal L}$ such that one of the two guesses $(\tilde{S}_{(j)},b_j),(\tilde{S}_{(j)},b_j-p_{(j)})$ will ever be violated again. We have 
\[n^{1\slash h}\cdot \left (1+\frac{\log (n)}{n^{1\slash (dh)}}\right )^d\: =\: 
\left (n^{1\slash h}+\log (n)\cdot n^{\frac{1}{h}-\frac{1}{d\cdot h}}\right )\cdot \left (1+\frac{\log (n)}{n^{1\slash (dh)}}\right )^{d-1},\] 
and therefore 
\[\left | \left (n^{1\slash h},n^{1\slash h}\cdot \left (1+\frac{\log (n)}{n^{1\slash (d\cdot h)}} \right )^d\right )\right |\:\geq\: \log (n)\cdot n^{(1-\frac{1}{d})\frac{1}{h}}.\]
Since $c_{(j)}\leq j\leq\log^* (n)$, we have 
\[\frac{\log\log (n)}{c_{(j)}}\geq \frac{\log\log (n)}{\log^*n}> 2\log^* (n),\]
where the last inequality follows from the remark after the proof of Lemma \ref{sig1lemma} and the fact that $n\geq 2^h>2^c$.
Since in every stage of the construction of $F$, violated guesses are first ordered by their $b$-value and since $\frac{\log\log (n)}{c_{(j)}}>2\log^*(n)$, 
there exists some $m\in \left (n^{1\slash h},n^{1\slash h}\cdot \left (1+\frac{\log (n)}{n^{1\slash (dh)}}\right )^d\right )$
such that $(\tilde{S}_{(j)},b_j-p_{(j)})$ is violated 
and one of the two guesses $(\tilde{S}_{(j)},b_j),(\tilde{S}_{(j)},b_j-p_{(j)})$ from ${\mathcal L}_{m,1}\cup {\mathcal L}'_{m,1}$
is 
selected at stage $m$ in the construction of $F$. According to the construction of the union function, the guess $(\tilde{S}_{(j)},b_j)$ is then replaced by $(\tilde{S}_{(j)},\log^*(m))$
in the list ${\mathcal L}_1$. Since
\[b_j\: <\: b_i\: \leq\: \log^*(n^{1\slash h})\:\leq\:\log^*(m),\]
we have $\log^*(m)\neq b_j$, hence the guess $(\tilde{S}_{(j)},b_j)$ cannot be in the list ${\mathcal L}_{n,1}$, a contradiction.

{\sl Case: $F(n)=(n-1)^{b_j}$.} This means that in stage $n$, the guess $(\tilde{S}_{(j)},b_j)$ is selected from the list ${\mathcal L}_2$, since
$\mbox{time}_{\tilde{S}_{(j)}}(n-1)>b_j(n-1)^{b_j}$. Again since $\mbox{time}_{\tilde{S}_{(j)}}$ has Property $[\star ]$, there exist
pairwise distinct non-$h$-power integers 
\[m_1,\ldots , m_{\left\lceil\frac{\log\log (n-1)}{c_{(j)}}\right\rceil}\:\in\: I_{n-1,d}
=\left ((n-1)^{1\slash h},(n-1)^{1\slash h}\cdot\left (1+\frac{\log (n-1)}{(n-1)^{1\slash (d\cdot h)}} \right )^d\right )\]
such that 
\[\mbox{time}_{\tilde{S}_{(j)}}(m_l)>(b_j-p_{(j)})m_l^{b_j-p_{(j)}},\: l=1,\ldots , \left\lceil\frac{\log\log (n-1)}{c_{(j)}}\right\rceil.\]
Since $(n-1)^{1\slash h}\geq n^{1\slash h}-1$, we obtain that at least $\lceil\frac{\log\log (n-1)}{c_{(j)}}\rceil -1$ of these integers $m_l$ are 
greater than $n^{1\slash h}$. Again, from the fact that $n\geq 2^h>2^c$ and the remark after the proof of Lemma \ref{sig1lemma}, we conclude that
$\frac{\log\log (n-1)}{c_{(j)}}-1>2\log^*(n)$. Within the stages $n^{1\slash h}$ up to $n$, the list ${\mathcal L}={\mathcal L}_1\cup {\mathcal L}_2$ 
contains
at most $2\log^*(n)$ guesses, thus for at least $\frac{\log\log (n-1)}{c_{(j)}}-1-2\log^*(n)$ of the integers $m_l$, the guess 
$(\tilde{S}_{(j)},b_j-p_{(j)})\in {\mathcal L}_2$ has highest priority of being selected. Thus, as in the previous case, it will eventually be selected
within one of these stages, and thus the guess $(\tilde{S}_{(j)},b_j)$ cannot be contained in the list ${\mathcal L}_{n,2}$ in stage $n$ anymore, a contradiction.

Thus we obtain that $F$ satisfies the inequality $F(n^{1\slash h})^C\leq F(n)$ for each $h$-power $n$. This concludes the proof of the Lemma.
\end{proof} 
{\bf Remark.} In the proof of Lemma \ref{ineq_lemma} we have made use of the way in which guesses are ordered in the construction of the union function $F$.
In stages $n$ where $n$ is an $h$-power, the lexicographically smallest violated guess $(\tilde{S}_{(i)},b_i)$ from ${\mathcal L}_n$ is selected, first ordered by  
$b_i$ and then by $i$. In the case when $n$ is not an $h$-power, the smallest violated guess from the extended list ${\mathcal L}_n\cup {\mathcal L}'_n$ is selected, namely one of the 
guesses $(\tilde{S}_{(i)},b_i)$ and $(\tilde{S}_{(i)},b_i-p_{(i)})$ for some $i$, first ordered by $b_i$, then by the second entry $b_i$ or $b_i-p_{(i)}$ and then by $i$.
In the proof of Lemma \ref{ineq_lemma}, this property that guesses are always first ordered by $b_i$ was used to show that if the inequality $F(n^{1\slash h})^C\leq F(n)$
is violated, this gives a contradiction, since the guess selected in stage $n$ would have been selected earlier in the construction of $F$.\\[2ex]
Now we will make use of \emph{Padding} in order to show that $\mbox{DTIM}\tilde{\mbox{E}}(F)=\tilde{\Sigma}_2(F)$ also implies 
$\mbox{DTIM}\tilde{\mbox{E}}(F^{C^2})=\tilde{\Sigma}_2(F^{C^2})$. As we will see in the proof of the next lemma, the crucial properties 
which allow us
to use Padding are the padding inequality $F(n^{1\slash h})^C\leq F(n)$ and the fact that we also have 
$\tilde{P}=\mbox{DTIM}\tilde{\mbox{E}}(F(n+1))=\tilde{\Sigma}_2(F(n+1))=\tilde{\Sigma}_2^p$.
\begin{lemma}\label{padding_lemma}
$\mbox{DTIM}\tilde{\mbox{E}}(F)=\tilde{\Sigma}_2(F)$ implies $\mbox{DTIM}\tilde{\mbox{E}}(F^{C^2})=\tilde{\Sigma}_2(F^{C^2})$.
\end{lemma}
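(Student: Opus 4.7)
The plan is a padding argument. The inclusion $\mbox{DTIM}\tilde{\mbox{E}}(F^{C^2}) \subseteq \tilde{\Sigma}_2(F^{C^2})$ is immediate, so fix $L \in \tilde{\Sigma}_2(F^{C^2})$ witnessed by some $\tilde{S}_{i_0,d_0}$ of running time at most $c_0 \cdot F(n)^{C^2}$. Define the padded language
\[
L' \;=\; \bigl\{\, x\,1\,0^{|x|^{h^2}-|x|-2} : x \in L \,\bigr\},
\]
so that every $y \in L'$ has length $m = |x|^{h^2}-1$, i.e.\ an $h^2$-th power minus one and in particular not an $h$-power.

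First I would build a standard $\Sigma_2$-machine $S'$ for $L'$: on input $y$ of length $m$, $S'$ checks in $O(m)$ steps that $m+1=k^{h^2}$ for some integer $k$ and that $y = x\,1\,0^{k^{h^2}-k-2}$ with $|x|=k$; if so it simulates $\tilde{S}_{i_0,d_0}$ on $x$, otherwise it rejects. Two applications of the padding inequality of Lemma \ref{ineq_lemma} (once with $n=|x|^h$, once with $n=|x|^{h^2}$, both $h$-powers) give
\[
F(|x|)^{C^2} \;=\; \bigl(F(|x|)^C\bigr)^C \;\leq\; F(|x|^h)^C \;\leq\; F(|x|^{h^2}) \;=\; F(m+1),
\]
so $\mbox{time}_{S'}(m) = O(F(m+1))$ and hence $L' \in \Sigma_2(F(n+1))$.

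The crucial step is to promote this to $L' \in \tilde{\Sigma}_2(F(n+1))$, i.e.\ to exhibit a machine of the restricted family $(\tilde{S}_{i,d})$ for $L'$ with the same asymptotic running time. This is precisely where the "$h$-power minus one" choice of pad length pays off: I would start from a standard $\Sigma_2$-machine $S_{i^*}$ obtained from $S'$ by equipping it with an $F(m+1)$-clock, designed so that the running time function $\mbox{time}_{S_{i^*}}$ itself satisfies Property $[\star]$ with appropriate parameters $c_{i^*,d},d,p_{i^*}$. Once this is verified, Lemma \ref{progsys_lemma}(c) yields $L(\tilde{S}_{i^*,d}) = L(S_{i^*}) = L'$, while Lemma \ref{progsys_lemma}(a) gives $\mbox{time}_{\tilde{S}_{i^*,d}}(n) = O(F(n+1))$. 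I expect this to be the main obstacle: whenever at input length $n$ the bound $\mbox{time}_{S_{i^*}}(n) > a n^b$ fires, one must supply many non-$h$-power witnesses $m_l \in I_{n,d}$ satisfying $\mbox{time}_{S_{i^*}}(m_l) > (a-p) m_l^{b-p}$. This reduces to Property $[\star]$ for the clock value $F(n+1)$, which is handled by the already-available Property $[\star]$ of $F$ itself (Lemma \ref{F_propertystar_lemma} and its minor variants), and it is crucial here that the inputs of $L'$ avoid $h$-power lengths, so that the witness lengths demanded by Property $[\star]$ do not conflict with the padding structure.

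Finally, by Lemma \ref{union_lemma} we have $\tilde{\Sigma}_2(F(n+1)) = \tilde{\Sigma}_2^p = \tilde{P}$, and under the hypothesis $\mbox{DTIM}\tilde{\mbox{E}}(F)=\tilde{\Sigma}_2(F)$ this also equals $\mbox{DTIM}\tilde{\mbox{E}}(F)$. Hence $L' \in \tilde{P}$, so some machine $\tilde{S}$ of the restricted family decides $L'$ in time $c_2 m^{c_2}$. To decide $L$ on input $x$ one constructs $y$ in time $O(|x|^{h^2})$ and then runs $\tilde{S}$ on $y$ in time $O(|x|^{h^2 c_2})$; the total is polynomial in $|x|$, so $L \in P = \tilde{P} = \mbox{DTIM}\tilde{\mbox{E}}(F) \subseteq \mbox{DTIM}\tilde{\mbox{E}}(F^{C^2})$, completing the proof.
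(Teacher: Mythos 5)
Your overall architecture matches the paper's: the same padded language $L'$ with lengths $|x|^{h^2}-1$ (deliberately not $h$-powers), the same double application of Lemma \ref{ineq_lemma} giving $F(|x|)^{C^2}\leq F(|x|^{h^2})=F(N+1)$, the same detour through $\tilde{\Sigma}_2(F(n+1))$ and Lemma \ref{union_lemma}, and the same wrap-up $L\in P=\mbox{DTIM}\tilde{\mbox{E}}(F)\subseteq\mbox{DTIM}\tilde{\mbox{E}}(F^{C^2})$. You also correctly identify the crux: promoting $L'\in\Sigma_2(F(n+1))$ to $L'\in\tilde{\Sigma}_2(F(n+1))$, i.e.\ exhibiting a machine for $L'$ whose running-time function satisfies Property $[\star]$.

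However, your proposed resolution of that crux --- equipping $S'$ with an ``$F(m+1)$-clock'' so that $\mbox{time}_{S_{i^*}}(m)=\Theta(F(m+1))$ and then invoking Property $[\star]$ of $F$ itself --- does not work, because $F$ is not time-constructible: by Lemma \ref{F_computation_lemma} it is only computable in time $F(n)^C$ with $C>1$. A machine that computes the clock value $F(m+1)$ in order to enforce it already spends $F(m+1)^C\gg F(m+1)$ steps, destroying the bound $\mbox{time}_{S'}(m)=O(F(m+1))$ that you need for $L'\in\tilde{\Sigma}_2(F(n+1))$. (The clock trick is legitimate only for functions computable in time $t^{1-\epsilon}$, which is exactly the hypothesis of Lemma \ref{eqlemma}; the whole point of the present lemma is to bridge from $F$, where this fails, to the time-constructible $F^{C^2}$.) The paper instead uses no clock at all: the running time of $S'$ on a well-formed input of length $N=n^{h^2}-1$ is essentially $\mbox{time}_{\tilde{S}_L}(n)$, and Property $[\star]$ for $\mbox{time}_{S'}$ is \emph{inherited} from Property $[\star]$ of $\mbox{time}_{\tilde{S}_L}$ (which holds by Lemma \ref{progsys_lemma}(b)) through the length transformation $n\mapsto n^{h^2}-1$. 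Making this inheritance precise is the technical bulk of the paper's proof: one must check that a violation $\mbox{time}_{S'}(N)>aN^b$ forces $\mbox{time}_{\tilde{S}_L}(n)>an^{h^2(b-1)}$, that the resulting witnesses $m_l\in I_{n,d_L}$ map to non-$h$-power witnesses $M_l=m_l^{h^2}-1$ landing (in sufficient number) inside $I_{N,d'}$, and that parameters $c'\geq h^4 c_L$, $d'\geq h^2 d_L$, $p'\geq p_L$ can be chosen to satisfy the needed inequalities. Your proposal is missing this argument, and the device you substitute for it fails for the time-constructibility reason above.
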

\begin{proof}
Let $L\in\tilde{\Sigma}_2(F^{C^2})$. We want to show that this also implies $L\in\mbox{DTIM}\tilde{\mbox{E}}(F^{C^2})$.
For this purpose, we want to construct some associated $L'\in\tilde{\Sigma}_2(F)$, a polynomially padded version of the given problem $L$.
Then we obtain $L'\in\mbox{DTIM}\tilde{\mbox{E}}(F)=\tilde{P}=P$. Since $L'$ is a polynomially padded version of $L$, this immediately gives $L\in P$, therefore
$L\in P=\mbox{DTIM}\tilde{\mbox{E}}(F)\subseteq \mbox{DTIM}\tilde{\mbox{E}}(F^{C^2})$.

Now we describe this in detail. We let the padded version $L'$ of $L$ be defined as
\[L'\: =\: \left\{x10^k|\: x\in L,\: |x|=n, |x10^k|=n^{h^2}-1 \right\}.\] 
Le us give the intuition for this choice, which is twofold. On the one hand, we want to use the padding inequality, namely twice in 
$F(n)^{C^2}\leq F(n^h)^C\leq F(n^{h^2})$. On the other hand, the integers $n^{h^2}$ are $h$-powers. If the string lengths of elements of $L'$ would all be 
$h$-powers, the Property $[\star ]$ would not be satisfied for $L$ - note that in case of violation of a guess, Property $[\star ]$ requires existence of 
smaller violations at input lengths which are {\sl not} $h$-powers. Therefore we choose the string lengths of elements of $L'$ to be 
of the form $n^{h^2}-1$. But still the padding inequality only gives the bound $F(n^{C^2})\leq F(n^{h^2})$. This is the reason why we
constructed the union function $F$ such that it also satisfies $\tilde{\Sigma}_2^p=\tilde{\Sigma}_2(F(n+1))$.

Let $\tilde{S}_L$ be a $\tilde{\Sigma}_2$-machine for $L$ which is $O(F^{C^2})$-time bounded. In order to keep notations simple, say that 
$\tilde{S}_L$ satisfies Property $[\star ]$ with parameters $c_L,d_L,p_L$. Let $S'$ be the $\Sigma_2$-machine which accepts $L'$ in the standard way:
Given an input $y$, it checks if $y$ is of the form $y=x10^k$ with $|y|=|x|^{h^2}-1$, and in parallel (i.e. on a separate track) it 
simulates the machine $\tilde{S}_L$ on input $x$. If the input string $y$ is not of the form $y=x10^k$, $S'$ rejects. Otherwise it continues the 
simulation f the machine $S_L$ on input $x$.

Let us use the following notation: $N$ denotes the string length of an
instance $y$ of the padded version $L'$, and $n$ denotes the length of the associated string $x$ with $y=x10^k$. The values $n$ and $N$ are related as follows:
\[n^{h^2}-1=N,\:\: n=(N+1)^{1\slash h^2}\]
Thus the running time of the machine $S'$ is in $O(F(|x|)^{C^2})=O(F((N+1)^{1\slash h^2})^{C^2})$.
Applying the padding inequality, we obtain that $F((N+1)^{1\slash h^2})^{C^2}\leq F(N+1)$. Altogether we obtain that
the running time of the machine $S'$ on input $y$ of length $N$ is in $O(F(N+1))$.
Thus we have $L'\in\Sigma_2(F(N+1))$. 
Now we want to show that we also have $L'\in \tilde{\Sigma}_2(F(n+1))$. According to Lemma \ref{progsys_lemma} it is sufficient to show that 
there exist parameters $c',d',p'$ such that the function $\mbox{time}_{S'}$ satisfies Property $[\star ]$ with these parameters. More precisely, we have to show
that there exist a machine index $j$ for the machine $S'$ and an integer $d'$ such that the running time $\mbox{time}_{S_j}$ of the machine $S'=S_j$ satisfies Property $[\star ]$ with 
parameters $c'=c_{j,d},d',p'=p_j$. We assume that we have already fixed $j$ and $d'$.
Now we will follow the lines of the definition of Property $[\star ]$ in order to derive conditions for
$c',d',p'$. The approach is somewhat technical yet straight forward.

Suppose that $N\geq 2^{c'^2}$ and $c'\leq a\leq b\leq\frac{\log (N)}{c'}$ are such that $\mbox{time}_{S'}(N)>aN^b$. We have to show that there exist 
integers $M_1,\ldots , M_{\left\lceil\frac{\log\log N}{c'}\right\rceil}\in I_{N,d'}$ which are not $h$-powers such that for 
$l=1,\ldots , \left\lceil\frac{\log\log N}{c'}\right\rceil$, $\mbox{time}_{S'}(M_l)>(a-p')M_l^{b-p'}$.
From the construction of the machine $S'$ it follows that $\mbox{time}_{S'}(N)>aN^b$ implies that 
\begin{equation}\label{padding_eq_aux}
\mbox{time}_{\tilde{S}_L}(n)>aN^b=a(n^{h^2}-1)^b>a\left (\frac{n^{h^2}}{2}\right )^b\: =\: a\cdot\frac{n^{h^2\cdot b}}{2^b}.
\end{equation}
Since $b\leq\frac{\log (N)}{c'}=\frac{\log (n^{h^2}-1)}{c'}\leq\frac{\log (n^{h^2})}{c'}$, we obtain
$2^b\leq n^{h^2\slash c'}$. 
Combining this with (\ref{padding_eq_aux}), we obtain
$\mbox{time}_{\tilde{S}_L}(n)>an^{h^2(b-1\slash c')}>an^{h^2(b-1)}$.
We want to achieve that we can now make use of the fact that the machine $S_L$ satisfies Property $[\star ]$. Thus the first condition is that 
\begin{itemize}
\item[(i)] $N\geq 2^{c'^2}$ implies that $n\geq 2^{c_L^2}$, and\\
           $c'\leq a\leq b\leq \frac{\log (N)}{c'}$ implies that $c_L\leq a\leq h^2(b-1)\leq\frac{\log (n)}{c_L}$
\end{itemize} 
Suppose that Condition (i) is satisfied. Then we can apply Property $[\star ]$ to the machine $\tilde{S}_L$. Thus there exist integers 
$m_1,\ldots , m_{\left\lceil\frac{\log\log (n)}{c_L}\right\rceil}\in I_{n,d_L}$ which are not $h$-powers such that 
\begin{equation*}
\mbox{time}_{\tilde{S}_L}(m_l)>(a-p_L)m_l^{h^2(b-1)-p_L},\: l=1,\ldots , \left\lceil\frac{\log\log (n)}{c_L}\right\rceil
\end{equation*}
This implies that for the integers $M_l:=m_l^{h^2}-1$ (which are not $h$-powers), we have
\begin{equation*}
\mbox{time}_{S'}(M_l)> (a-p_L)\left ((M_l+1)^{1\slash h^2}\right )^{h^2(b-1)-p_L}
> (a-p_L)M_l^{b-1-\frac{p_L}{h^2}},\: l=1,\ldots , \left\lceil\frac{\log\log (n)}{c_L}\right\rceil. 
\end{equation*}
We want to show that this yields sufficiently many integers $M\in I_{N,d'}$ such that $\mbox{time}_{S'}(M)>(a-p')M^{b-p'}$.
For this purpose we will first require that $p_L\leq p'$ and $1+\frac{p_L}{h^2}\leq p'$. Then this immediately yields that
\begin{equation}
\mbox{time}_{S'}(M_l)> (a-p')M_l^{b-p'},\: l=1,\ldots , \left\lceil\frac{\log\log (n)}{c_L}\right\rceil.
\end{equation}
Now we have to note that the inclusion 
\[\{m^{h^2}-1\mid m\in I_{n,d_L}\}\:\subseteq\: I_{N,d'}\]
does not hold. Therefore it is impossible to prove that for each integer $m_l\in I_{n,d_L}$ the corresponding integer $M_l=m_l^{h^2}-1$
is contained in the interval $I_{N,d'}$. But it suffices to show that sufficiently many of these integers $M_l$ are in $I_{N,d'}$, namely at least 
$\lceil\frac{\log\log (N)}{c'}\rceil$ of them. For this purpose, we will now proceed as follows. 
%
First we require that $\lceil\frac{\log\log (n)}{c_L}\rceil>3\lceil\frac{\log\log (N)}{c'}\rceil$. Then it will suffice to show for 
half of these integers $m_l$ that the associated $M_l$ is in $I_{N,d'}$. In particular
it will be sufficient to show that 
$n^{1\slash h}+1<m_l<n^{1\slash h}(1+\frac{\log (n)}{n^{1\slash (hd_L)}})^{d_L}-\frac{1}{2}\frac{\log\log n}{c_L}$ implies $M_l\in I_{N,d'}$
%
%
Hence we obtain the following conditions:
\begin{itemize}
\item[(ii)] $m_l\in I_{n,d_L}$ with $n^{1\slash h}+1<m_l<n^{1\slash h}(1+\frac{\log (n)}{n^{1\slash (hd_L)}})^{d_L}-\frac{1}{2}\frac{\log\log n}{c_L}$ 
            implies $M_l\in I_{N,d'}$
\item[(iii)] $\left\lceil\frac{\log\log (n)}{c_L}\right\rceil\: >\: 3\cdot \left\lceil\frac{\log\log (N)}{c'}\right\rceil$            
\item[(iv)] $p'\geq p_L$, which also implies $p'\geq 1+\frac{p_L}{h^2}$.
\end{itemize}
As we have shown above, if Conditions (i)-(iv) are satisfied, then this immediately yields that $S'$ satisfies Property $[\star ]$ with parameters $c',d',p'$.
Thus we will now show that the parameters $c',d',p'$ can be chosen such as to satisfy (i)-(iv).\\[0.3ex]
{\sl Concerning Condition (i):} We have $n=(N+1)^{1\slash h^2}>N^{1\slash h^2}$.
Thus $N\geq 2^{c'^2}$ implies $n>2^{c'^2\slash h^2}$. Thus in order to satisfy the first part of Condition (i), it suffices to choose $c'\geq c_L\cdot h$.
Then $c'\leq a$ also implies $c_L\leq a$. Moreover, $b\leq\frac{\log (N)}{c'}$ implies
\[h^2(b-1)\leq h^2b\leq h^2\frac{\log (N)}{c'}=h^2\frac{\log (n^{h^2}-1)}{c'}\leq h^2\frac{\log (n^{h^2})}{c'}=\frac{h^4\log (n)}{c'}\]
Thus altogether we obtain that Condition (i) holds provided we choose $c'\geq h^4\cdot c_L$.\\[0.3ex]
{\sl Concerning Condition (iii):} We have
\begin{eqnarray*}
\left\lceil\frac{\log\log (N)}{c'}\right\rceil & = & \left\lceil\frac{\log\log (n^{h^2}-1)}{c'}\right\rceil\\
    & < & \frac{h^2\log\log (n)}{c'}\\
    & \leq & \frac{h^2\log\log (n)}{h^4\cdot c_L}\quad\:\mbox{(since $c'\geq h^4\cdot c_L$)}\\
    \Longrightarrow  \left\lceil\frac{\log\log (n)}{c_L}\right\rceil & > & h^2\cdot\left\lceil \frac{\log\log (N)}{c'}\right\rceil
                     \: >\: 3\cdot\left\lceil\frac{\log\log (N)}{c'}\right\rceil.
\end{eqnarray*}
Thus $c'\geq h^4\cdot c_L$ also implies that Condition (iii) holds.\\[0.3ex]
{\sl Concerning Condition (ii):} 
Recall that the intervals $I_{n,d_L},I_{N,d'}$ are defined as follows:
\[I_{n,d_L}=\left (n^{1\slash h},n^{1\slash h}\left (1+\frac{\log (n)}{n^{1\slash (d_L\cdot h)}} \right )^{d_L} \right ),\:
I_{N,d'}=\left (N^{1\slash h},N^{1\slash h}\left (1+\frac{\log (N)}{N^{1\slash (d'\cdot h)}} \right )^{d'} \right )\]
We have to assure that 
$n^{1\slash h}+1<m<n^{1\slash h}(1+\frac{\log (n)}{n^{1\slash (d'\cdot h)}} )^{d'}-\frac{1}{2}\cdot\frac{\log\log (n)}{c_L}$
implies $M\in I_{N,d'}$, where $M=m^{h^2}-1$. The first part of this requirement is that 
$m>n^{1\slash h}+1$ implies $m^{h^2}-1>N^{1\slash h}$.
For such $m$ with $m>n^{1\slash h}+1$, we have 
\[m^{h^2}-1\: >\: (n^{1\slash h}+1)^{h^2}-1\: >\: (n^{1\slash h})^{h^2}\: =\: (n^{h^2})^{1\slash h}\: >\: (n^{h^2}-1)^{1\slash h}\: =\: N^{1\slash h}.\]

Now for the second part of the condition, suppose that 
$m<n^{1\slash h}\left (1+\frac{\log (n)}{n^{1\slash (d_Lh)}}\right )^{d_L}-\frac{1}{2}\cdot\frac{\log\log (n)}{c_L}$.
We have to show that for such $m$, the associated integer $M=m^{h^2}-1$ is smaller than the upper bound of the interval $I_{N,d'}$. 
For this purpose, it suffices to show that
\begin{equation}\label{padding_eq_xyz}\begin{array}{c@{\:}l@{\:}c@{\:}l}
      & \left (n^{1\slash h}\left (1+\frac{\log (n)}{n^{1\slash (d_Lh)}}\right )^{d_L}-1\right )^{h^2}-\frac{1}{2}\cdot\frac{\log\log (n)}{c_L} & \leq & 
        (n^{h^2}-1)^{1\slash h}\left (1+\frac{\log N}{N^{1\slash (d'h)}}\right )^{d'}\\
 \Longleftrightarrow & 
        \left (\left (n^{1\slash h}\left (1+\frac{\log n}{n^{1\slash (d_Lh)}}\right )^{d_L}-1\right )^{h^2}-\frac{1}{2}\cdot\frac{\log\log (n)}{c_L}\right )^h & \leq &
                       (n^{h^2}-1)\cdot\left (1+\frac{\log N}{N^{1\slash (d'h)}}\right )^{h\cdot d'}
\end{array}\end{equation}
We can bound the left hand side in (\ref{padding_eq_xyz}) as follows:
\begin{equation}\label{padding_eq_zzz}\begin{array}{c@{\:\:}l}
        & \left (\left (n^{1\slash h}\left (1+\frac{\log n}{n^{1\slash (d_Lh)}}\right )^{d_L}-1\right )^{h^2}-\frac{1}{2}\cdot\frac{\log\log (n)}{c_L}\right )^h \\
\leq    & \left (n^{1\slash h}\left (1+\frac{\log n}{n^{1\slash (d_Lh)}}\right )^{d_L}-1\right )^{h^2\cdot h}-\frac{1}{2}\cdot\frac{\log\log (n)}{c_L}\\
\leq    & (n^{1\slash h})^{h^3}\cdot \left (1+\frac{\log n}{n^{1\slash (d_Lh)}}\right )^{d_L\cdot h^3}\: -1\: -\frac{1}{2}\cdot\frac{\log\log (n)}{c_L}\\
\leq    & n^{h^2}\cdot \left (1+\frac{\log n}{n^{1\slash (d_Lh)}}\right )^{d_L\cdot h^3}\: -\frac{1}{2}\cdot\frac{\log\log (n)}{c_L}
\end{array}\end{equation}
Thus in order to satisfy the inequality (\ref{padding_eq_xyz}), it suffices to satisfy the following inequality:
\begin{equation}\label{padding_eq_zzzz}
n^{h^2}\cdot \left (1+\frac{\log n}{n^{1\slash (d_Lh)}}\right )^{d_L\cdot h^3}\: -\frac{1}{2}\cdot\frac{\log\log (n)}{c_L}
\:\:\leq\;\: (n^{h^2}-1)\cdot\left (1+\frac{\log N}{N^{1\slash (d'h)}}\right )^{h\cdot d'}
\end{equation}
%
%
We have $\left (1+\frac{\log (n)}{n^{1\slash (d_L\cdot h)}}\right )^{d_L\cdot h^3}=o\left (\frac{1}{2}\cdot\frac{\log\log (n)}{c_L}\right )$. 
Thus there exists some $n_0$ such that for all $n\geq n_0$, 
$\frac{1}{2}\cdot\frac{\log\log (n)}{c_L}>\left (1+\frac{\log (n)}{n^{1\slash (d_L\cdot h)}}\right )^{d_L\cdot h^3}$.
This $n_0$ only depends on $c_L,d_L$ and $h$. Recall that in Condition (i) we have the requirement that $N\geq 2^{c'^2}$ implies $n\geq 2^{c_L^2}$.
Now we add the requirement that this also implies $n\geq n_0$:
\begin{itemize}
\item[(v)] $N\geq 2^{c'^2}$ implies $n\geq n_0$.
\end{itemize}
Condition (v) can be satisfied by choosing the parameter $c'$ sufficiently large. Thus we may now assume that (v) holds, and therefore we can conclude that
$\frac{1}{2}\cdot\frac{\log\log (n)}{c_L}>\left (1+\frac{\log (n)}{n^{1\slash (d_L\cdot h)}}\right )^{d_L\cdot h^3}$.
%
Then the left hand side in (\ref{padding_eq_zzzz}) is upper bounded by 
$(n^{h^2}-1)\cdot \left (1+\frac{\log (n)}{n^{1\slash (d_L\cdot h)}}\right )^{d_L\cdot h^3}$.
Now it suffices to choose $d'$ sufficiently large, namely such that $d'\geq h^2\cdot d_L$.
We have $\log (n)\leq \log (n^{h^2}-1)$, and our choice of $d'$ implies that 
\[d_L\cdot h^3\leq h\cdot d'\:\:\mbox{and}\:\: n^{\frac{1}{d_L\cdot h}}\:\geq\: (n^{h^2})^{\frac{1}{d'\cdot h}}\:\geq\: (n^{h^2}-1)^{\frac{1}{d'\cdot h}},\]
which then yields 
\[\left (1+\frac{\log (n)}{n^{1\slash (d_Lh)}}\right )^{d_L\cdot h^3}\:\leq\:
  \left (1+\frac{\log (n^{h^2}-1)}{(n^{h^2}-1)^{\frac{1}{d'\cdot h}}}\right )^{h\cdot d'}.\]
Thus we have shown the following: If we choose the parameters $c',d',p'$ sufficiently large such as to satisfy the conditions (i)-(v), 
then machine $S'$ satisfies Property $[\star ]$ with parameters $c',d',p'$. Recall that in the original list of $\Sigma_2$-machines, each machine occurs
infinitely often. In the construction of our subfamily $(\tilde{S}_{i,d})$, the associated parameters $c_{i,d}$ and $p_i$ are
monotone increasing in $i$ and $d$. Thus we choose $d'$ and $i$ sufficiently large such that $L(\tilde{S}_{i,d'})=L'$ and such that (i)-(v) are satisfied
for the parameters $c'=c_{i,d'},p'=p_i$ and $d'$. 

Hence we have shown that $L'\in\tilde{\Sigma}_2(F(n+1))$, which means $L'\in P$. Since $L'$ is a polynomially padded version of $L$, this also gives 
$L\in P=\tilde{\Sigma}_2(F(n+1))$. This implies that 
$L\in P=\mbox{DTIM}\tilde{\mbox{E}}(F)\subseteq \mbox{DTIM}\tilde{\mbox{E}}(F^{C^2})$, which concludes the proof of the lemma. 
\end{proof}
Now we will show that powers of the union function $F$ satisfy Property $[\star ]$. Below in the next section we will then show that Gupta's result 
also holds for complexity classes $\mbox{DTIM}\tilde{\mbox{E}}(t)$ and $\tilde{\Sigma}_2(t)$ where $t(n)\geq n\log^*(n)$ is a function that can be computed in time $t(n)^{1-\epsilon}$
and satisfies Property $[\star ]$. In particular this will then hold for the function $F(n)^{C^2}$.
\begin{lemma}\label{F_propertystar_lemma}
For every integer $q\geq 1$, there exist $c_q,d_q,p_q\in {\mathbb N}$ such that the function $F^q(n)$ satisfies \emph{Property $[\star ]$} with parameters 
$c_q,d_q,p_q$. 
\end{lemma}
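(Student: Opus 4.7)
The plan is to adapt the argument of Lemma \ref{ineq_lemma} to the function $F^q$. I would fix constants $c_q, d_q, p_q \in \mathbb{N}$ sufficiently large in terms of $q$, $c$, and $h$, pinning down the thresholds in the course of the proof. Assume $F(n)^q > an^b$ with $n \geq 2^{c_q^2}$ and $c_q \leq a \leq b \leq \log(n)/c_q$, and let $(\tilde{S}_{(i)}, \beta)$ or $(\tilde{S}_{(i)}, \beta - p_{(i)})$ be the guess selected at stage $n$, so $F(n) \leq n^\beta$. The hypothesis yields $\beta \geq b/q$ (up to an additive correction absorbed by $c_q$), and the violation at stage $n$ means that the running time of $\tilde{S}_{(i)}$ at input length $n$ or $n-1$ exceeds $\beta \cdot n^\beta$ or $\beta \cdot (n-1)^\beta$.

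I would next apply Property $[\star]$ for $\mbox{time}_{\tilde{S}_{(i)}}$ from Lemma \ref{progsys_lemma}(b). Its hypothesis $c_{(i)} \leq \beta \leq \log(n)/c_{(i)}$ is enforced by taking $c_q$ large enough that $c_{(i)} \leq i \leq \log^*(n)$ (from the enumeration) and $\beta \geq b/q \geq c_q/q$ combine to give $c_{(i)} \leq \beta$. Property $[\star]$ then produces $\lceil \log\log(n)/c_{(i)} \rceil$ pairwise distinct non-$h$-power integers $m_l \in I_{n, d_{(i)}}$ with $\mbox{time}_{\tilde{S}_{(i)}}(m_l) > (\beta - p_{(i)}) m_l^{\beta - p_{(i)}}$; taking $d_q \geq d_{(i)}$ places these $m_l$'s in $I_{n, d_q}$, which is the interval required for Property $[\star]$ of $F^q$.

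The counting step is modeled on Lemma \ref{ineq_lemma}. Since each $m_l$ is not an $h$-power, $F(m_l)$ is determined from the extended list $\mathcal{L}_{m_l} \cup \mathcal{L}'_{m_l}$, whose size is at most $4\log^*(n)$. At each $m_l$ the guess $(\tilde{S}_{(i)}, \beta - p_{(i)}) \in \mathcal{L}'$ is violated; any guess that might preempt it, once selected, has its $\mathcal{L}$-counterpart replaced by $\log^*(m_l) \geq \beta$, so it can interfere at most once. Since $\lceil \log\log(n)/c_{(i)} \rceil$ exceeds $4\log^*(n)$ by a wide margin (by the remark after Lemma \ref{sig1lemma}), at least $\lceil \log\log(n)/c_q \rceil$ of the $m_l$'s are stages where $(\tilde{S}_{(i)}, \beta - p_{(i)})$ itself is selected, giving $F(m_l) \geq (m_l-1)^{\beta - p_{(i)}}$.

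The main obstacle will be the choice of $p_q$, which must satisfy $q(\beta - p_{(i)}) \geq b - p_q$ so that raising to the $q$-th power yields $F(m_l)^q > (a - p_q) m_l^{b - p_q}$. Using $\beta \geq b/q$ this reduces to $p_q \geq q p_{(i)} + O(1)$, and while $p_{(i)}$ is not a priori bounded, the enumeration bound $p_{(i)} \leq i^{1/c}/2$ (coming from $c_{i', d} \geq (i')^c$) together with $i \leq c_{(i)}^{1/c} \leq \beta^{1/c}$ and the freedom to take the constant $c$ of Lemma \ref{sig1lemma} arbitrarily large makes $p_{(i)}$ uniformly small in the non-trivial regime $a > p_q$ (the case $a \leq p_q$ being vacuous since then $a - p_q \leq 0$). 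Once $c_q, d_q, p_q$ are so chosen, Property $[\star]$ for $F^q$ with parameters $c_q, d_q, p_q$ follows.
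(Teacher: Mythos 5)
Your proposal takes a genuinely different route from the paper, and it has two gaps that I believe are fatal.

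First, the parameters cannot be chosen uniformly. Your argument conditions on which guess $(\tilde{S}_{(i)},\beta)$ is selected at stage $n$ and then invokes Property $[\star]$ for that machine, which forces $d_q\geq d_{(i)}$ and $p_q\geq q\,p_{(i)}+O(1)$. But the machine selected at stage $n$ ranges over all of $\tilde{S}_{(1)},\ldots,\tilde{S}_{(\log^*n)}$ as $n$ grows, and the associated parameters $d_{(i)}$ and $p_{(i)}=\lceil i'/2\rceil$ are unbounded along this sequence (the enumeration only guarantees $i'\leq j^{1/c}$ for $\tilde{S}_{(j)}=\tilde{S}_{i',d}$, and $j^{1/c}\to\infty$; enlarging $c$ does not change this). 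Since Property $[\star]$ for $F^q$ demands a single triple $(c_q,d_q,p_q)$ valid for all $n$, no such choice exists under your scheme. Second, the counting step is broken: once the guess $(\tilde{S}_{(i)},\beta-p_{(i)})$ is selected from ${\mathcal L}'_j$ it is \emph{removed} from that list (and its ${\mathcal L}_j$-counterpart is bumped to $\log^*(m_l)$), so it can be selected at most once per sublist per interval $I_t$ --- not at $\lceil\log\log(n)/c_q\rceil$ many of the stages $m_l$. At the remaining $m_l$'s some other, possibly much smaller, guess determines $F(m_l)$, and you obtain no lower bound there. There is also a causality issue: the value $\beta$ held at stage $n$ need not have been the value held at stages $m_l\approx n^{1/h}$ (the guess may have been updated in between), so the guess $(\tilde{S}_{(i)},\beta-p_{(i)})$ need not even be present in ${\mathcal L}'_{m_l}$.

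The paper avoids all of this by not tracking the selected guess at all. It uses the construction invariant that within each block $I_t$ there are at most $4t$ stages $m$ with $F(m)<m^{\log^*(m)}$, shows $I_{n,d_q}\subseteq I_{t-1}\cup I_t$ for $t=\log^*(n)$ so that one of these blocks contains at least half of $I_{n,d_q}$, and observes that $F^q(n)>an^b$ together with $F(n)\leq n^{\log^*(n)}$ forces $b\leq q\log^*(n)$. Hence at the $\geq |I_{n,d_q}|/2-4\tau$ "generic" stages $m$ with $F(m)=m^{\tau}$, $\tau\in\{t-1,t\}$, one gets $F(m)^q=m^{q\tau}\geq m^{q(\log^*(n)+1)-p_q}>(a-p_q)m^{b-p_q}$ as soon as $p_q>q$; the only remaining work is the elementary size estimate for $I_{n,d_q}$ (fixing $d_q$) and noting that not being an $h$-power excludes at most half the interval. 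You should restructure your proof along these lines.
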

\begin{proof}
Suppose that we have already chosen the parameters $c_q,d_q$ and $p_q$, and suppose that $c_q$ is sufficiently large compared to $d_q$ that
$n\geq 2^{c_q^2}$ implies $\log n\leq n^{1\slash (d_q\cdot h)}$. Now suppose that
$n\geq 2^{c_q^2}$ and $c_q\leq a\leq b\leq \frac{\log (n)}{c_q}$ are such that $F^q(n)>an^b$. Intuitively, we have to show that there are sufficiently many 
integers $m$ in the interval $I_{n,d_q}$ such that $F(m)$ is sufficiently large. In the following we will show that this holds true. The reason is that
from the construction 
of the union function $F$ it follows directly that for most integers $m$, $F(m)$ is equal to $m^{\log^*(m)}$, and this will turn out to be sufficient in order to 
satisfy the implication in Property $[\star ]$. 

Let us now give the details. Since $F(n)\leq n^{\log^*(n)}$, this implies
$c_q\leq a\leq b\leq q\cdot\log^*(n)$. We consider the associated interval 
\[I_{n,d_q}\: =\: \left (n^{1\slash h},\: n^{1\slash h}\cdot \left (1+\frac{\log (n)}{n^{1\slash (d_q\cdot h)}}\right )^{d_q}\right ).\]
Let $t=\log^*(n)$. Recall that $I_t=[\delta_{t-1}+1,\delta_t]$ is the interval on which the $\log^*$ function is equal to $t$.
We show that $I_{n,d_q}\subseteq I_t\cup I_{t-1}$. Since $I_{t-1}=[\delta_{t-2}+1,\delta_{t-1}]$, it suffices to show that 
$n^{1\slash h}\geq\delta_{t-2}$. Since $t=\log^*n$, we have $n\geq\delta_{t-1}=2^{\delta_{t-2}}$. Now since $n\geq 2^{c_q^2}$, we have 
that $\log n\leq n^{1\slash (d_q\cdot h)}\leq n^{1\slash h}$. We conclude that 
\[\delta_{t-2}=\log (\delta_{t-1})\leq\log (n)\leq n^{1\slash h}.\]
Thus the inclusion  $I_{n,d_q}\subseteq I_t\cup I_{t-1}$ holds. 
Therefore, one of the two intervals $I_t$ and $I_{t-1}$ contains at least half of the 
elements from $I_{n,d_q}$:
\[\exists\tau\in\{t,t-1\}\:\: |I_{n,d_q}\cap I_{\tau}|\:\geq\:\frac{|I_{n,d_q}|}{2}\]
We consider this $\tau$ and the construction of the union function $F$ within the interval $I_{\tau}$. 
In the interval $I_{\tau}$, the list ${\mathcal L}={\mathcal L}_1\cup {\mathcal L}_2$ contains guesses for the first $\tau$ machines.
From the construction of $F$ it follows that for each of the two sublists ${\mathcal L}_1,\: {\mathcal L}_2$, each machine 
$\tilde{S}_{(i)}$ can be selected at most once within a guess $(\tilde{S}_{(i)},b_i-p_{(i)})$ from the extended sublist. Moreover, whenever 
$\tilde{S}_{(i)}$ is selected from a sublist ${\mathcal L}_j,j\in\{1,2\}$ within a guess $(\tilde{S}_{(i)},b_i)$, then afterwards the guess 
will be replaced by $(\tilde{S}_{(i)},\tau)$ in that sublist. 
Hence there are at most $4\cdot\tau$ stages $m$ within the interval $I_{\tau}$ such that $F(m)<m^{\tau}$.
Thus we obtain
\[|\{m\in I_{n,d_q}\cap I_{\tau }\mid F(m)=m^{\tau}\}|\:\geq\: |I_{n,d_q}\cap I_{\tau}|-4\cdot\tau.\] 
We want to show that Property $[\star ]$ holds for $F^q$, i.e. that there exist $m_1,\ldots ,m_{\left\lceil\frac{\log\log (n)}{c_q}\right\rceil}$ in $I_{n,d_q}$
not being $h$-powers such that $F(m_l)>(a-p_q)\cdot m_l^{b-p_q},l=1,\ldots  \left\lceil\frac{\log\log (n)}{c_q}\right\rceil$.
Thus it suffices to show that the following two properties hold:
\begin{itemize}
\item[(i)] $\frac{|I_{n,d_q}|}{2}-4\cdot\tau\geq\left\lceil\frac{\log\log (n)}{c_q}\right\rceil$, or equivalently 
          $\frac{|I_{n,d_q}|}{2}\geq\left\lceil\frac{\log\log (n)}{c_q}\right\rceil+4\cdot\tau$.
\item[(ii)] For all $m\in I_{n,d_q}\cap I_{\tau}$, $m^{q\cdot \tau}>(a-p_q)m^{b-p_q}$.
\end{itemize}
\emph{Concerning Property (i).} Directly from the definition of the interval $I_{n,d_q}$ we get that this inequality is equivalent to
\begin{equation}\label{eqn_xxx}
\frac{1}{2}\cdot n^{1\slash h}\cdot \left (\left (1+\frac{\log (n)}{n^{1\slash (d_q\cdot h)}}\right )^{d_q}-1 \right )
\geq\left\lceil\frac{\log\log (n)}{c_q}\right\rceil +4\cdot\tau
\end{equation}
Now we will make use of the following inequality: 
\[\left (1+\frac{\log (n)}{n^{1\slash (d_q\cdot h)}}\right )^{d_q}\:\geq\: 1+d_q\cdot\left (\frac{\log (n)}{n^{1\slash (d_q\cdot h)}}\right )^{d_q}.\]
This implies that the left hand side in (\ref{eqn_xxx}) is 
$\geq \frac{1}{2}\cdot n^{1\slash h}\cdot d_q\cdot \left (\frac{\log (n)}{n^{1\slash (d_q\cdot h)}}\right )^{d_q}$.
Hence the logarithm of the left hand side in (\ref{eqn_xxx}) is 
\begin{eqnarray*}
 & \geq & \frac{1}{h}\log (n)-\log (2)+\log (d_q)+d_q\cdot\left (\log\log n\: -\frac{1}{d_q\cdot h}\cdot\log (n)\right )\\
 & =    & \log d_q+d_q\log\log n -1.
\end{eqnarray*}
The logarithm of the right hand side in (\ref{eqn_xxx}) is 
\begin{eqnarray*}
  \log\left (\left\lceil\frac{\log\log (n)}{c_q}\right\rceil+4\tau\right ) & \leq &  \log\left (\frac{\log\log (n)}{c_q}+4\tau+1\right )\\
  & = & \log\left (\frac{\log\log (n)}{c_q}\cdot \left (1+\frac{(4\tau +1)c_q}{\log\log (n)}\right )\right )\\
  & = & \log\log\log n-\log c_q+\log \left (1+\frac{(4\tau +1) c_q}{\log\log n}\right )\\
  & \leq & \log\log\log (n)-\log (c_q) +\log (1+4c_q)\quad\mbox{(since $\tau\leq\log^*(n)$)}\\
  & \leq & \log\log\log (n)-\log (c_q) +\log (5c_q) = \log\log\log (n)+\log (5).
\end{eqnarray*}
Thus it suffices to choose $d_q$ sufficiently large such that $\log (d_q)-1\geq\log (5)$, and then Property (i) holds.\\[0.4ex]
\emph{Concerning Property (ii).} From the construction of the union function $F$ it follows that $F(n)\leq n^{\log^*(n)}$.
Since $n^{q\cdot\log^*(n)}\geq F(n)^q>an^b$ and $a\leq b$, we have $a\leq b\leq q\cdot\log^*n\leq q\cdot (\tau +1)$. Thus it suffices to choose 
the parameter $p_q$ sufficiently large such that 
$q\cdot\tau > q\cdot (\tau +1)-p_q$, i.e. $p_q>q$. It follows that Property (ii) holds as well. This concludes the proof of the Lemma.
\end{proof}
\section{A Separation Result for $\tilde{\Sigma}_2(t)$}\label{gupta_section}
In this section we show that the separation between deterministic and $\Sigma_2$ classes from \cite{G96} also holds for $\tilde{\Sigma}_2$-classes.
In the introduction we already formulated Gupta's separation result in Theorem \ref{gupta_thm}: For every time-constructible function $t(n)\geq n\log^*(n)$,
$\mbox{DTIME}(t)\subsetneq\Sigma_2(t)$. In the proof of this result in \cite{G96}, Gupta states that if $t(n)$ is a time-constructible function, then the function
$\max\{2n,t(n)\slash\sqrt{\log^*(t(n))}\}$ is also time-constructible. For a proof of this, he refers to a paper by Kobayashi \cite{K85}. Unfortunately, we have not been able to 
verify that it is shown in or follows easily from results in \cite{K85} that for each time-constructible function $t(n)$, the function 
$t(n)\slash\sqrt{\log^*(t(n))}$ is also time-constructible. However, this problem does not occur in case if $t(n)$ is constructible in time $t(n)^{1-\epsilon}$
for some $\epsilon >0$. In our case, we know that the union function $F$ is deterministically computable in time $F(n)^C$ for some constant $C$. Furthermore we know that
for every $q\geq C$, the function $F^q$ has Property $[\star ]$.   
\begin{lemma}\label{eqlemma}
For every function $t(n)\geq n\cdot\log^*(n)$ which is deterministically computable in time $t(n)^{1-\epsilon}$ for some constant $\epsilon >0$, if 
$t(n)$ has Property $[\star ]$, then $\Sigma_2(t) = \tilde{\Sigma}_2(t)$.
\end{lemma}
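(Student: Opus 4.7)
The easy inclusion $\tilde{\Sigma}_2(t)\subseteq\Sigma_2(t)$ is immediate: every machine $\tilde{S}_{i,d}$ is by construction a $\Sigma_2$-machine and hence equals some $S_{k(i,d)}$ in the standard indexing with identical running time, so $L(\tilde{S}_{i,d}) = L(S_{k(i,d)})$ and $\mbox{time}_{S_{k(i,d)}}(n) = O(t(n))$.

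For the converse, fix $L\in\Sigma_2(t)$ recognized by some $S_i$ with $\mbox{time}_{S_i}(n)\leq\alpha\cdot t(n)$ for a constant $\alpha$. The plan is to package $S_i$ into a single ``clocked'' $\Sigma_2$-machine whose running time equals $t$ exactly, and then feed this machine through Lemma \ref{progsys_lemma}. Concretely, using the hypothesis that $t$ is deterministically computable in time $t(n)^{1-\epsilon}$, I would build $S^*$ so that on input $x$ of length $n$ it first computes the value $t(n)$ on a dedicated tape in $o(t(n))$ steps, then interleaves a step-counter initialized to $t(n)$ with a step-by-step simulation of $S_i$ on $x$, propagating the simulation's verdict the moment $S_i$ halts and otherwise running a dummy loop until the counter hits zero. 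Since $\mbox{time}_{S_i}(n)\leq\alpha\cdot t(n)$ and $t(n)\geq n\log^*(n)$, the simulation always terminates inside the budget and $L(S^*)=L$, while the counter forces every computation path of $S^*$ on every input of length $n$ to consume exactly $t(n)$ steps. Hence $\mbox{time}_{S^*}(n)=t(n)$ for all sufficiently large $n$.

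Because $\mbox{time}_{S^*}$ coincides with $t$ almost everywhere, it inherits Property $[\star]$ from $t$, say with parameters $c_t,d_t,p_t$. The second observation is that Property $[\star]$ is monotone in its three parameters: enlarging $c_g$ restricts the premise and lowers the required witness count $\lceil\log\log n/c_g\rceil$; enlarging $d_g$ enlarges the interval $I_{n,d_g}$; and enlarging $p_g$ weakens the witness condition $g(m_l)>(a-p_g)m_l^{b-p_g}$. Since every $\Sigma_2$-machine occurs with arbitrarily large index in the standard indexing, I choose an index $j$ with $S_j=S^*$ and a parameter $d$ large enough that $d\geq d_t$, $p_j=\lceil j/2\rceil\geq p_t$, and $c_{j,d}=2^{dh}(dh)^3\cdot 2^c\cdot j^c\geq c_t$. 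Then $\mbox{time}_{S_j}$ satisfies Property $[\star]$ with parameters $c_{j,d},d,p_j$, so Lemma \ref{progsys_lemma}(c) yields $L(\tilde{S}_{j,d})=L(S_j)=L$, and Lemma \ref{progsys_lemma}(a) gives $\mbox{time}_{\tilde{S}_{j,d}}(n)\leq\mbox{time}_{S_j}(n)=t(n)$, whence $L\in\tilde{\Sigma}_2(t)$.

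The delicate point I expect to be the main obstacle is the exactness of the clocking: one has to produce $S^*$ with $\mbox{time}_{S^*}(n)$ literally equal to $t(n)$ rather than merely $\Theta(t(n))$, because a running time of the form $t(n)+r(n)$ does not a priori inherit Property $[\star]$ from $t(n)$ without a controlled additive slack in the $(c,d,p)$-parameters. The exact-countdown construction sidesteps this by halting every path precisely when the counter reaches zero; the combined hypotheses $t(n)\geq n\log^*(n)$ and computability of $t$ in time $t(n)^{1-\epsilon}$ ensure that the overhead of computing $t(n)$, running the counter, and padding with dummy steps all fit strictly inside the budget $t(n)$, so the construction goes through.
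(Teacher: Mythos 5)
Your overall strategy is the paper's: wrap $S_i$ in a clocked $\Sigma_2$-machine with a prescribed running time, check that this running time inherits Property $[\star]$ from $t$, and then invoke Lemma \ref{progsys_lemma}(c) with an index $j$ and parameter $d$ chosen large enough. The monotonicity observations about the parameters $c_g,d_g,p_g$ are also correct. But there is a genuine gap at the one point you single out as the "delicate point": you clock $S^*$ to run \emph{exactly} $t(n)$ steps while simulating a machine $S_i$ whose running time is only bounded by $\alpha\cdot t(n)$. For $\alpha>1$ (and, on top of that, with any constant-factor overhead for the step-by-step simulation and the counter bookkeeping), the simulation of $S_i$ does \emph{not} terminate inside the budget $t(n)$, so the countdown cuts it off and $L(S^*)=L$ fails. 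The sentence "the simulation always terminates inside the budget" does not follow from $\mbox{time}_{S_i}(n)\leq\alpha t(n)$ and $t(n)\geq n\log^*(n)$; linear speedup is not available here either, since the machines are fixed to the binary alphabet and you do not invoke it.

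The fix is exactly the step you were trying to sidestep, and it is how the paper proceeds: clock the new machine at precisely $\gamma\cdot t(n)$ steps for the constant $\gamma$ with $\mbox{time}_{S_i}(n)\leq\gamma t(n)$ (taking $\gamma$ large enough to also absorb the simulation overhead), and then prove directly that $\gamma\cdot t$ satisfies Property $[\star]$ with \emph{rescaled} parameters. Concretely, if $\gamma t(n)>an^b$ then $t(n)>(a/\gamma)n^b$; choosing $j$ so large that $c_{j,d}\geq\gamma c_t$ and $p_j\geq\gamma p_t$ guarantees $c_t\leq a/\gamma\leq b\leq\log(n)/c_t$, so $[\star]$ for $t$ yields non-$h$-powers $m_l\in I_{n,d_t}$ with $t(m_l)>(a/\gamma-p_t)m_l^{b-p_t}$, whence $\gamma t(m_l)>(a-\gamma p_t)m_l^{b-p_t}\geq(a-p_j)m_l^{b-p_j}$, and the witness count $\lceil\log\log n/c_t\rceil\geq\lceil\log\log n/c_{j,d}\rceil$ suffices. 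With this replacement (and $d=d_t$), the rest of your argument goes through; without it, the construction of $S^*$ does not recognize $L$.
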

\begin{proof}
Obviously, $\tilde{\Sigma}_2(t)\subseteq\Sigma_2(t)$. Hence it is sufficient to show the inclusion $\Sigma_2(t)\subseteq \tilde{\Sigma}_2(t)$. 
Suppose that $L\in\Sigma_2(t)$, say via some
$\Sigma_2$-machine $S_i$. We have to show that $L\in \tilde{\Sigma}_2(t)$. 
According to Lemma \ref{progsys_lemma}(c), it suffices to construct some
$\Sigma_2$-machine $S_j$ and some $d\in {\mathbb N}$ such that $L(S_j)=L$, $\mbox{time}_{S_j}(n)=O(\mbox{time}_{S_i}(n))$ and such that the function 
$\mbox{time}_{S_j}(n)$ satisfies Property $[\star ]$ with parameters $c_{j,d},p_j,d$.
Suppose that $\mbox{time}_{S_i}(n)\leq\gamma\cdot t(n)$ for all $n$, for some constant $\gamma$.
Consider the $\Sigma_2$-machine $S_j$ which simulates on input $x$ the computation of $S_i(x)$ and in parallel computes the function value $t(|x|)$
and uses precisely $\gamma\cdot t(|x|)$ steps in total.
We claim that for $j$ being sufficiently large, there exists some $d\in {\mathbb N}$ such that $\mbox{time}_{S_j}(n)$ satisfies Property $[\star ]$ with parameters
$c_{j,d},p_j,d$. 
Suppose that $c_t,p_t,d_t\in {\mathbb N}$ are such that 
the function $t(n)$ satisfies Property $[\star ]$ with parameters $c_t,p_t,d_t$. We choose $d=d_t$.
Since each $\Sigma_2$-machine occurs infinitely often in the
family $(S_i)$, we may assume that $j$ is sufficiently large such that $c_t\cdot\gamma\leq c_{j,d}$ and $p_t\cdot\gamma\leq p_j=\lceil\frac{j}{2}\rceil$. 
Since $d=d_t$, we have $I_{n,d}=I_{n,d_t}$.
Now suppose that $n\geq 2^{c_{j,d}}$ and $c_{j,d}\leq a\leq b\leq\frac{\log (n)}{c_{j,d}}$ are such that $\mbox{time}_{S_j}(n)=\gamma\cdot t(n)>an^b$.
From $c_{j,d}\geq \gamma\cdot c_t$ we conclude that $c_t\leq a\slash\gamma\leq b\leq\frac{\log (n)}{c_t}$. Thus since $t$ satisfies Property $[\star ]$ with parameters
$c_t,p_t,d_t$, there exist non-$h$-powers 
\[m_1,\ldots , m_{\lceil\frac{\log\log n}{c_t}\rceil}\in I_{n,d_t}=I_{n,d}\]
with $t(m_l)>(a\slash\gamma -p_t)m_l^{b-p_t},l=1,\ldots \lceil\frac{\log\log n}{c_t}\rceil$. Since $\mbox{time}_{S_j}=\gamma\cdot t$, this yields
$\mbox{time}_{S_j}(m_l)>(a-\gamma\cdot p_t)m_l^{b-p_t}$ for those $l$. Since $p_j\geq \gamma\cdot p_t$, we have
$a -\gamma p_t\geq a-p_j$ and $b-p_t\geq b-p_j$. Finally, from $c_{j,d}\geq \gamma\cdot c_t\geq c_t$ it follows that
$\lceil\frac{\log\log n}{c_t}\rceil\geq \lceil\frac{\log\log n}{c_{j,d}}\rceil$. Thus we have shown that the function $\mbox{time}_{S_j}(n)$
satisfies Property $[\star ]$ with parameters $c_{j,d},p_j,d$. This concludes the proof of the lemma. 
\end{proof}
\begin{lemma}\emph{(Simulation)}\\
For every function $t(n)\geq n\cdot\log^*(n)$ which is deterministically computable in time $t(n)^{1-\epsilon}$ for some fixed $\epsilon >0$
and has \emph{Property $[\star ]$},
$\mbox{DTIM}\tilde{\mbox{E}}(t\log^*t)\subseteq\tilde{\Sigma}_2(t)$.
\end{lemma}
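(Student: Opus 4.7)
The plan is to reduce the statement to the classical simulation result of Gupta together with Lemma \ref{eqlemma}. Concretely, given $L\in\mbox{DTIM}\tilde{\mbox{E}}(t\log^*t)$, we observe the trivial inclusion $\mbox{DTIM}\tilde{\mbox{E}}(t\log^*t)\subseteq\mbox{DTIME}(t\log^*t)$, since the family $(\tilde{S}_{i,d})$ is a subfamily of $(S_i)$. Hence it suffices to show $\mbox{DTIME}(t\log^*t)\subseteq\tilde{\Sigma}_2(t)$, and then the lemma follows.

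First I would establish $\mbox{DTIME}(t\log^*t)\subseteq\Sigma_2(t)$ under the assumptions of the lemma, by revisiting Gupta's proof of Theorem~\ref{gupta-theorem}. Gupta's argument requires the auxiliary function $s(n):=\max\{2n,\,t(n)/\sqrt{\log^*t(n)}\}$ to be time-constructible; he invokes Kobayashi \cite{K85} for this. As the author notes in the paragraph preceding the lemma, this step is exactly where time-constructibility of $t$ alone is not enough, but the stronger hypothesis that $t(n)$ can be computed deterministically in time $t(n)^{1-\epsilon}$ suffices: given this, we can compute $t(n)$ within time $O(t(n))$, then extract $\lfloor\log^*t(n)\rfloor$ in logarithmically many further steps, perform the (integer) square-root and division, and pad out to $s(n)$ steps, all within $O(t(n))$ time. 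Thus $s(n)$ is genuinely time-constructible, and the $\Sigma_2$-simulation from \cite{G96} (guess-and-verify of a block computation via existential then universal quantification over blocks of size $s(n)$) goes through verbatim, yielding $L\in\Sigma_2(t)$.

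Second, I would invoke Lemma \ref{eqlemma}: since $t(n)\geq n\log^*n$, is deterministically computable in time $t(n)^{1-\epsilon}$, and has Property $[\star]$, the hypotheses of Lemma \ref{eqlemma} are met, so $\Sigma_2(t)=\tilde{\Sigma}_2(t)$. Combining with the preceding step gives $L\in\tilde{\Sigma}_2(t)$, which is the conclusion.

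The main obstacle is the first step. I do not need to reproduce Gupta's block-decomposition argument in detail, but I do need to check carefully that \emph{every} use of time-constructibility in \cite{G96} (in particular the constructibility of $s(n)$ and the synchronisation of the universal and existential phases) is covered by the hypothesis ``computable in time $t(n)^{1-\epsilon}$''. The other steps, namely the trivial embedding $\mbox{DTIM}\tilde{\mbox{E}}\subseteq\mbox{DTIME}$ and the application of Lemma \ref{eqlemma}, are routine.
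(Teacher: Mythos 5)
Your proposal is correct and follows essentially the same route as the paper: the chain $\mbox{DTIM}\tilde{\mbox{E}}(t\log^*t)\subseteq\mbox{DTIME}(t\log^*t)\subseteq\Sigma_2(t)=\tilde{\Sigma}_2(t)$, with the middle inclusion from Gupta's simulation and the final equality from Lemma \ref{eqlemma}. Your additional care about the time-constructibility of $t(n)/\sqrt{\log^*t(n)}$ is exactly the point the paper itself raises in the discussion preceding Lemma \ref{eqlemma}, so it is a welcome but not divergent elaboration.
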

\begin{proof}
The proof consists of using the simulation result from \cite{G96} and the preceeding lemma. This gives
\[\mbox{DTIM}\tilde{\mbox{E}}(t\log^*(t))\subseteq \mbox{DTIME}(t\log^*(t))\subseteq \Sigma_2(t)=\tilde{\Sigma}_2(t).\]
\end{proof}
\begin{theorem}\label{separation_theorem}
For every function $t(n)\geq n\cdot\log^*(n)$ which is deterministically computable in time $t(n)^{1-\epsilon}$ and has \emph{Property $[\star ]$},
$\mbox{DTIM}\tilde{\mbox{E}}(t)\subsetneq\tilde{\Sigma}_2(t)$.
\end{theorem}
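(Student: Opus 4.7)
The plan is to chain together Lemma \ref{eqlemma}, Gupta's Theorem \ref{gupta_thm}, and the trivial inclusion $\mbox{DTIM}\tilde{\mbox{E}}(t) \subseteq \mbox{DTIME}(t)$. Under the standing hypotheses --- that $t(n) \geq n\log^* n$, that $t$ is deterministically computable in time $t(n)^{1-\epsilon}$, and that $t$ has Property $[\star]$ --- Lemma \ref{eqlemma} gives $\tilde{\Sigma}_2(t) = \Sigma_2(t)$, collapsing the restricted alternating class onto the full one.

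Next, I would check that the hypotheses of Gupta's original separation Theorem \ref{gupta_thm} are met. Time-constructibility of $t$ is immediate: since $t$ is computable in time $t(n)^{1-\epsilon} = O(t(n))$, it is in particular computable in time $O(t(n))$. This is precisely the technical loophole the author highlights in the paragraph introducing this section, and the $t(n)^{1-\epsilon}$ assumption is designed to bypass it. Since also $t(n) \geq n\log^* n$, Theorem \ref{gupta_thm} applies and yields $\mbox{DTIME}(t) \neq \Sigma_2(t)$; combined with the obvious containment $\mbox{DTIME}(t) \subseteq \Sigma_2(t)$, this gives $\mbox{DTIME}(t) \subsetneq \Sigma_2(t)$.

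Finally, the restricted deterministic class $\mbox{DTIM}\tilde{\mbox{E}}(t)$ is, by definition, recognized by machines $\tilde{S}_{i,d}$ with $S_i$ deterministic, and each such $\tilde{S}_{i,d}$ is a deterministic Turing machine in the standard sense, so trivially $\mbox{DTIM}\tilde{\mbox{E}}(t) \subseteq \mbox{DTIME}(t)$. Assembling everything gives the chain
\[\mbox{DTIM}\tilde{\mbox{E}}(t) \;\subseteq\; \mbox{DTIME}(t) \;\subsetneq\; \Sigma_2(t) \;=\; \tilde{\Sigma}_2(t),\]
which forces $\mbox{DTIM}\tilde{\mbox{E}}(t) \subsetneq \tilde{\Sigma}_2(t)$, as required.

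I do not foresee a genuine obstacle at this stage: all the real work is already packaged in Lemma \ref{eqlemma}, whose proof was the delicate part (it required transferring Property $[\star]$ from $t$ to the running time of the reconstructed machine $\tilde{S}_{j,d}$, with careful scaling of the parameters $c_{j,d}$, $p_j$, $d$). The preceding Simulation Lemma is not strictly needed for the chain above; it exists as an alternative route to the same conclusion --- one would assume equality, pad $t$, invoke $\mbox{DTIM}\tilde{\mbox{E}}(t\log^* t) \subseteq \tilde{\Sigma}_2(t)$, and contradict a deterministic time hierarchy within the subfamily $(\tilde{S}_{i,d})$ --- but the direct argument via the collapse $\tilde{\Sigma}_2(t) = \Sigma_2(t)$ is cleaner because it sidesteps the need to prove a hierarchy theorem for $\mbox{DTIM}\tilde{\mbox{E}}$.
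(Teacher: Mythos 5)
Your proposal is correct and follows essentially the same route as the paper's own proof: both combine Lemma \ref{eqlemma} ($\Sigma_2(t)=\tilde{\Sigma}_2(t)$), Gupta's separation $\mbox{DTIME}(t)\subsetneq\Sigma_2(t)$, and the trivial inclusion $\mbox{DTIM}\tilde{\mbox{E}}(t)\subseteq\mbox{DTIME}(t)$. The only cosmetic difference is that the paper phrases the argument as a proof by contradiction while you present the chain of inclusions directly; your added remark on why time-constructibility holds is a harmless elaboration.
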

\begin{proof}
Suppose $\mbox{DTIM}\tilde{\mbox{E}}(t)=\tilde{\Sigma}_2(t)$. Using Lemma \ref{eqlemma}, we have $\Sigma_2(t)=\tilde{\Sigma}_2(t)$. On the other hand, 
we have $\mbox{DTIM}\tilde{\mbox{E}}(t)\subseteq \mbox{DTIME}(t)\subsetneq\Sigma_2(t)$, where the last strict inclusion in this chain holds due to 
Gupta's result \cite{G96}. 
\end{proof}
\begin{corollary}
$P\neq \Sigma_2^p$, and therefore also $P\neq \Sigma_1^p$.
\end{corollary}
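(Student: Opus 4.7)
The plan is to prove the corollary by contradiction, combining all the machinery developed in the paper. Assume toward contradiction that $P=\Sigma_2^p$. Under this assumption, every lemma and construction from Sections \ref{implication_section}--\ref{gupta_section} is in force, so we have a concrete union function $F$ and a constant $C=10c<h/2$ at our disposal with very specific structural properties. The idea is to exhibit a single function $t(n)$ for which Lemma \ref{padding_lemma} forces $\mbox{DTIM}\tilde{\mbox{E}}(t)=\tilde{\Sigma}_2(t)$, while Theorem \ref{separation_theorem} simultaneously forces $\mbox{DTIM}\tilde{\mbox{E}}(t)\subsetneq\tilde{\Sigma}_2(t)$, yielding the desired contradiction.

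Concretely, I would take $t(n):=F(n)^{C^2}$ and verify both sides of the contradiction. For the equality direction: Lemma \ref{union_lemma} provides $\mbox{DTIM}\tilde{\mbox{E}}(F)=\tilde{\Sigma}_2(F)$, and then Lemma \ref{padding_lemma} promotes this to $\mbox{DTIM}\tilde{\mbox{E}}(F^{C^2})=\tilde{\Sigma}_2(F^{C^2})$. For the strict-inclusion direction, I would check the three hypotheses of Theorem \ref{separation_theorem} for $t(n)=F(n)^{C^2}$. First, $t(n)\geq n\log^*(n)$, because the proof of Lemma \ref{union_lemma} shows that $F$ eventually majorizes every polynomial (in particular $n\log^* n$), so even $F(n)$ alone satisfies this bound for sufficiently large $n$. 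Second, $t(n)$ is deterministically computable in time $t(n)^{1-\epsilon}$ for some $\epsilon>0$: by Lemma \ref{F_computation_lemma} we can compute $F(n)$ in time $F(n)^C$, and squaring/power computation on an integer of bit-length $O(\log F(n)^{C^2})$ takes only polylogarithmic overhead, so $F(n)^{C^2}$ is computable in time $F(n)^{C+o(1)}=t(n)^{1/C+o(1)}$, which is $t(n)^{1-\epsilon}$ for $\epsilon = 1 - 1/C - o(1) > 0$ since $C>1$. Third, $t(n)=F(n)^{C^2}$ satisfies Property $[\star]$ by Lemma \ref{F_propertystar_lemma} applied with $q=C^2$. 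Hence Theorem \ref{separation_theorem} yields $\mbox{DTIM}\tilde{\mbox{E}}(t)\subsetneq\tilde{\Sigma}_2(t)$, directly contradicting the equality obtained from the padding lemma. Therefore $P\neq\Sigma_2^p$.

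For the second statement $P\neq\Sigma_1^p$, I would invoke the Downward Separation property of the polynomial hierarchy, which was recalled in Section \ref{prelim_section}: if $\Sigma_k^p=\Sigma_{k+1}^p$ at any level $k$, then the hierarchy collapses to level $k$. Applied with $k=0$: if $P=\Sigma_1^p=NP$, then also $\Sigma_2^p=NP^{NP}=P^P=P$, i.e.\ $P=\Sigma_2^p$, contradicting what we just established. Hence $P\neq\Sigma_1^p$ as well.

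I do not expect any step to be a genuine obstacle, since all the hard work is packaged inside the preceding lemmas. The only minor verification worth being careful about is the second hypothesis of Theorem \ref{separation_theorem} for $t=F^{C^2}$: one must be explicit that computing $F(n)^{C^2}$ from $F(n)$ costs only $\mathrm{polylog}(F(n)^{C^2})$ extra steps, so that the total running time remains of the form $t(n)^{1-\epsilon}$ with a fixed $\epsilon>0$ (as indicated in the introduction, where the author already notes that $F(n)^{C^2}$ is computable in time $F(n)^C=t(n)^{1-\epsilon}$).
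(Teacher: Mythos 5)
Your proposal is correct and follows essentially the same route as the paper's own proof: assume $P=\Sigma_2^p$, set $t=F^{C^2}$, derive $\mbox{DTIM}\tilde{\mbox{E}}(t)=\tilde{\Sigma}_2(t)$ from Lemmas \ref{union_lemma} and \ref{padding_lemma}, verify the hypotheses of Theorem \ref{separation_theorem} (computability in time $t^{1-\epsilon}$ via Lemma \ref{F_computation_lemma}, Property $[\star]$ via Lemma \ref{F_propertystar_lemma}) to get the contradictory strict inclusion, and conclude $P\neq\Sigma_1^p$ by downward separation. No gaps.
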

\begin{proof}
Suppose $P=\Sigma_2^p$. Then we obtain $P=\tilde{P}=\mbox{DTIM}\tilde{\mbox{E}}(F)=\tilde{\Sigma}_2(F)=\tilde{\Sigma}_2^p=\Sigma_2^p$.
The function $F$ is deterministically computable in time $F(n)^C$ and satisfies $F(n)\geq n\cdot\log^*(n)$. 
Thus, the function $t(n):=F(n)^{C^2}$ is deterministically computable in time $t(n)^{1-\epsilon}$ for 
some $\epsilon >0$. Furthermore, according to Lemma \ref{F_propertystar_lemma}, 
$t(n)$ satisfies Property $[\star ]$. On the one hand, Lemma \ref{padding_lemma} yields that 
$\mbox{DTIM}\tilde{\mbox{E}}(t)=\tilde{\Sigma}_2(t)$. On the other hand, Theorem \ref{separation_theorem} yields 
$\mbox{DTIM}\tilde{\mbox{E}}(t)\subsetneq \tilde{\Sigma}_2(t)$, a contradiction. Therefore, $P=\Sigma_2^p$ does not hold. 
The second statement from the Corollary holds since the polynomial hierarchy is well known to provide downward separation.
\end{proof}
{\bf Discussion.} The method presented in this paper can also be used to consider the case of $\Sigma_1^p$ versus $\mbox{AP}$, where
$\mbox{AP}$ is the class of all problems solvable in alternating polynomial time, or equivalently in deterministic polynomial space.
The situation is now slightly different, and the approach needs to be adjusted accordingly. We assume $\Sigma_1^p=\mbox{AP}$
and let $(S_i)$ be a standard enumeration of alternating machines. Now the associated problem $\mbox{Check}$ - given a machine index $i$,
an input length $n$ and two integers $a,b$, is $\mbox{time}_{S_i}(n)>an^b$ ? - is only computable in nondeterministic intersect co-nondeterministic
time $c(ian^b)^c$ for some constant $c$. We proceed in the same way as before and first construct a subfamily of alternating machines
$\tilde{S}_{i,d}$  whose running time functions satisfy Property $[\star ]$. However, in order to obtain an analogue of Gupta's result \cite{G96},
we will now have to construct these machines in such a way that for every nondeterministic machine, the associated machines $\tilde{S}_{i,d}$ only make
a bounded number of nondeterministic computation steps. It turns out that the squareroot of the running time is an appropriate bound.
Then we construct the union function $F$ in the same way as before. This function is now computable in nondeterministic intersect co-nondeterministic time
$F^C$, for some constant $C$.
We obtain the following Separation Theorem: For every function $t\colon {\mathbb N}\to {\mathbb N}$
which is computable in $\mbox{NTIME}(t^{1-\epsilon})\cap\mbox{coNTIME}(t^{1-\epsilon})$ for some constant $\epsilon >0$ and satisfies Property $[\star ]$,
$\tilde{\Sigma}_1(t,\sqrt{t})\subsetneq\mbox{ATIM}\tilde{\mbox{E}}(t)$, where now $\tilde{\Sigma}_1(t,\sqrt{t})$ denotes the class of all decision problems
solvable nondeterminsitically by an $O(t)$ time bounded machine $\tilde{S}_{i,d}$ with the number of guesses being bounded by the squareroot of the running time.  
We obtain the desired contradiction, and thus the assumption $\Sigma_1^p=\mbox{AP}$ cannot 
hold. The details will be given in a subsequent paper.\\
$\mbox{ }$\\
$\mbox{ }$\\
{\bf Acknowledgement.} I would like to thank Norbert Blum for carefully reading preliminary versions of the paper, for helpful remarks and discussions,
for his guidance and patience and for being 
my mentor.


\end{document}